\newcommand{\Figwidth}{\columnwidth}%
\def\twocolbreak{\nonumber\\ &}%
\newcommand{\Figwidth}{4.5in}%
\def\twocolbreak{}%
\begin{document}
\title{Coherence Disparity in Broadcast and Multiple Access Channels}

\author{Mohamed Fadel, {\em Student Member, IEEE}, and Aria Nosratinia, {\em Fellow, IEEE}
\thanks{The authors are with the Department of Electrical Engineering, University of Texas at
    Dallas, Richardson, TX 75083-0688 USA, E-mail:
    mohamed.fadel@utdallas.edu;aria@utdallas.edu.}
\thanks{This work was supported in part by the grants CIF1219065 and CIF1527598 from the National Science Foundation.}
}
\maketitle

\newtheorem{theorem}{Theorem}
\newtheorem{lemma}{Lemma}
\newtheorem{remark}{Remark}

\def\A{A}
\def\B{\mathbf B}
\def\D{\mathbf D}
\def\Dset{\mathbb D}
\def\E{E}
\def\d{d}
\def\G{\mathbf G}
\def\g{\mathbf g}
\def\Gset{\mathcal G}
\def\H{\mathbf H}
\def\h{\mathbf h}
\def\Hset{\mathbb H}
\def\I{\mathbf I}
\def\J{J}
\def\Jset{\mathbb J}
\def\K{K}
\def\l{\ell}
\def\N{N}
\def\M{M}
\def\P{\mathbf P}
\def\Q{Q}
\def\Qset{\mathbb Q}
\def\R{R}
\def\q{q}
\def\r{r}
\def\S{\mathcal{S}}
\def\T{T}
\def\u{\mathbf u}
\def\U{\mathbf U}
\def\Uset{\mathbb U}
\def\v{\mathbf v}
\def\V{\mathbf V}
\def\W{\mathbf W}
\def\w{\mathbf w}
\def\X{\mathbf X}
\def\x{\mathbf x}
\def\Y{\mathbf Y}
\def\y{\mathbf y}
\def\Z{\mathbf Z}
\def\z{\mathbf z}
\def\bigO{O}
\def\littleO{o}
\def\ZeroMat{\mathbf{0}}


\begin{abstract}
Individual links in a wireless network may experience unequal fading
coherence times due to differences in mobility or scattering
environment, a practical scenario where the fundamental limits of
communication have been mostly unknown. This paper studies broadcast
and multiple access channels where multiple receivers experience
unequal fading block lengths, and channel state information (CSI) is
not available at the transmitter(s), or for free at any receiver. In
other words, the cost of acquiring CSI at the receiver is fully accounted for in the degrees of
freedom. In the broadcast channel, the method of product superposition
is employed to find the achievable degrees of freedom. We start with
unequal coherence intervals with integer ratios. As long as the
coherence time is at least twice the number of transmit and receive
antennas, these degrees of freedom meet the upper bound in four cases:
when the transmitter has fewer antennas than the receivers, when all
receivers have the same number of antennas, when the coherence time of
one receiver is much shorter than all others, or when all receivers
have identical block fading intervals. The degrees of freedom region
of the broadcast under identical coherence times was also previously
unknown and is settled by the results of this paper. The disparity of
coherence times leads to gains that are distinct from those arising
from other techniques such as spatial multiplexing or multi-user
diversity; this class of gains is denoted {\em coherence
  diversity}. The inner bounds are further extended to the case of
multiple receivers experiencing fading block lengths of arbitrary
ratio or alignment. Also, in the multiple access channel with unequal
coherence times, achievable and outer bounds on the degrees of freedom
are obtained.
\end{abstract}

\begin{keywords}
Broadcast channel, Blind interference alignment, Channel state information, Coherence diversity, Degrees of freedom, Multiple access channel, Non-coherent communication.
\end{keywords}
\IEEEpeerreviewmaketitle

\section{Introduction}

In a wireless network, variations in node mobility and scattering environment may easily produce unequal link coherence times. But the performance limits of wireless networks under unequal link coherence times has been for the most part an open problem. 

Even under identical coherence times, understanding the performance limits of many wireless networks under block fading or related models has been far from trivial, with some key results under identical fading intervals being discovered only very recently.
For a two-receiver MISO broadcast channel with receive-side channel state information (CSIR) and finite precision transmit-side channel state information (CSIT), Lapidoth {\em et al.}~\cite{Lapidoth_capacity} conjectured that the degrees of freedom will collapse to unity under (non-singular) correlated fading. Tandon {\em et al.}~\cite{Tandon_synergistic} considered the broadcast channel with heterogeneous CSIT, i.e., the CSIT with respect to different links may be perfect, delayed, or non-existent. In this case,~\cite{Tandon_synergistic} conjectured a collapse of degrees of freedom for a two-receiver broadcast as long as CSIT with respect to one link is missing. The conjectures of Lapidoth {\em et al.} and Tandon {\em et al.} were settled in the positive by Davoodi and Jafar~\cite{Davoodi_aligned}, using the idea of aligned image sets~\cite{Korner_images}. Furthermore, Mohanty and Varanasi~\cite{Mohanty_degrees} developed an outer bound for a $\K$-receiver MISO broadcast channel  where there is CSIT with respect to some link gains and delayed CSIT with respect to other link gains. For the 3-receiver case, when there is perfect CSIT for one receiver and delayed CSIT for the other two,  a transmission scheme achieving $\frac{5}{3}$ sum degrees of freedom was found. For the same system, Amuru {\em et al.}~\cite{Amuru_degrees} proposed a transmission scheme that achieves $\frac{9}{5}$ sum degrees of freedom. A broadcast channel with delayed CSIT was studied by Maddah-Ali and Tse~\cite{Maddah_completely} and Vaze and Varanasi~\cite{Vaze_degrees_2}, demonstrating that even completely outdated channel feedback is still useful. A scenario of mixed CSIT (imperfect instantaneous and perfect delayed) was considered in~\cite{Chen_imperfect,Chen_degrees,Yang_degrees,Gou_optimal,Yi_degrees,Kerret_degrees,Chen_toward}. 

Huang {\em et al.}~\cite{Huang_degrees} studied a two-receiver broadcast channel with CSIR  but no CSIT under i.i.d. fast fading (all the receivers have coherence time of length 1), showing TDMA is degrees of freedom optimal. Vaze and Varanasi~\cite{Vaze_degree} extended this result to multiple receivers and to a wider class of fading distributions and fading dynamics (not including block fading). The results of~\cite{Huang_degrees,Vaze_degree} were based on the notion of {\em stochastic equivalence} of links with respect to the transmitter, an idea previously appearing in~\cite{Lapidoth_capacity,Jafar_isotropic}. 

For the broadcast channel, a summary of the results of this paper is as follows. We begin by settling the open problem of the degrees of freedom of the multi-receiver block-fading channel with identical fading intervals. We show that with CSIR but no CSIT, the degrees of freedom is limited to TDMA. In the absence of CSIR (and CSIT) we show that once again the degrees of freedom cannot be improved beyond TDMA. 

We then proceed to address {\em unequal} fading intervals, where the perspective for the availability and the cost of CSI is quite distinct from the case of equal fading intervals. Specifically, the normalized per-transmission cost of acquiring CSIR, e.g. via pilots~\cite{Hassibi_how}, is closely related to the block length, therefore the normalized cost of CSIR for links with unequal coherence times may vary widely. It follows that when fading intervals are unequal, any assumption of free CSIR may obscure important features of the problem. Therefore we adopt a model {\em without} free availability of CSIR, i.e., one where the cost of CSIR must be accounted for. For achievable degrees of freedom of the multi-receiver broadcast channel, we propose a generalization of the method of product superposition\footnote{Li and Nosratinia~\cite{Li_product, Li_coherent} introduced this method for the special case of a two-receiver broadcast channel where one receiver has a very long coherence time compared with the other.} to multiple receivers with coherence times of arbitrary integer ratios, and without free CSIR. Also, we do not assume availability of CSIT. This achievable rate is obtained by transmitting a pilot whenever one or more receivers experience a fading transition, and then during each pilot transmission exactly one (other) receiver who does not need the pilot can simultaneously utilize the channel for data transmission without contaminating the pilot. This leads to degrees of freedom gains that are directly tied to the disparity of coherence times, and are therefore called {\em coherence diversity}.

When the coherence time is at least twice the number of transmit and receive antennas, the obtained degrees of freedom are shown to meet the upper bound in four cases: When the number of transmit antennas is less than or equal to the number of antennas at every receiver, when all the receivers have the same number of antennas, when the coherence times of the receivers are very long compared to one receiver, or when all the receivers have identical coherence times. The development of outer bounds for this problem makes use of the idea of channel enhancement~\cite{Weingarten_capacity}, which in our case consists of increasing the coherence times of all receivers to match the coherence time of the slowest channel.

The inner bounds for coherence diversity are further extended to the case of multiple receivers experiencing fading block lengths of arbitrary ratio or alignment. Unaligned block fading intervals bring to mind the {\em blind interference alignment} of Jafar~\cite{Jafar_blind}. We consider a version of blind interference alignment that unlike~\cite{Jafar_blind} takes into account the full cost of CSIR via training; in that framework we explore the synergies between blind interference alignment and product superposition. 

For the block-fading {\em multiple access} channel, the capacity in the absence of CSIR is unknown.\footnote{In fact the capacity of point-to-point channel under this condition is also unknown except certain special cases~\cite{Marzetta_capacity}.} Shamai and Marzetta~\cite{Shamai_multiuser} conjectured that in the SIMO block-fading multiple access without CSIR, the sum capacity can be achieved by activating no more than $\T$ receivers. Also, for a two-receiver SISO multiple access channel with i.i.d. fast fading, a non-naive time-sharing inner bound and a cooperative outer bound on the capacity region were provided~\cite{Marina_rayleigh}. Furthermore, a multi-receiver multiple access channel with identical coherence times where the receivers are equipped with single antenna was considered in~\cite{Gopalan_calculating} where an inner bound on the network sum capacity was provided based on successive decoding, and an outer bound was obtained based on assuming cooperation between the transmitters. 

Our results for the multiple access channel are as follows: we begin by highlighting bounds on the degrees of freedom of the block-fading MIMO multiple access channel with {\em identical} coherence times in the absence of free CSIR, a result that is not complicated but has been absent from the literature. A conventional pilot-based scheme emitting individual and separate pilots from (a subset of) the transmitters antennas is considered that subsequently allows the receiver to perform zero-forcing. This method is shown to partially meet the cooperative outer bound. In particular, this method always achieves the optimal sum degrees of freedom, and in some cases is optimal throughout the degrees of freedom region. For the case of unequal coherence times, the same transmission technique is employed with pilots transmitted at the fading transition times of every active receiver. The outer bound is once again built on the concept of enhancing the channel ~\cite{Weingarten_capacity} by increasing the receivers coherence times so that the receivers of the enhanced channel have identical coherence times.

The key results of the paper are summarized in Table~\ref{table:DoF_bc} for broadcast channel and Table~\ref{table:DoF_mac} for multiple access channel, where $\N_i^{\ast} = \min\left\{\M, \N_i, \left\lfloor \frac{\T_i}{2}\right\rfloor\right\}$, $\N_{\max}=\max_{j \in \Jset} \left\{ \N_j \right\}$ and $j_{\min}$ is the receiver with the shortest coherence time in $\Jset$.

\renewcommand{\arraystretch}{2}
\begin{table*}
	\centering
\begin{tabular}{|l|l|}
 \hline
 Coherence Times     & \multicolumn{1}{c|}{Degrees of freedom} \\
 \hline
 Identical: $\T_k=\T, \ \forall k$     & $\sum_{i=1}^{\K} \frac{\d_i}{ \N_i^{\ast} \left( 1 - \frac{\N_i^{\ast}}{\T} \right) } \leq 1$\\
 \hline
 Integer ratio: $\frac{\T_k}{\T_{k-1}}\in {\mathbb Z}, \ \T_k \in {\mathbb N}$ 
 & Inner bound 1:
 $ \d_j = \begin{cases} \N_j^{\ast} \left( 1 - \frac{\N_j^{\ast}}{\T_j} - \frac{\min\left\{\M, \N_{\max}, \T_j \right\} - \N_j^{\ast}}{\T_{j+1}}\right), & \quad j=j_{\min} \\ \N_{j_{\min}}^{\ast} \min\{\M, \N_j, \T_{j_{\min}}\} \left( \frac{1}{\T_{j-1}}- \frac{1}{\T_j} \right),   & \quad j\neq j_{\min} \end{cases}$\\
 & Inner bound 2: 
$\d_j = \begin{cases} \N_j^{\ast} \left( 1 - \frac{\N_j^{\ast}}{\T_j} \right), & j=j_{\min} \\ \N_{j_{\min}}^{\ast} \min\{\N_{j_{\min}}^{\ast},\N_j^{\ast}\} \left(  \frac{1}{\T_{j-1}}- \frac{1}{\T_j} \right), & j\neq j_{\min} \end{cases}$ \\
& where $j \in \Jset \subseteq \left[ 1:\K \right]$ \\
& Outer bound:
$\sum_{j \in \Jset} \frac{\d_j}{\N_j^{\ast} \left( 1 - 
\frac{\N_j^{\ast}}{\T_{j_{\max}}}\right)} \leq 1, \quad \forall 
\Jset \subseteq \left[ 1:\K \right]$\\
& Cases of tightness: 
$ \begin{cases}
1) \M \leq \min_j \N_j \\
2) \N_j=\N, \quad \forall j  \\
3) \T_j \gg \T_1, \quad j=2, \cdots, \K \\
4) \T_j=\T, \quad \forall j
\end{cases}, \quad \T_j \geq 2 \max\{\M, \N_j \}
$ \\
 \hline
Arbitrary ratio: $\T_k \in {\mathbb N}, \ \forall k$ 
 & Inner bound:
 $ \d_j = \begin{cases} \N_j^{\ast} \left( 1 - \frac{\N_j^{\ast}}{\T_j} \right), & j=j_{\min} \\ \N_{j_{\min}}^{\ast} \min\{\N_{j_{\min}}^{\ast},\N_j^{\ast}\} \left(  \frac{1}{\T_{j-1}}- \frac{1}{\T_j} \right), & j\neq j_{\min} \end{cases}$\\
& where $j \in \Jset \subseteq \left[ 1:\K \right]$ \\
\hline
\end{tabular}
\caption{Degrees of freedom of block-fading broadcast channel with no CSI}
\label{table:DoF_bc}
\end{table*}

\renewcommand{\arraystretch}{2}
\begin{table*}
	\centering
\begin{tabular}{|l|l|}
 \hline
 Coherence Times      & \multicolumn{1}{c|}{Degrees of freedom} \\
 \hline
 Identical: $\T_k=\T, \ \forall k$    
 & Inner bound: $d_j = \M'_j \left( 1 - \frac{\sum_{j \in \Jset}{\M'_j}}{\T}\right), \quad j \in \Jset \subseteq [1:\K]$\\
 & Outer bound: $\sum_{j \in \Jset} \d_j \leq \min\left\{ \N, \sum_{j \in \Jset} \M_j \right\} \left( 1 - \frac{\min\left\{ \N, \sum_{j \in \Jset} \M_j \right\}}{\T}\right), \quad \forall \Jset \subseteq \left[ 1:\K \right]
$ \\
& Inner bound is tight against sum degrees of freedom \\ 
 \hline
 Integer ratio: $\frac{\T_k}{\T_{k-1}}\in {\mathbb Z}, \ \forall k$
 & Inner bound: $d_j = \M'_j \sum_{m=1}^\J{\left( \T_{i_1} - \sum_{n=1}^m{\M'_{i_n}}\right)\left( \frac{1}{\T_{i_m}} - \frac{1}{\T_{i_{m+1}}} \right)}$\\
& Outer bound: $\sum_{j \in \Jset} \d_j \leq \min\left\{ \N, \sum_{j \in \Jset} \M_j \right\} \left( 1 - 
\frac{\min\left\{ \N, \sum_{j \in \Jset} \M_j \right\}}{\T_{i_\J}}\right)$\\
& where $\Jset=\{i_1, \cdots, i_\J \} \subseteq \left[ 1:\K \right]$ \\
 \hline
\end{tabular}
\caption{Degrees of freedom of block-fading multiple access channel with no CSI}
\label{table:DoF_mac}
\end{table*}

\section{Broadcast Channel with Identical Coherence Times}
\label{section:identical_bc}

Consider a $\K$-receiver MIMO broadcast channel where the transmitter is equipped with $\M$ antennas and receiver~$k$ is equipped with $\N_k$ antennas, $k=1, \cdots, \K$. The signal at receiver~$k$ is
\begin{equation}\label{eq:received_bc}
\y_k(n) =\overline{\H}_k(n) \x(n) + \z_k(n), \qquad k=1,\cdots, \K,
\end{equation}
where $\x(n) \in \mathbb{C}^{\M \times 1}$ is the transmitted signal, $\z_k(n) \in \mathbb{C}^{\N_k \times 1}$ is receiver~$k$ i.i.d. Gaussian additive noise and $\overline{\H}_k(n) \in \mathbb{C}^{\N_k \times \M}$ is receiver~$k$ Rayleigh block-fading channel matrix with coherence time of length $\T_k$ time slots~\cite{Marzetta_capacity}, at the discrete time index $n$. One time slot is equivalent to a single transmission symbol period, and all $\T_k$ are positive integers. We assume no CSIT, meaning the realization of $\overline{\H}_k(n)$ is not known at the transmitter, whereas its distribution (including the length of the coherence time, and its transition) is globally known at the transmitter and at all receivers.

We assume that there are $\K$ independent messages associated with rates $\R_1(\rho), \cdots, \R_\K(\rho)$ to be communicated from the transmitter to the $\K$ receivers at $\rho$ signal-to-noise ratio. The degrees of freedom at receiver $k$ achieving rate $\R_k(\rho)$ can be defined as
\begin{equation}
\d_k=\lim_{\rho \rightarrow \infty} \frac{\R_k(\rho)}{\log (\rho)}.
\end{equation} 
The degrees of freedom region of a $\K$-receiver MIMO broadcast is defined as 
\begin{align}
\mathcal{D} =& \Bigg\{\big( \d_1, \cdots, \d_\K \big) \in \mathbb{R}_+^{\K} \bigg| \exists \big( \R_1(\rho), \cdots, \R_\K(\rho) \big) \in C(\rho), \twocolbreak
   \d_k = \lim_{\rho \rightarrow \infty} \frac{\R_k(\rho)}{\log (\rho)}, \quad k \in {1, \cdots, \K} \Bigg\},
\end{align} 
where $C(\rho)$ is the capacity region at $\rho$ signal-to-noise ratio.

Assume that the receivers have identical coherence times, where the coherence times are perfectly aligned, and furthermore, have the same length, namely $\T$. For the capacity to be determined, it is sufficient to study the capacity of only one coherence time. Define $\Y_k \in \mathbb{C}^{\N_k \times \T}$, $\X \in \mathbb{C}^{\M \times \T}$ to be the signal at receiver~$k=1, \ldots, \K$ and the transmitted signal, respectively, during the coherence time $\T$,
\begin{equation}
\Y_k = \H_k \X_k + \Z_k, \quad k=1, \cdots, \K,
\end{equation}
where $\H_k \in \mathbb{C}^{\N_k \times \M}$ is receiver~$k$ channel matrix which remains constant during the interval $\T$.

When there is CSIR, the degrees of freedom optimality of TDMA for two receivers with $\T=1$ was shown in~\cite{Huang_degrees}. Furthermore, the result was extended to arbitrary number of receivers and for a wider class of fading distribution~\cite{Vaze_degree}. Since there is no CSIT, and furthermore, the receivers have identical coherence times, namely $\T$, the receivers are stochastically equivalent (indistinguishable) with respect to the transmitter~\cite{Lapidoth_capacity,Jafar_isotropic}. As a result, TDMA is enough to achieve the degrees of freedom region of the system, i.e. the degrees of freedom region can be given by,
\begin{equation*}
\mathcal{D} = \Bigg\{ \left( \d_1, \cdots, \d_\K \right) \in \mathbb{R}_+^{\K} 
\bigg| \sum_{i=1}^{\K} \frac{\d_i}{ \min\{\M, \N_i\}} \leq 1 \Bigg\}.
\end{equation*}
In Appendix~\ref{appendix:identical_bc}, we extend this result for $\T \geq 1$ showing that TDMA is degrees of freedom optimal.

Now assume that, for a $\K$-receiver broadcast channel, there is no CSIR. As long as the receivers have identical coherence times, the receivers are still stochastically equivalent. In the sequel, we show that TDMA is enough to achieve the degrees of freedom region in this case.

\begin{theorem}
\label{theorem:identical_bc}
Consider a $\K$-receiver broadcast channel with identical coherence times $\T$. When there is no CSIT or CSIR meaning that the channel realization is not known, but the channel distribution is globally known, the degrees of freedom region of the channel is given by,
\begin{equation}
\mathcal{D} = \Bigg\{ \left( \d_1, \cdots, \d_\K \right) \in \mathbb{R}_+^{\K} 
\bigg| \sum_{i=1}^{\K} \frac{\d_i}{ \N_i^{\ast} \left( 1 - \frac{\N_i^{\ast}}{\T} 
\right) } \leq 1 \Bigg\}.
\label{eq:identical_bc}
\end{equation}
where $\N_i^{\ast} = \min \left\{\M, \N_i, \left\lfloor \frac{\T_i}{2} \right\rfloor \right\}$.
\end{theorem}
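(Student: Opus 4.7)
The plan is to match a TDMA inner bound against an outer bound built on stochastic equivalence of the receivers.

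\emph{Achievability.} I would use straightforward TDMA. For any convex weights $\alpha_1,\dots,\alpha_\K \ge 0$ with $\sum_i \alpha_i = 1$, allocate a fraction $\alpha_i$ of the slots exclusively to receiver $i$ and, over those slots, use the Zheng--Tse / Marzetta--Hochwald non-coherent MIMO point-to-point code for an $(\M,\N_i,\T)$ Rayleigh block-fading link. That code attains the single-user degrees of freedom $\d_i^{\ast} \triangleq \N_i^{\ast}(1-\N_i^{\ast}/\T)$, using $\N_i^{\ast}=\min\{\M,\N_i,\lfloor \T/2\rfloor\}$ streams per coherence block. The resulting DoF tuple $(\alpha_1 \d_1^{\ast},\dots,\alpha_\K \d_\K^{\ast})$ traces the boundary of (\ref{eq:identical_bc}) as $(\alpha_i)$ varies over the simplex, and convexity covers the full region.

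\emph{Converse.} The individual constraints $\d_i \le \d_i^{\ast}$ follow from Fano and data processing: each receiver sees a point-to-point non-coherent Rayleigh block-fading channel with parameters $(\M,\N_i,\T)$, so $n\R_i \le I(W_i;\Y_i^n) \le I(\X^n;\Y_i^n)$ and the right-hand side grows at most as $\d_i^{\ast}\log\rho$ per channel use by~\cite{Marzetta_capacity}. To promote these pointwise bounds into the joint inequality $\sum_i \d_i/\d_i^{\ast}\le 1$, I would follow the stochastic-equivalence route of Appendix~\ref{appendix:identical_bc} for the CSIR case and transplant it to the no-CSIR setting. The key fact is that all receivers see isotropic Rayleigh fading with the same coherence interval $\T$, so the joint law $(\X,\Y_1,\dots,\Y_\K)$ is invariant under any relabeling of receivers that respects antenna counts. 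Coupled with the absence of CSIT, this invariance forces any achievable strategy to decompose, in a DoF sense, into a time-sharing of single-user codes, one per receiver, which is precisely the TDMA simplex.

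\emph{Main obstacle.} The principal difficulty is upgrading the symmetry argument to the correct \emph{weighted} sum inequality when the receivers have unequal antenna counts. A direct channel-enhancement argument (boost every receiver to $\max_j \N_j$ antennas and invoke same-$\N$ stochastic equivalence) is too coarse: it yields only $\sum_i \d_i \le \max_j \d_j^{\ast}$, which is strictly weaker than $\sum_i \d_i/\d_i^{\ast}\le 1$ whenever the $\d_i^{\ast}$ differ. The fix is to interleave enhancement with the pointwise bounds $\d_i\le \d_i^{\ast}$, or equivalently to mimic the Vaze--Varanasi construction in the no-CSIR, non-coherent regime, which is what the cited appendix furnishes.
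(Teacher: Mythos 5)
Your inner bound and the individual single-user bounds $\d_i \le \N_i^{\ast}(1-\N_i^{\ast}/\T)$ are fine and match the paper. The gap is in the converse: the step you describe as the ``main obstacle'' is in fact the entire content of the theorem, and your proposal does not resolve it. Asserting that permutation invariance of the joint law ``forces any achievable strategy to decompose, in a DoF sense, into a time-sharing of single-user codes'' is exactly the weighted-sum inequality $\sum_i \d_i/\d_i^{\ast}\le 1$ restated, not an argument for it; and as you yourself note, relabeling symmetry only exchanges receivers with equal antenna counts, so it cannot produce the antenna-dependent weights $1/\d_i^{\ast}$. Pointing to ``interleaving enhancement with the pointwise bounds'' or to a Vaze--Varanasi-style construction is a direction, not a proof.

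What the paper actually does to close this gap is structurally different from anything in your sketch. For $\M\ge\N_1$ it orders the receivers by antenna count, invokes degradedness to write the rates with auxiliary variables $U_1\rightarrow\cdots\rightarrow U_{\K-1}\rightarrow\X$, and introduces the quantities $r_k$ defined as the pre-log of $I(\X;\Y_k\,|\,U^k)$. The decisive tool is a \emph{per-antenna} (row-level) stochastic equivalence statement (Lemma~\ref{lemma:stoch_equ}): individual rows of the stacked received signal are exchangeable given the remaining rows and $U$, even across receivers with different $\N_i$. This yields $I(\X;\Y_{k,1}\,|\,U^k,\Y_{k,2:\N_k^{\ast}})\le \frac{r_k}{\N_k^{\ast}}\log\rho + \littleO(\log\rho)$ and hence $I(\X;\Y_k\,|\,U^{k-1})\le \frac{\N_k^{\ast}}{\N_{k-1}^{\ast}}r_{k-1}\log\rho+\littleO(\log\rho)$, after which the weighted sum $\sum_i \d_i/\bigl(\N_i^{\ast}(1-\M/\T)\bigr)$ telescopes to $1$. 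Without this row-level lemma (or an equivalent device), your receiver-level symmetry argument cannot recover the correct normalization when the $\N_i$ differ, so the converse as proposed is incomplete.
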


\begin{proof}
A simple time division multiplexing between the receivers achieves the degrees of freedom region. The remainder of the proof is dedicated to finding a corresponding outer bound. Without loss of generality, assume $\N_1 \leq \cdots \leq \N_\K$. When $\M \leq \N_1$, the cooperative outer bound~\cite{Sato_outer} for the sum degrees of freedom is
\begin{equation}
\sum_{i=1}^{\K} \d_i \leq \min \left\{\M, \left\lfloor \frac{\T}{2} \right\rfloor\right\} \left( 1 - \frac{\min\{\M, \left\lfloor \frac{\T}{2} \right\rfloor\}}{\T} \right), 
\end{equation}
which is tight against the TDMA inner bound. When $\M \geq \N_1$, to obtain the outer bound we need the following Lemma~\cite{Fadel_coherent}.

\begin{lemma}
\label{lemma:stoch_equ}
For the above $\K$-receiver broadcast channel, define $\overline{\Y} = \left[\Y_1^H, \; \Y_2^H, \; \cdots \;, \Y_{\K}^H \right]^H$ to be the matrix that contains all received signals during $\T$ interval, and $\overline{\Y}_{j} \in \mathbb{C}^{1 \times \T}$ is row $j$ of $\overline{\Y}$ and $\widetilde{\Y}_{\S}$ is the matrix constructed from excluding the set $\S$ of the rows from the matrix $\overline{\Y}$. Then we have
\begin{equation}
I \left( \X ; \overline{\Y}_j | U, \widetilde{\Y}_{\{j,\ell\}} \right) 
= I \left( \X ; \overline{\Y}_{\ell} | U, \widetilde{\Y}_{\{j,\ell\}} \right),
\label{eq:stoch_equ_1}
\end{equation}
and furthermore,
\begin{equation}
I \left( \X ; \overline{\Y}_j | U, \widetilde{\Y}_{\{j,\ell\}} \right) \geq I \left( \X ; \overline{\Y}_j | U, \widetilde{\Y}_{\{j,\ell\}}, \overline{Y}_{\ell} \right),
\label{eq:stoch_equ_2}
\end{equation}
where $U \rightarrow \X \rightarrow \overline{\Y}$ forms a Markov Chain.
\end{lemma}

Now, we are ready to find the outer bound for the case when $\M \geq \N_1$. Since the receivers have the same noise variance, the system is considered degraded~\cite{Bergmans_random, Bergmans_simple}, \cite[Section 5.7]{Gamal_network},
\begin{align}
\R_k \leq & I \left( U_k ; \Y_k | U^{k-1} \right), \quad k \neq \K, \nonumber \\
\R_\K \leq & I \left( \X  ; \Y_\K | U^{\K-1} \right),
\label{eq:degraded_bc_rates}
\end{align}
where $U_1 \rightarrow \cdots \rightarrow U_{\K-1} \rightarrow \X \rightarrow \left( \Y_1, \cdots \Y_\K \right)$ forms a Markov Chain, and $U_0$ is a trivial random variable. Using the chain rule, we can write~\eqref{eq:degraded_bc_rates} as
\begin{align*}
\R_k \leq & I \left( \X ; \Y_k | U^{k-1} \right) 
- I \left( \X ; \Y_k| U^k \right), \quad k \neq K, \nonumber\\
\R_\K \leq & I \left( \X ; \Y_\K | U^{\K-1} \right).
\end{align*}
Define $r_k$ to be the degrees of freedom of the term $I \left( \X ; \Y_k| U^k \right)$, where $ 0 \leq r_k \leq \N_k^{\ast} \left( 1 - \frac{\M}{\T}\right)$. Furthermore, the degrees of freedom of $I \left( \X ; \Y_1 \right)$ is bounded by the single receiver bound, i.e. $\N_1^{\ast} \left( 1 - \frac{\M}{\T}\right)$, hence,
\begin{align}
\R_1 \leq & \left( \N_1^{\ast} \left( 1 - \frac{\M}{\T}\right) - r_1 \right) 
\log \left( \rho \right) + \littleO(\log (\rho)), \nonumber \\ 
\R_k \leq & I \left( \X ; \Y_k | U^{k-1} \right) - 
r_k \log \left( \rho \right) + \littleO(\log (\rho)), \quad k \neq 1, K \nonumber\\
\R_\K \leq & I \left( \X ; \Y_\K | U^{\K-1} \right).
\end{align} 

Furthermore, we have 
\begin{align}
r_k \log(\rho) + \littleO(\log (\rho))=& I \left( \X ; \Y_k| U^k \right)      \nonumber \\
                                  \overset{(a)}{=} & I\left(\X ; \Y_{k,1:\N_k^{\ast}} | U^k \right)  \twocolbreak
                                                    +  I\left(\X ; \Y_{k,\N_k^{\ast}+1:\N_k} | U^k, \Y_{k,1:\N_k^{\ast}}\right)  
                                                   + \littleO(\log (\rho))   \nonumber \\
                               \overset{(b)}{\geq} & I\left(\X ; \Y_{k,1:\N_k^{\ast}} | U^k \right)  + \littleO(\log (\rho))   \nonumber \\
                                  \overset{(c)}{=} & \sum_{i=1}^{\N_k^{\ast}} I(\X ; \Y_{k,i} | U^k, \Y_{k,i+1:\N_k^{\ast}})  
                                  + \littleO(\log (\rho))    \nonumber \\
                                  \overset{(d)}{=} & \sum_{i=1}^{\N_k^{\ast}} I(\X ; \Y_{k,1} | U^k, \Y_{k,i+1:\N_k^{\ast}})   
                                  + \littleO(\log (\rho))                        \nonumber \\
                               \overset{(e)}{\geq} & \N_k^{\ast} I(\X ; \Y_{k,1} | U^k, \Y_{k,2:\N_k^{\ast}} ) 
                                  + \littleO(\log (\rho)),
\end{align}
where $\Y_{k,i:j}$ denotes the matrix constructed from the rows $i:j$ of the matrix $\Y_k$. $(a)$, and $(c)$ follow from the chain rule, and $(b)$ follows since mutual information is non-negative. Furthermore, $(d)$ follows from Lemma~\ref{lemma:stoch_equ} and $(e)$ follows since removing conditioning does not reduce the entropy. Therefore, 
\begin{equation}
I \left( \X ; \Y_{k,1} | U^k, \Y_{k,2:\N_k^{\ast}} \right) \leq 
\frac{r_k}{\N_k^{\ast}} \log \left( \rho \right) + \littleO(\log (\rho)). 
\label{eq:DoF_antenna_noncoh}
\end{equation} 
Furthermore,
\begin{align}
I \left( \X ; \Y_k | U^{k-1} \right)
\overset{(a)}{=} & I\left(\X ;\Y_{k,1:\N_k^{\ast}}|U^{k-1}\right) + I \left(\X ;\Y_{k,\N_k^{\ast}+1:\N_k}| U^{k-1}, \Y_{k,1:\N_k^{\ast}} \right) \nonumber\\
\overset{(b)}{=} & I\left(\X ;\Y_{k,1:\N_{k-1}^{\ast}}| U^{k-1} \right) \twocolbreak + I \left( \X ; \Y_{k,\N_{k-1}^{\ast}+1:\N_k^{\ast}}|U^{k-1}, \Y_{k,1:\N_{k-1}^{\ast}} \right) + \littleO(\log (\rho)) \nonumber \\
\overset{(c)}{=} & I\!\!\left(\X ;\Y_{k-1,1:\N_{k-1}^{\ast}}|U^{k-1} \right) \twocolbreak + \!I\!\!\left(\X ;\Y_{k,\N_{k-1}^{\ast}+1:\N_k^{\ast}}| U^{k-1},\Y_{k-1,1:\N_{k-1}^{\ast}} \right) \! + \!\littleO(\log (\rho)) \nonumber \\
= & r_{k-1} \log \left(\rho\right) \twocolbreak + \!\!\!\!\!\!\!\! \sum_{i=\N_{k-1}^{\ast}+1}^{\N_k^{\ast}} I\left(\X ;\Y_{k,i} |U^{k-1},\Y_{k-1,1:\N_{k-1}^{\ast}},\Y_{k,i+1:\N_k^{\ast}} \right) +\littleO(\log (\rho)) \nonumber\\
\overset{(d)}{\leq} & r_{k-1}\log \left(\rho \right) \twocolbreak + \left(\N_k^{\ast}-\N_{k-1}^{\ast}\right) I \left(\X ;\Y_{k-1,1}|U^{k-1},\Y_{k-1,2:\N_{k-1}^{\ast}}\right) + \littleO(\log (\rho)) \nonumber \\
\overset{(e)}{\leq} & r_{k-1} \log\left(\rho \right) +\left(\N_k^{\ast}-\N_{k-1}^{\ast}\right) \frac{r_{k-1}}{\N_{k-1}^{\ast}} \log \left( \rho \right) + \littleO(\log (\rho)) \nonumber \\
\leq & \frac{\N_k^{\ast}}{\N_{k-1}^{\ast}} r_{k-1} \log \left( \rho \right) + \littleO(\log (\rho)),
\end{align}
where $(a)$ and $(b)$ follow from applying the chain rule, and $ I \left(\X ; \Y_{k,\N_k^{\ast}+1:\N_k} | U^{k-1}, \Y_{k,1:\N_k^{\ast}} \right) = \littleO(\log (\rho))$ since more receive antennas than $\N_k^{\ast}$ does not increase the degrees of freedom \cite{Zheng_communication}. 
Furthermore, $(c)$ follows since $\Y_{k,1:\N_{k-1}^{\ast}}$ and 
$\Y_{k-1,1:\N_{k-1}^{\ast}}$ are statistically the same. $(d)$ 
follows from applying Lemma~\ref{lemma:stoch_equ} and $(e)$ follows 
from~\eqref{eq:DoF_antenna_noncoh}. Therefore,
\begin{align}
\d_1 \leq & \N_1^{\ast} \left( 1 - \frac{\M}{\T}\right) - r_1, \nonumber \\
\d_k \leq & \frac{\N_k^{\ast}}{\N_{k-1}^{\ast}} r_{k-1} - r_k, \quad i\neq1,\K, \nonumber \\
\d_\K \leq & \frac{\N_\K^{\ast}}{\N_{\K-1}^{\ast}} r_{\K-1}.
\end{align}
Hence,
\begin{align}
\sum_{i=1}^{\K} \frac{\d_i}{ \N_i^{\ast} \left( 1 - \frac{\M}{\T} \right) } &\leq 1 +
\sum_{i=2}^{\K} \frac{\r_{k-1}}{ \N_{k-1}^{\ast} \left( 1 - \frac{\M}{\T} \right) } -
\sum_{i=1}^{\K-1} \frac{\r_k}{ \N_k^{\ast} \left( 1 - \frac{\M}{\T} \right) }, \nonumber \\
& = 1,
\end{align}
where the last inequality follows since the two summations on the right hand side cancel each other. Thus, the degrees of freedom region is bounded by TDMA of the single receiver points $\N_k^{\ast} \left( 1 - \frac{\M}{\T} \right)$, which is maximized when $\M = \N_k^{\ast}$ \cite{Zheng_communication}, completing the proof of Theorem~\ref{theorem:identical_bc}.
\end{proof}

\section{Broadcast Channel with Heterogeneous Coherence Times}
\label{section:nonidentical_bc}

Consider the $\K$-receiver broadcast channel defined in~\eqref{eq:received_bc} where there is no CSIT or CSIR. The receivers have perfectly aligned coherence times with integer ratio, i.e., $\frac{\T_k}{\T_{k-1}}\in {\mathbb Z}, \forall k$. Fig.~\ref{fig:Coherence_Times_Three_Users} denotes three receivers where $\T_3= 2\T_2 = 4 \T_1$. In this system, the receivers are no longer stochastically equivalent, and hence, TDMA inner bound is no longer tight. 

The organization of this section is as follows. In Section~\ref{section:prod_super}, we revisit product superposition transmission introduced in~\cite{Li_coherent}. After that, in Section~\ref{section:nonidentical_ach_bc}, we give a product superposition transmission for the $\K$-receiver broadcast channel defined in~\eqref{eq:received_bc} calculating the achievable degrees of freedom region. Furthermore, we give an outer bound on the degrees of freedom region in Section~\ref{section:nonidentical_outer_bc}. We show the tightness of these bounds, and hence, the optimality of the achievable product superposition scheme for four cases in Section \ref{section:nonidentical_optimal_bc}. 
Finally, we give some numerical examples in Section~\ref{section:nonidentical_examples_bc}.

\begin{figure*}
\center
\includegraphics[width=4.in]{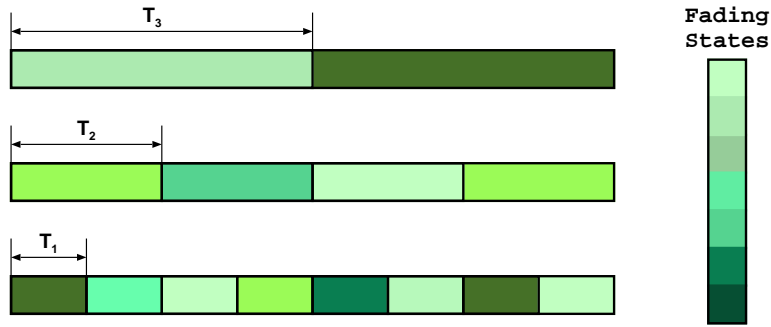}
\caption{Three receivers having aligned coherence times with integer ratio where $\T_3=2\T_2 = 4 \T_1$.}
\label{fig:Coherence_Times_Three_Users}
\end{figure*}
\subsection{Product Superposition Scheme}
\label{section:prod_super}
Li and Nosratinia~\cite{Li_product, Li_coherent} studied a two-receiver broadcast channel with no CSIT and with mixed CSIR; one static receiver has very long coherence time, hence, there is CSIR for this receiver, and one dynamic receiver has short coherence time $\T_d$, hence, there is no CSIR for this receiver. Li and Nosratinia showed that TDMA is \textit{suboptimal} in such a broadcast channel and proposed a product superposition scheme as follows. Consider $\M\geq\N_s\geq\N_d$, where $\N_s, \N_d$ are the numbers of antennas of the static and dynamic receivers, respectively. The transmitted signal is
\begin{equation}
\X=\X_s \X_d,
\end{equation}
where $\X_s \in \mathbb{C}^{\M \times \N_d}$ is the data matrix for the static receiver with i.i.d. $\mathcal{C}\mathcal{\N}\left(0,1\right)$ elements, and $\X_d \in \mathbb{C}^{\N_d \times \T_d}$ is the signal matrix for the dynamic receiver, where
\begin{equation}
\X_d = \left[ \I_{\N_d}, \; \X_{\delta}  \right],
\end{equation}
$\I_{\N_d}$ is $\N_d \times \N_d$ identity matrix, and $\X_{\delta} \in \mathbb{C}^{\N_d \times \left( \T_d - \N_d \right)}$ is the dynamic receiver data matrix having i.i.d $\mathcal{C}\mathcal{\N}\left(0,1\right)$ elements. Therefore the signal at the dynamic receiver, during $\T_d$ slots, is 
\begin{align}
\Y_d &= \H_d \X_s \left[ \I_{\N_d}, \; \X_{\delta}  \right] + \Z_d \nonumber \\ 
     &= \left[ \overline{\H}_d, \; \overline{\H}_d \X_{\delta}  \right] + \Z_d,
\end{align}
where $\overline{\H}_d=\H_d \X_s$, and $\H_d \in \mathbb{C}^{\N_d \times \M}$ is the dynamic receiver channel. The dynamic receiver estimates the equivalent channel $\overline{\H}_d$ during the first $\N_d$ slots and then decodes $\X_{\delta}$ {\em coherently}. On the other hand, the signal at the static receiver during the first $\N_d$ slots is
\begin{equation}
\Y_{s1} = \H_s \X_s + \Z_{s1},
\end{equation}
where $\H_s \in \mathbb{C}^{\N_s \times \M}$ is the static receiver channel which is known at the receiver, and hence, $\X_s$ can be decoded. As a result, the achievable degrees of freedom pair is 
\begin{equation}
\left( \N_d \left( 1 - \frac{\N_d}{\T_d} \right), \frac{\N_s\N_d}{\T_d} \right),
\end{equation}
which is strictly greater than TDMA. Thus, the product superposition achieves non-zero degrees of freedom for the static receiver ``for free'' in the sense that the dynamic receiver achieves the single-receiver degrees of freedom.

\subsection{Achievability}
\label{section:nonidentical_ach_bc}

\begin{theorem}
\label{theorem:nonidentical_ach_bc}
Consider a $\K$-receiver broadcast channel with heterogeneous coherence times and without CSIT or CSIR. The coherence times are perfectly aligned and integer multiples of each other, i.e., $\frac{\T_k}{\T_{k-1}}\in {\mathbb Z}$. Define $\Jset\subseteq \left[ 1:\K \right]$ to be a set of $\J$ receivers ordered ascendingly according to the coherence times length. For $j \in \Jset$, we can achieve the set of degrees of freedom tuples
$\Dset_1\left(\Jset\right):$
\begin{align} \label{eq:nonidentical_ach_bc_1}
\d_j &= 
\begin{cases}
\N_j^{\ast} \left( 1 - \frac{\N_j^{\ast}}{\T_j} - \frac{\min\left\{\M, \N_{\max}, \T_j\right\} - \N_j^{\ast}}{\T_{j+1}}\right), & j=j_{\min} \\ \N_{j_{\min}}^{\ast} \min\{\M, \N_j, \T_{j_{\min}}\} \left( \frac{1}{\T_{j-1}}- \frac{1}{\T_j} \right), & j\neq j_{\min}
\end{cases}.
\end{align}
Furthermore, we can achieve the set of degrees of freedom tuples $\Dset_2\left(\Jset\right):$
\begin{equation} \label{eq:nonidentical_ach_bc_2}
\d_j =  
\begin{cases}
\N_j^{\ast} \left( 1 - \frac{\N_j^{\ast}}{\T_j} \right),                               & j=j_{\min} \\
\N_{j_{\min}}^{\ast} \min\left\{ \N_j, \N_{j_{\min}}^{\ast} \right\} \left(  \frac{1}{\T_{j-1}}- \frac{1}{\T_j} \right), & j\neq j_{\min}
\end{cases},
\end{equation}
where $\N_j^{\ast} = \min\left\{\M, \N_j, \left\lfloor \frac{\T_j}{2} \right\rfloor \right\}$, $\N_{\max}=\max_{j \in \Jset} \left\{ \N_j \right\}$ and $j_{\min}$ is the receiver with the shortest coherence time in $\Jset$. The achievable degrees of freedom region is the convex hull of the degrees of freedom tuples, $\Dset_1\left(\Jset\right)$ and $\Dset_2\left(\Jset\right)$, over all the possible sets $\Jset \subseteq \left[ 1: \K\right]$, i.e.,
\begin{align}
\mathcal{D} = \left\{ \left( \d_1, \cdots, \d_\K \right) \in \text{Co} \left( \Dset_1\left(\Jset\right), \Dset_2\left(\Jset\right) \right),
 \forall {\Jset \subseteq \left[ 1: \K \right]} \right\}.
\label{eq:ach_bc_conv_hull}
\end{align}
\end{theorem}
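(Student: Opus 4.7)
The approach generalizes the Li--Nosratinia product superposition of Section~\ref{section:prod_super} to the $\J$-receiver setting by cascading product-superposition layers. For the ordered active set $\Jset=\{j_1<j_2<\cdots<j_\J\}$ (sorted by coherence length, so $j_1=j_{\min}$), I would broadcast, over a super-block of length $\T_{j_\J}$, a matrix product
\begin{equation*}
\X = \X_{j_\J}\,\X_{j_{\J-1}}\,\cdots\,\X_{j_1},
\end{equation*}
where each factor $\X_{j_k}$ is held constant over $\T_{j_k}$ slots (and re-drawn at every fading transition of receiver~$j_k$), and the innermost factor $\X_{j_1}$ has the pilot-plus-data form $[\I,\X_{j_1,\delta}]$ used in Section~\ref{section:prod_super}. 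The two inner bounds differ only in the column dimensions of the intermediate factors: $\Dset_2$ takes them as $\min\{\N_j,\N_{j_{\min}}^{\ast}\}$ to minimize the pilot overhead charged to $j_{\min}$, while $\Dset_1$ widens them to $\min\{\M,\N_{\max},\T_j\}$, trading a penalty in $\d_{j_{\min}}$ for larger spatial multiplicity in $\d_{j_k}$, $k\geq 2$.

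For receiver $j_1=j_{\min}$, inside each of its $\T_{j_1}$-long coherence blocks the outer factors $\X_{j_\J},\ldots,\X_{j_2}$ are constant and merge with $\H_{j_1}$ into an equivalent channel $\widetilde\H_{j_1}$. The $\N_{j_1}^{\ast}$ identity columns of $\X_{j_1}$ supply enough training symbols to estimate $\widetilde\H_{j_1}$, and the remaining $\T_{j_1}-\N_{j_1}^{\ast}$ columns deliver coherently-decoded data, which reproduces the single-user non-coherent DoF appearing in $\Dset_2$. In $\Dset_1$, an additional fraction $(\min\{\M,\N_{\max},\T_{j_1}\}-\N_{j_1}^{\ast})/\T_{j_2}$ of symbols is consumed by auxiliary identity columns embedded in $\X_{j_2}$, so that receiver~$j_2$ sees a full-rank effective channel; this exactly accounts for the extra subtractive term in the $\Dset_1$ expression for $\d_{j_{\min}}$.

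For receiver $j_k$ with $k\geq 2$, within one $\T_{j_k}$-block the factors $\X_{j_\J},\ldots,\X_{j_k}$ are constant while $\X_{j_{k-1}},\ldots,\X_{j_1}$ vary at the shorter coherences below. I would apply product superposition iteratively: the first sub-block of length $\T_{j_{k-1}}$ is used as ``pilot'' for receiver~$j_k$'s effective channel, and each subsequent transition of receiver~$j_{k-1}$ within the $\T_{j_k}$-block delivers a fresh batch of $\N_{j_{\min}}^{\ast}\min\{\N_j,\N_{j_{\min}}^{\ast}\}$ data symbols to receiver~$j_k$ (respectively $\N_{j_{\min}}^{\ast}\min\{\M,\N_j,\T_{j_{\min}}\}$ in $\Dset_1$). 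Summing over the $\T_{j_k}/\T_{j_{k-1}}-1$ usable sub-blocks and normalizing by $\T_{j_k}$ yields exactly the $(1/\T_{j_{k-1}}-1/\T_{j_k})$ time-fraction in the theorem. Stacking these per-receiver DoF produces the tuples $\Dset_1(\Jset)$ and $\Dset_2(\Jset)$, and the claimed region follows by conventional time-sharing, giving the convex hull in~\eqref{eq:ach_bc_conv_hull}.

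The main obstacle is the inductive decoding step at intermediate receivers: one must verify that the random factors to the right of $\X_{j_k}$ do not destroy receiver~$j_k$'s ability to invert its learned equivalent channel once the embedded pilots have been processed, and that identity columns for higher-layer receivers can be simultaneously accommodated without contaminating the pilots that lower-layer receivers rely on. This parallels the two-receiver verification in~\cite{Li_coherent}, but now requires a careful induction along $\Jset$ that tracks how much pilot real estate each layer contributes and that respects the boundary cases in which $\lfloor\T_j/2\rfloor$, $\N_j$, or $\M$ becomes the binding term inside $\N_j^{\ast}$.
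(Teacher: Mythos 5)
Your construction is essentially the paper's: the cascaded product $\X_{j_\J}\cdots\X_{j_1}$ with an innermost pilot-plus-data factor is exactly the nested product superposition that the paper builds explicitly for three receivers and then extends to $\K$ receivers via an induction lemma (left-multiplying an existing pilot-based scheme by a data matrix for each newly added, slower receiver), and your accounting of the $\left(\frac{1}{\T_{j_{k-1}}}-\frac{1}{\T_{j_k}}\right)$ fractions and of the $\Dset_1$ versus $\Dset_2$ trade-off matches the paper's Remark~2 and Lemma~\ref{lemma:nonidentical_ach_bc}. One slip worth fixing: the intermediate factor $\X_{j_k}$ must be re-drawn at every fading transition of the next-faster receiver $j_{k-1}$ (except the transition at which receiver $j_k$'s own channel changes, which is spent re-training $j_k$), not held constant over $\T_{j_k}$ slots as you first state --- your later per-sub-block count already assumes the correct refresh rate, so only the description, not the degrees-of-freedom arithmetic, needs repair.
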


\begin{proof}
The achievability proof is given in Section~\ref{section:nonidentical_ach_proof_bc}.
\end{proof}

\begin{remark}
$j_{\min}$ is the first receiver of $\Jset$ since the receivers of $\Jset$ are ordered ascendingly according to the coherence times length.
\end{remark}
\begin{remark}
The two achievable set of degrees of freedom tuples, $\Dset_1\left(\Jset\right)$ and $\Dset_2\left(\Jset\right)$, are achieved by product superposition transmission scheme. The degrees of freedom gains are different in the two sets due to the difference in the number of transmit antennas needed for channel estimation. Each set can construct a distinct achievable degrees of freedom region that does not include the other. In the first set, $\Dset_1\left(\Jset\right)$, all the receivers estimate the channel of the maximum number of antennas required for transmission, i.e., receiver~$j$ can estimate  the channel of $\N_j^{\ast}$ transmit antennas. In the second set, $\Dset_2\left(\Jset\right)$, the receivers are limited to estimate the channel of $\N_{j_{\min}}$ transmit antennas. For more details, the reader can be referred to the achievability proof given in Section~\ref{section:nonidentical_ach_proof_bc}.
\end{remark}
\begin{remark}
When the receivers have the same coherence times, product superposition transmission cannot achieve degrees of freedom gain. In this case, the degrees of freedom region is tight against TDMA.    
\end{remark}

\subsection{Outer Bound}
\label{section:nonidentical_outer_bc}

\begin{theorem}
\label{theorem:nonidentical_outer_bc}
Consider a $\K$-receiver broadcast channel under heterogeneous coherence times without CSIT or CSIR, meaning that the channel realization is not known, but the channel distribution is globally known. The coherence times are perfectly aligned and integer multiples of each others, i.e., $\frac{\T_k}{\T_{k-1}}\in {\mathbb Z}$. Define $\Jset \subseteq \left[ 1:\K \right]$ to be a set of $\J$ receivers ordered ascendingly according to the coherence times length, if a set of degrees of freedom tuples $\left( \d_1, \cdots, \d_K \right)$ is achievable, then it must satisfy the inequalities
\begin{equation}\label{eq:nonidentical_outer_bc}
\sum_{j \in \Jset} \frac{\d_j}{\N_j^{\ast} \left( 1 - 
\frac{\N_j^{\ast}}{\T_{j_{\max}}}\right)} \leq 1, \quad \forall 
\Jset \subseteq \left[ 1:\K \right],
\end{equation}
where $\N_j^{\ast}= \min \left\{\M, \N_j, \left\lfloor \frac{\T_j}{2} \right\rfloor \right\}$, and $j_{\max}$ is the receiver with the longest coherence time in $\Jset$.
\end{theorem}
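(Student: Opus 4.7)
The plan is to derive the outer bound by a channel enhancement argument that reduces the heterogeneous-coherence problem, for each subset $\Jset$, to the identical-coherence case already settled by Theorem~\ref{theorem:identical_bc}. Fix an arbitrary $\Jset \subseteq [1:\K]$. First I would drop all receivers outside $\Jset$ (removing decoding constraints can only enlarge the region, so any outer bound derived after removal is still valid). Next, for each $j \in \Jset$ I would lengthen receiver~$j$'s coherence interval from $\T_j$ up to $\T_{j_{\max}}$, so that in the enhanced channel all receivers in $\Jset$ share the common coherence time $\T_{j_{\max}}$.

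The crux is to verify that this genuinely is an enhancement, i.e., that the capacity region of the lengthened channel contains that of the original. I would establish this by a degradation-by-post-processing argument: the original channel can be stochastically reconstructed from the enhanced one by multiplying each length-$\T_j$ sub-block of receiver $j$'s channel matrix by an independent Haar-random unitary (equivalently, re-randomizing the fading every $\T_j$ symbols inside each length-$\T_{j_{\max}}$ enhanced block). Since this post-processing is performed at (or can equivalently be folded into) the receiver and is independent of the transmitter, the original channel is a physically degraded version of the enhanced one, so any achievable rate tuple on the original is achievable on the enhanced channel.

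Once enhancement is in place, all receivers in $\Jset$ have coherence time $\T_{j_{\max}}$, and the receivers-outside-$\Jset$ are inactive, so Theorem~\ref{theorem:identical_bc} applies directly to the $|\Jset|$-receiver subproblem and yields
\begin{equation*}
\sum_{j \in \Jset} \frac{\d_j}{\N_j^{\ast}\!\left(1 - \frac{\N_j^{\ast}}{\T_{j_{\max}}}\right)} \leq 1,
\end{equation*}
where the per-receiver saturation point $\N_j^{\ast}$ is controlled by the single-user non-coherent MIMO bound applied to receiver~$j$: the spatial dimension is still capped by $\min\{\M,\N_j\}$, and the $\lfloor \T/2 \rfloor$ cap is inherited from the original $\T_j$ (in the regime of interest $\T_j \geq 2\max\{\M,\N_j\}$, both definitions coincide with $\min\{\M,\N_j\}$). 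Because the inequality must hold for every $\Jset \subseteq [1:\K]$, we obtain the family of bounds in~\eqref{eq:nonidentical_outer_bc}.

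The main obstacle I anticipate is the rigorous justification of the enhancement step, specifically showing that the per-block unitary re-randomization applied to the enhanced channel reproduces the exact joint distribution of the original channel (isotropy of the Rayleigh block-fading law makes this work, but it should be spelled out), and that this re-randomization can be regarded as receiver-side processing so that the usual ``processing at the receiver cannot hurt'' argument applies uniformly across all receivers in $\Jset$ simultaneously. With that lemma in hand, the remainder of the proof is a clean invocation of Theorem~\ref{theorem:identical_bc}.
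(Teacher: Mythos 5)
Your overall architecture --- fix $\Jset$, discard the receivers outside it, enhance every receiver in $\Jset$ to the common coherence time $\T_{j_{\max}}$, and then invoke Theorem~\ref{theorem:identical_bc} --- is exactly the paper's route; the only substantive difference is how the enhancement step $\mathcal{D}(\Jset)\subseteq\overline{\mathcal{D}}(\Jset)$ is justified, and that is where your argument has a genuine gap. You claim the original channel is a degraded version of the enhanced one because the receiver can re-randomize the fading inside each length-$\T_{j_{\max}}$ block by applying an independent Haar unitary to each length-$\T_j$ sub-block of its output. Marginally this is fine: $\U\H$ is again i.i.d.\ complex Gaussian and $\U\Z$ has the law of $\Z$, so $\U\Y$ simulates one sub-block correctly. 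Jointly it is not: if $\H$ is a single realization and $\U_1,\dots,\U_L$ are independent unitaries, then $\U_1\H,\dots,\U_L\H$ all share the same Gram matrix $\H^H\H$ (hence the same singular values and right singular vectors), whereas in the true original channel the $L=\T_{j_{\max}}/\T_j$ realizations inside one enhanced block are fully independent, so their Gram matrices are independent Wishart matrices. The re-randomized enhanced channel therefore does not reproduce the joint law of the original channel, and no receiver-side processing can: the information about $\H^H\H$ is imprinted on every sub-block of the received signal and cannot be erased without destroying the signal itself. So the degradation claim fails as stated, and the lemma you flagged as ``the main obstacle'' is in fact not provable by this coupling.

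The paper establishes the enhancement by a different mechanism (Lemma~\ref{lemma:enhance}, proved in Appendix~\ref{appendix:enhance}): it sandwiches the multiplexing gain of $\frac{1}{\bar n}I(\Psi_j;\Y_j^n)$ between a lower and an upper bound, both explicit functions that are non-decreasing in $\T_j$ (Lemma~\ref{lemma:nondec_T}), and argues from this monotonicity that lengthening any coherence time cannot shrink the degrees-of-freedom region; no stochastic degradation between the two channels is ever asserted. To complete your proof you would need to replace the unitary re-randomization with an argument of this monotonicity type (or some other correct coupling); the remainder of your write-up --- restricting to $\Jset$, applying Theorem~\ref{theorem:identical_bc} to the enhanced channel, and noting that $\N_j^{\ast}=\min\{\M,\N_j\}$ in the regime $\T_j\geq 2\max\{\M,\N_j\}$ --- matches the paper and is fine.
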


\begin{remark}
The receivers of the set $\Jset$ are ordered ascendingly according to the coherence times length, i.e., $\frac{\T_k}{\T_{k-1}}\in {\mathbb Z}$. $\J_{\max}$ is the last receiver of the set, and $\T_{j_{\max}}$ is the longest coherence time in the set $\Jset$.
\end{remark}
\begin{proof}
We prove the Theorem by showing that for any $\Jset \subseteq \left[ 1: \K \right]$, the degrees of freedom are bounded by the inequality~\eqref{eq:nonidentical_outer_bc}. We show that for the set of receivers $\Jset$, increasing the coherence times of the receivers to be equal to the longest coherence time, i.e. $\T_j=\T_{j_{\max}}, \forall j \in \Jset$ cannot reduce the degrees of freedom. This means the degrees of freedom region of the resultant enhanced channel includes the original degrees of freedom region.

\begin{lemma}\label{lemma:enhance}
For a $\K$-receiver broadcast channel with heterogeneous coherence times and without CSIT or CSIR, define $\mathcal{D}\left(\Jset\right)$ to be the degrees of freedom region of a set of receivers $\Jset \subseteq \left[ 1:\K \right]$ where the receivers are ordered ascendingly according to the coherence times length. Define $\mathcal{\overline{D}}\left(\Jset\right)$ to be the degrees of freedom region of the same set of receivers $\Jset \subseteq \left[ 1:\K \right]$ where the receivers have the coherence time of the longest receiver, i.e., $\T_j= \T_{j_{\max}}, \forall j \in \Jset$. Thus, we have
\begin{equation}
\mathcal{D}\left(\Jset\right) \subseteq \mathcal{\overline{D}}\left(\Jset\right)
\end{equation}
\end{lemma}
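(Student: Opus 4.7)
The claim is a channel-enhancement statement: extending every coherence interval in $\Jset$ up to $\T_{j_{\max}}$ cannot shrink the DoF region. The plan is to prove the inclusion by showing that every rate tuple achievable on the original channel remains achievable on the enhanced channel. Because neither channel offers CSIT, the same encoder can be used on both, and the entire argument resides on the receiver side, specifically on the ability of each receiver in the enhanced channel to simulate the output it would have seen in the original channel.

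I would focus on a single window of $\T_{j_{\max}}$ slots, which contains $L_j = \T_{j_{\max}}/\T_j$ independent Rayleigh sub-blocks $\H_j^{(1)},\ldots,\H_j^{(L_j)}$ for receiver $j$ in the original channel but is a single coherence block with one matrix $\H_j'$ in the enhanced channel. The key step is to construct, for each $j\in\Jset$, an input-independent stochastic map $\phi_j$ (a Markov kernel combined with receiver-local auxiliary randomness) such that $\phi_j(\Y_j^{\text{enh}})$ is distributed identically to $\Y_j^{\text{orig}}$ conditional on the transmitted codeword $\X$. Given such a $\phi_j$, the data-processing inequality yields $I(\X;\Y_j^{\text{orig}})\le I(\X;\Y_j^{\text{enh}})$ for every input distribution, and applying the original decoder to $\phi_j(\Y_j^{\text{enh}})$ preserves the error probability of every original-channel code when it is used on the enhanced channel, establishing $\mathcal{D}(\Jset)\subseteq\overline{\mathcal{D}}(\Jset)$.

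The main obstacle is the explicit construction of $\phi_j$. Conditional on $\X$, the rows of the enhanced output are Gaussian with covariance $\X^H\X+\I$, while those of the original output have the block-diagonal covariance $\mathrm{BlockDiag}_b(\X^{(b)H}\X^{(b)}+\I)$; the difference is concentrated in the off-diagonal blocks $\X^{(b)H}\X^{(b')}$, and since the ``original minus enhanced'' covariance is not positive-semidefinite in general, a plain additive Gaussian injection cannot convert the enhanced law into the original. I would try to combine independent per-sub-block unitary transformations (exploiting the rotational invariance of Rayleigh fading) with a calibrated Gaussian noise term, so that the shared $\H_j'$ is decorrelated across sub-blocks and each sub-block's conditional law matches $\mathcal{CN}(\mathbf 0,\X^{(b)H}\X^{(b)}+\I)$ exactly. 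Should a direct construction prove too delicate, I would fall back on a genie argument: furnish receiver $j$ of the original channel with the auxiliary matrices $\H_j^{(2)},\ldots,\H_j^{(L_j)}$ as side information, note that this can only enlarge the original DoF region, and then show that the resulting side-informed channel is itself stochastically degraded from the enhanced channel, closing the chain of inclusions.
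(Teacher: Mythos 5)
There is a genuine gap, and it is located exactly where you flag it: the construction of the input-independent kernel $\phi_j$ that maps the enhanced output law to the original output law. This is not merely a delicate step that needs more work --- in general no such $\phi_j$ exists. Conditional on $\X$, the original receiver sees $L_j$ \emph{independent} fading realizations across sub-blocks, and this independence is a diversity resource: for suitable input distributions (e.g., on--off signalling repeated across sub-blocks), independent looks at the fade yield strictly larger mutual information than the fully correlated looks of the enhanced channel. A stochastic degradation $\phi_j$ would force $I(\X;\Y_j^{\text{orig}})\leq I(\X;\Y_j^{\text{enh}})$ for \emph{every} input distribution, which is therefore false; your own observation that the ``original minus enhanced'' conditional covariance is not positive semidefinite is the finite-SNR shadow of this obstruction, and neither per-sub-block unitary rotations (which leave the effective channels dependent through the shared $\H_j'$) nor calibrated additive noise can repair it. The fallback genie argument does not close the gap either: handing $\H_j^{(2)},\dots,\H_j^{(L_j)}$ to the \emph{original} receiver enlarges $\mathcal{D}(\Jset)$, but the second inclusion you would then need --- that the side-informed original channel is degraded with respect to the enhanced one --- is even harder, since the enhanced receiver still cannot manufacture outputs driven by the unknown $\X^{(b)}$ through fresh independent fades.

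The paper avoids all of this by never asserting anything at the level of exact output distributions. It works only with multiplexing gains: its Lemma~\ref{lemma:nondec_T} sandwiches $\text{MG}\{\frac{1}{\bar n}I(\Psi_j;\Y_j^n)\}$ between $\N_j^{\ast}(\frac{1}{\J}-\frac{\N_j^{\ast}}{\T_j})$ and $\N_j^{\ast}(1-\frac{\N_j^{\ast}}{\T_j})$, argues non-decrease in $\T_j$ at the DoF level, and concludes by Fano's inequality plus a contradiction argument that any tuple in $\mathcal{D}(\Jset)$ lies in $\overline{\mathcal{D}}(\Jset)$. The moral is that the lemma is a statement about pre-log factors only, where diversity gains are invisible; your proposal tries to prove the strictly stronger (and false) statement that the original channel is a stochastically degraded version of the enhanced one. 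To salvage your route you would have to retreat to a DoF-only comparison, at which point you essentially recover the paper's argument.
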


\begin{proof}
See Appendix~\ref{appendix:enhance}.
\end{proof}

Using Lemma~\ref{lemma:enhance}, the degrees of freedom region for every set of receivers $\Jset \subseteq \left[ 1:\K \right]$ is included in the degrees of freedom region of an enhanced channel with identical coherence times of length $\T_{j_{\max}}$ slots. Furthermore, Theorem \ref{theorem:identical_bc} shows that the degrees of freedom region of the enhanced channel is tight against TDMA inner bound. Thus, we obtain the region in \eqref{eq:nonidentical_outer_bc}, and hence, the proof of Theorem \ref{theorem:nonidentical_outer_bc} is completed.
\end{proof}
\subsection{Optimality}
\label{section:nonidentical_optimal_bc}

For four cases, the achievable degrees of freedom region in Section~\ref{section:nonidentical_ach_bc} and the outer degrees of freedom region obtained in Section~\ref{section:nonidentical_outer_bc} are tight. In the four cases, the coherence time is at least twice the number of transmit and receive antennas, i.e., $\T_j \geq 2 \max \{\M, \N_j\}$.

\subsubsection{The transmitter has fewer antennas}

When $\M \leq \min_{j}\left\{ \N_j \right\}$, the outer degrees of freedom region given by~\eqref{eq:nonidentical_outer_bc} is
\begin{equation}
\sum_{j \in \Jset} \d_j \leq \M \left( 1 - \frac{ \M }{ \T_{j_{\max}} }\right), 
\quad \forall \Jset \subseteq \left[ 1:\K \right].
\end{equation}
The achievable degrees of freedom tuples in~\eqref{eq:nonidentical_ach_bc_2} are
\begin{equation}
\d_j = 
\begin{cases}
\M \left( 1 - \frac{\M}{\T_j} \right),                   & j=j_{\min} \\
\M^2 \left(  \frac{1}{\T_{j-1}}- \frac{1}{\T_j} \right), & j\neq j_{\min}
\end{cases}, \quad j \in \Jset.
\end{equation}
Hence,
\begin{align}
\sum_{j \in \Jset} \d_j &= \M \left( 1 - \frac{ \M }{\T_{j_{\min}}} \right) + \sum_{j \in \Jset, j\neq j_{\min}} \M^2 \left( \frac{ 1 }{\T_{j-1}} - \frac{ 1 }{\T_j} \right) \nonumber \\
&\overset{(a)}{=} \M \left( 1 - \frac{ \M }{\T_{j_{\min}}} \right) + \M^2 \left( \frac{1}{\T_{j_{\min}}} - \frac{ 1 }{\T_{j_{\max}}} \right) \twocolbreak
= \M \left( 1 - \frac{ \M }{\T_{j_{\max}}} \right),
\end{align}
where $(a)$ follows from the telescoping sum. Thus, the achievable degrees of freedom tuples are at the boundaries of the outer degrees of freedom region, consequently, the convex hull of the achievable degrees of freedom tuples is tight against the outer degrees of freedom region.

\subsubsection{The receivers have equal number of antennas}

When $ \N_k=\N, \forall k$, the outer degrees of freedom region given in~\eqref{eq:nonidentical_outer_bc} is
\begin{equation}
\sum_{j \in \Jset} \d_j \leq \N^{\ast} \left( 1 - \frac{ \N^{\ast} }{ \T_{j_{\max}} }\right), \quad \Jset \subseteq \left[ 1:\K \right].
\end{equation}
The achievable degrees of freedom tuples in~\eqref{eq:nonidentical_ach_bc_2} are 
\begin{equation}
\d_j =  
\begin{cases}
\N^{\ast} \left( 1 - \frac{\N^{\ast}}{\T_j} \right),            & j=j_{\min} \\
\N^{\ast^2} \left(  \frac{1}{\T_{j-1}}- \frac{1}{\T_j} \right), & j\neq j_{\min}
\end{cases}, \quad j \in \Jset.
\end{equation}
Hence,
\begin{align}
\sum_{j \in \Jset} \d_j &= \N^{\ast} \left( 1 - \frac{ \N^{\ast} }{\T_{j_{\min}}} \right) + \sum_{j \in \Jset, j \neq j_{\min}} \N^{\ast^2} \left( \frac{ 1 }{\T_{j-1}} - \frac{ 1 }{\T_j} \right) \nonumber \\
&\overset{(a)}{=} \N^{\ast} \left( 1 - \frac{ \N^{\ast} }{\T_{j_{\min}}} \right) + \N^{\ast^2} \left( \frac{1}{\T_{j_{\min}}} - \frac{ 1 }{\T_{j_{\max}}} \right) \twocolbreak
= \N^{\ast} \left( 1 - \frac{ \N^{\ast} }{\T_{j_{\max}}} \right).
\end{align}
The achievable degrees of freedom tuples are at the boundaries of the outer degrees of freedom region, thus, the outer degrees of freedom region is tight.

\subsubsection{The coherence times of the receivers are very large compared to the coherence time of one receiver}

When $\T_j \gg \T_1$, where $j=2, \cdots, \K$, the outer region given in~\eqref{eq:nonidentical_outer_bc} is
\begin{align}
                   \d_1 &\leq \N_1^{\ast} \left( 1 - \frac{ \N_1^{\ast} }{ \T_1 }\right), \nonumber \\
\sum_{j \in \Jset} \frac{\d_j}{\N_j^{\ast}} &\leq 1, \quad \Jset \subseteq \left[ 1:\K \right].
\end{align}
The achievable degrees of freedom tuples in~\eqref{eq:nonidentical_ach_bc_1}, $\Dset_1\left(\Jset\right)$, are
\begin{equation}
\d_j = 
\begin{cases}
\N_j^{\ast} \left( 1 - \frac{\N_j^{\ast}}{\T_j} \right),    & j=j_{\min} \\
\frac{\N^{\ast}_{j_{\min}} \N^{\ast}_j}{\T_{j-1}},      & j\neq j_{\min}
\end{cases}, \quad j \in \Jset.
\end{equation}
Therefore, 
\begin{align}
\sum_{j \in \Jset} \frac{\d_j}{\N_j^{\ast}} &\approx 1 - \frac{ \N_{j_{\min}}^{\ast} }{\T_{j_{\min}}} + \frac{ \N_{j_{\min}}^{\ast} }{\T_{j_{\min}}} \nonumber \\
&= 1, 
\end{align}
which means the achievable degrees of freedom region is tight.

\subsubsection{The receivers have identical coherence times}

In the case of identical coherence times, we showed in Section~\ref{section:identical_bc} that the degrees of freedom region is tight against TDMA. When $ \T_k = \T, \forall k$, the outer region given in~\eqref{eq:nonidentical_outer_bc} is
\begin{equation}
\sum_{j \in \Jset} \frac{\d_j}{\N_j^{\ast} \left( 1 - \frac{\N_j^{\ast}}{\T}\right)} \leq 1, \quad \forall \Jset \subseteq \left[ 1:\K \right],
\end{equation}
which is the same as the TDMA degrees of freedom region. In this case, the achievable degrees of freedom tuples in~\eqref{eq:nonidentical_ach_bc_2}, $\Dset_2\left(\Jset\right)$, are reduced to that obtained by TDMA.

\begin{figure}
\center
\includegraphics[width=\Figwidth]{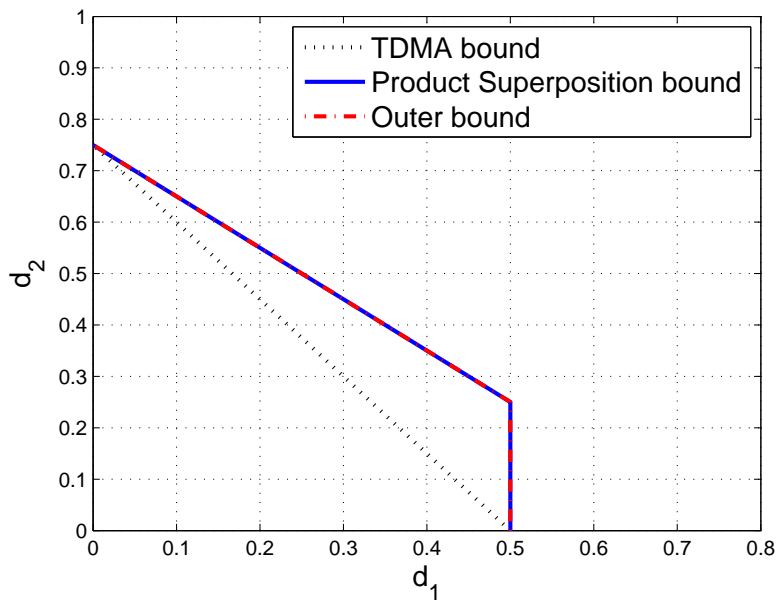}
\caption{Degrees of freedom region of a two-receiver broadcast channel with heterogeneous coherence times where $\M=\N_1=\N_2=1, \T_1=2, \T_2=4$.}
\label{fig:2users_siso_bc}
\end{figure}

\subsection{Numerical Examples}
\label{section:nonidentical_examples_bc}

Consider a single-antenna two-receiver broadcast channel, i.e. $\M=\N_1=\N_2=1$ with coherence times $\T_1=2$ and $\T_2=4$ slots. Thus, in this case, we have four possibilities of $\Jset: \left\{ \right\}, \left\{ 1 \right\}, \left\{ 2 \right\}, \left\{ 1, 2 \right\}$. According to Theorem~\ref{theorem:nonidentical_outer_bc}, the outer degrees of freedom region is given by 
\begin{align*}
\d_1        &\leq \frac{1}{2}, \\
\d_1 + \d_2 &\leq \frac{3}{4}.
\end{align*}
The achievable degrees of freedom tuples
\begin{equation}
\Dset_1\left(\Jset\right) = \Dset_2\left(\Jset\right): \left( 0, 0 \right), \left( \frac{1}{2}, 0 \right), \left( 0, \frac{3}{4}\right), \left( \frac{1}{2}, \frac{1}{4}\right).
\end{equation}
As shown in Fig.~\ref{fig:2users_siso_bc}, the outer and the achievable regions coincide on each other.

For a two-receiver broadcast channel with $\M=2, \N_1=1, \N_2=3$, and $\T_1=4, \T_2=24$, the outer degrees of freedom is given by 
\begin{align*}
      \d_1                    &\leq \frac{18}{24}, \\
\frac{\d_1}{23/24} + \frac{\d_2}{44/24} &\leq 1.
\end{align*}
Furthermore, 
\begin{align*}
\Dset_1\left(\Jset\right): & \left( 0, 0 \right), \left( \frac{18}{24}, 0 \right), \left( 0, \frac{44}{24}\right), \left( \frac{17}{24}, \frac{10}{24}\right), \\
\Dset_2\left(\Jset\right): & \left( 0, 0 \right), \left( \frac{18}{24}, 0 \right), \left( 0, \frac{44}{24}\right), \left( \frac{18}{24}, \frac{5}{24}\right).
\end{align*}
Fig.~\ref{fig:2users_mimo_bc} shows the gap between the outer and the achievable bounds.
\begin{figure}
\center
\includegraphics[width=\Figwidth]{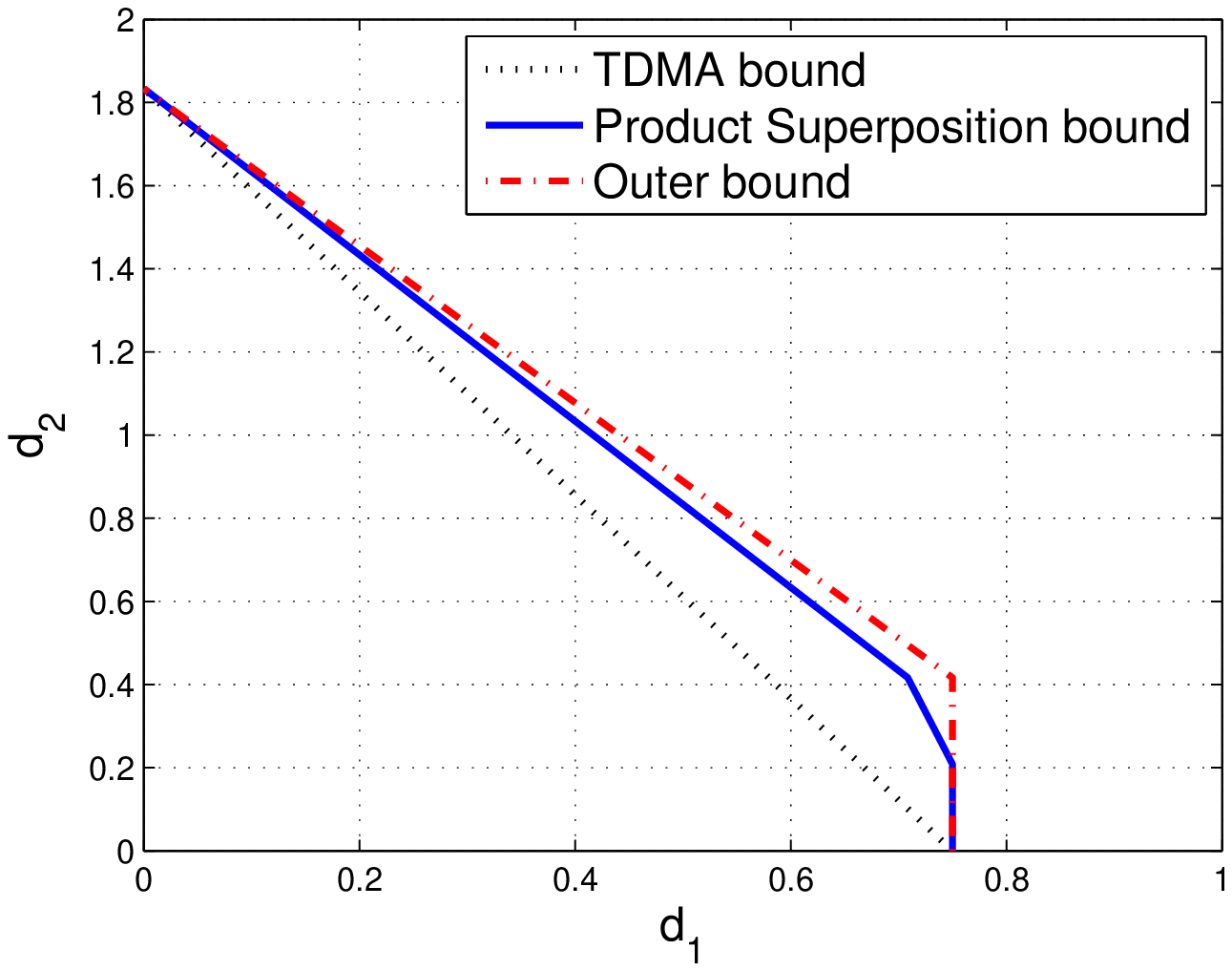}
\caption{Degrees of freedom region of a two-receiver broadcast channel with heterogeneous coherence times where $\M=2, \N_1=1, \N_2=3, \T_1=4, \T_2=24$.}
\label{fig:2users_mimo_bc}
\end{figure}

Furthermore, for a two-receiver broadcast channel with $\M=2, \N_1=1, \N_2=3$ and $\T_1=4$ and $\T_2=40$, the outer degrees of freedom region is given by 
\begin{align*}
      \d_1                    &\leq \frac{30}{40}, \\
\frac{\d_1}{39/40} + \frac{\d_2}{76/40} &\leq 1.
\end{align*}
For the achievable region in Theorem~\ref{theorem:nonidentical_ach_bc}, 
\begin{align*}
\Dset_1\left(\Jset\right): & \left( 0, 0 \right), \left( \frac{30}{40}, 0 \right), \left( 0, \frac{76}{40}\right), \left( \frac{30}{40}, \frac{9}{40}\right) \\
\Dset_2\left(\Jset\right): & \left( \frac{12}{16}, 0 \right), \left( 0, \frac{28}{16}\right), \left( \frac{29}{40}, \frac{18}{40}\right).
\end{align*}
Fig.~\ref{fig:2users_mimo_longT_bc} shows the gap between the achievable and the outer regions which is decreased with the increase of the ratio between the coherence times, $\frac{\T_2}{\T_1}$.
\begin{figure}
\center
\includegraphics[width=\Figwidth]{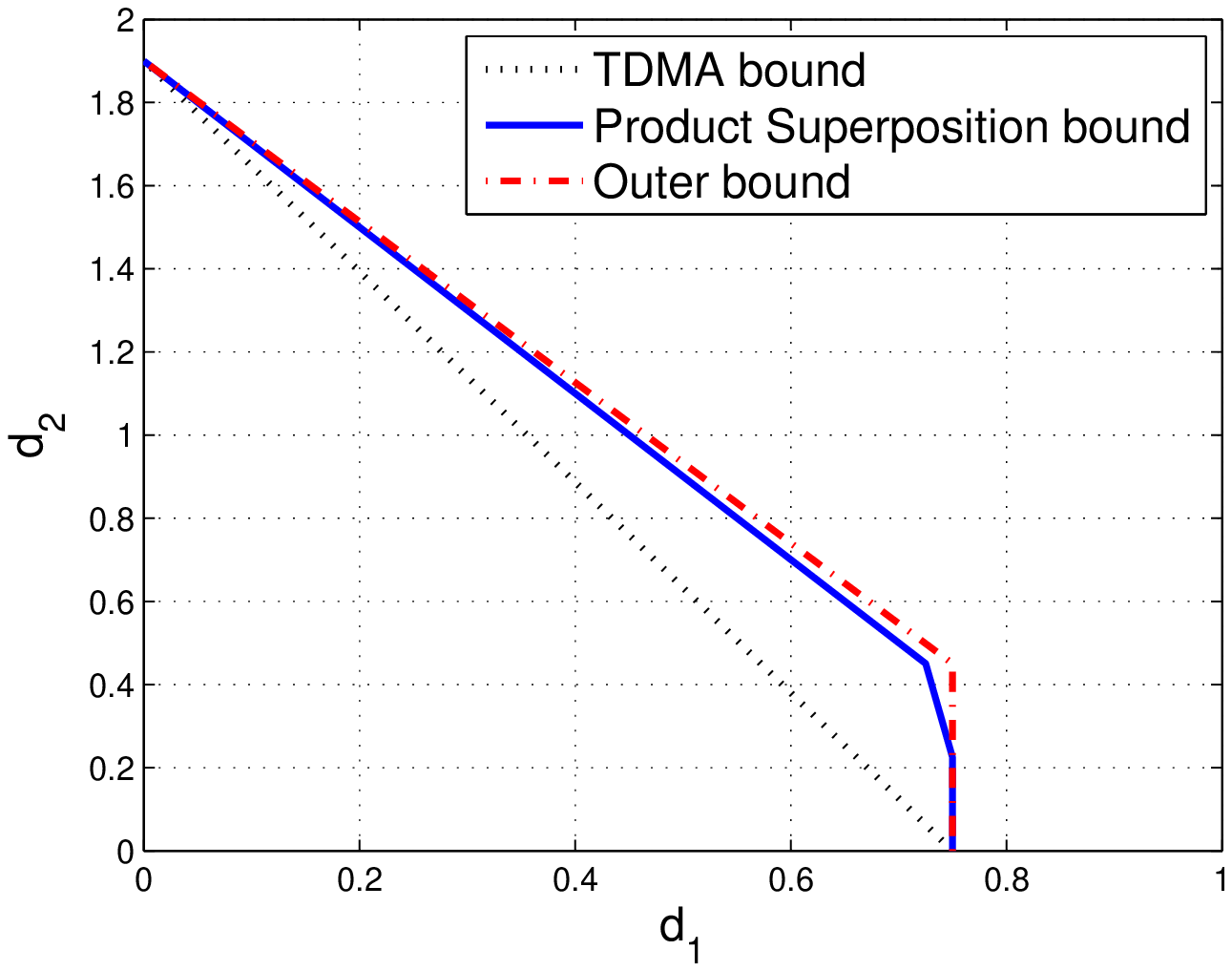}
\caption{Degrees of freedom region of a two-receiver broadcast channel with heterogeneous coherence times where $\M=2, \N_1=1, \N_2=3, \T_1=4, \T_2=40$.}
\label{fig:2users_mimo_longT_bc}
\end{figure}

\begin{figure}
\center
\includegraphics[width=\Figwidth]{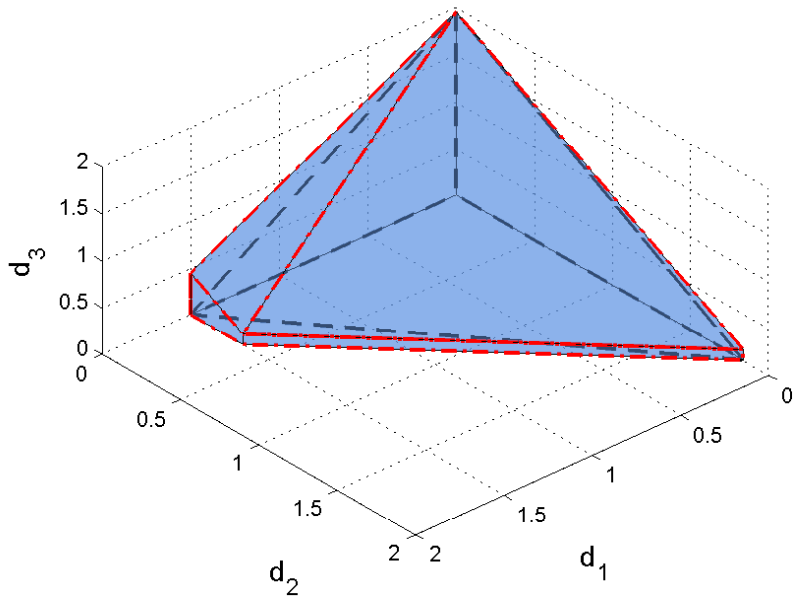}
\caption{Degrees of freedom region of a three-receiver broadcast channel with heterogeneous coherence times where $M=4, N_1=N_2=N_3=2, T_1=8, T_2=24, T_3=72$.}
\label{fig:3users_bc}
\end{figure}

Now consider a three-receiver broadcast channel with $\M=4, \N_1= \N_2=\N_3=2$ and $\T_1=6$, $\T_2=18, \T_3=54$. When the receivers have equal number of antennas, as discussed in Section~\ref{section:nonidentical_optimal_bc}, the achievable degrees of freedom and outer regions are tight. The outer degrees of freedom region is
\begin{align*}
      \d_1                    &\leq \frac{5}{6}, \\
\frac{\d_1}{17/18} + \frac{\d_2}{32/18} &\leq 1, \\
\frac{\d_1}{53/54} + \frac{\d_2}{104/54} + \frac{\d_3}{153/54} &\leq 1.
\end{align*}
For the achievable degrees of freedom region, we have 8 possibilities for $\Jset$ : 
\begin{equation*}
\left\{ \right\}, \left\{ 1 \right\}, \left\{ 2 \right\}, \left\{ 3 \right\}, \left\{ 1, 2 \right\}, \left\{ 1, 3 \right\}, \left\{ 2, 3 \right\}, \left\{ 1, 2, 3 \right\}.
\end{equation*}
Hence, 
\begin{align*}
\Dset_1\left(\Jset\right): & \left( 0, 0, 0 \right), \left( \frac{5}{6}, 0, 0 \right), \left( 0, \frac{32}{18}, 0 \right), \left( 0, 0, \frac{153}{154} \right), \left( \frac{14}{18}, \frac{4}{18}, 0 \right), \left( \frac{43}{54}, 0, \frac{24}{54} \right), \\ 
                           & \left( 0, \frac{94}{54}, \frac{12}{54} \right), \left( \frac{13}{18}, \frac{4}{18}, \frac{6}{54} \right). \\
\Dset_2\left(\Jset\right): & \left( 0, 0, 0 \right), \left( \frac{5}{6}, 0, 0 \right), \left( 0, \frac{32}{18}, 0 \right), \left( 0, 0, \frac{153}{154} \right), \left( \frac{5}{6}, \frac{2}{18}, 0 \right), \left( \frac{5}{6}, 0, \frac{8}{54} \right), \\ 
                           & \left( 0, \frac{32}{18}, \frac{8}{54} \right), \left( \frac{5}{6}, \frac{2}{18}, \frac{2}{54} \right).		
\end{align*}

Fig.~\ref{fig:3users_bc} shows the achievable degrees of freedom region (denoted by blue), the TDMA achievable region (denoted by black), and furthermore, the tight outer degrees of freedom region (denoted by red).

\section{Proof for Theorem~\ref{theorem:nonidentical_ach_bc}}\label{section:nonidentical_ach_proof_bc}

Achievable rates under coherence diversity for a general $\K$-receiver broadcast channel are attained by finding the best opportunities to re-use certain slots. Because the number of such opportunities blows up with $\K$, the central idea and intuition behind finding such opportunities are not easily visible in the general case of $\K$ receivers, where the achievable rates are eventually described via an inductive process. To highlight the ideas and the intuition in the achievable rate methodology, we develop these ideas in the special case of 3 receivers, which is the smallest number of receivers where the full richness of these interactions manifest themselves. We then proceed to describe the $\K$-receiver result in its full generality.

\subsection{Achievability for Three Receivers}
\label{section:ach_3users_bc}

In the case of three receivers we have 8 possible receivers sets $\Jset$: one empty set, $\left\{ \right\}$, achieving the trivial degrees of freedom tuple $\left( 0, 0, 0 \right)$, three single receiver sets, $\left\{1\right\}, \left\{ 2 \right\}, \left\{3\right\}$, three two-receiver sets, $\left\{1, 2\right\}, \left\{1, 3 \right\}, \left\{2, 3\right\}$, and one three-receiver set $\left\{1, 2, 3\right\}$. In the sequel, we first show the achievability of $\Dset_1\left(\Jset\right)$ and after that we give the achievability of $\Dset_2\left(\Jset\right)$.

\subsubsection{$\Dset_1\left(\Jset\right)$ achievability}
\label{section:ach_3users_bc_1}

For the three single-receiver sets, we can achieve the three degrees of freedom tuples
\begin{align}\label{eq:ach_3users_points_bc_1_1}
&\left( \N_1^{\ast} \left( 1 - \frac{\N_1^{\ast}}{\T_1} \right), 0, 0 \right),
\left( 0, \N_2^{\ast} \left( 1 - \frac{\N_2^{\ast}}{\T_2} \right), 0 \right), \twocolbreak
\left( 0, 0, \N_3^{\ast} \left( 1 - \frac{\N_3^{\ast}}{\T_3} \right) \right),
\end{align}
by serving only one receiver while the other receivers remain unserved. In particular, for receiver $k=1, 2, 3$, every $\T_k$ slots, a training sequence is sent during $\N_k^{\ast}$ slots and then data for receiver~$k$ is sent during the remaining $(\T_k - \N_k^{\ast})$ slots. $\N_k^{\ast} \left( 1 - \frac{\N_k^{\ast}}{\T_k} \right)$ degrees of freedom are achieved for receiver~$k$, whereas the other receivers achieve zero degrees of freedom.

For the three two-receiver sets, two receivers are being served while the third receiver remains unserved. Using product superposition for two receivers, the degrees of freedom tuples are
\begin{align}
\!\!\!\!\!\Bigg( \!\!\N_1^{\ast} \left( 1 - \frac{\N_1^{\ast}}{\T_1} \right) \! - \frac{\N_1^{\ast} \left( \min\left\{\M, \max \{ \N_1,\N_2 \}, \T_1  \right\} - \N_1^{\ast} \right)}{\T_2}, \twocolbreak
 \N_1^{\ast} \min\{ \M, \N_2, \T_1\} \left( \frac{1}{\T_1}  - \frac{1}{\T_2}\right), \! 0 \!\Bigg), \label{eq:ach_3users_points_bc_4_1} \\
\!\!\!\!\!\Bigg( \!\!\N_1^{\ast} \left( 1 - \frac{\N_1^{\ast}}{\T_1} \right) \!- \frac{\N_1^{\ast} \left( \min\left\{\M, \max \{ \N_1,\N_3 \}, \T_1 \right\} - \N_1^{\ast} \right)}{\T_3}, 
0, \twocolbreak
 \N_1^{\ast} \min\{\M, \N_3, \T_1\} \left( \frac{1}{\T_1} - \frac{1}{\T_3}\right) \!\!\Bigg), \label{eq:ach_3users_points_bc_5_1} \\
\!\!\!\!\!\Bigg( \!\!0, \N_2^{\ast} \left( 1 - \frac{\N_2^{\ast}}{\T_2} \right) \! - \frac{\N_2^{\ast} \left( \min\left\{\M, \max \{ \N_2,\N_3 \}, \T_2 \right\} - \N_2^{\ast} \right)}{\T_3}, \twocolbreak
\N_2^{\ast} \min\{\M, \N_3, \T_2\} \left( \frac{1}{\T_2} - \frac{1}{\T_3}\right)\!\!\!\Bigg). \label{eq:ach_3users_points_bc_6_1}
\end{align}
To achieve~\eqref{eq:ach_3users_points_bc_4_1}, product superposition transmission is sent over $\frac{\T_2}{\T_1}$ coherence intervals of receiver~1 (each of length $\T_1$ slots) as follows.
\begin{itemize}
	\item During the first coherence interval, training is sent during 
	      $\min\left\{\M, \max \{ \N_1,\N_2 \}, \T_1 \right\}$ slots for receiver~1 and receiver~2 channel estimation.
	      After that, data for receiver~1 is sent during the 
	      following $(\T_1 - \min\left\{\M, \max \{ \N_1,\N_2 \}, \T_1 \right\})$ slots. Receiver~1 
	      achieves 
				\begin{equation*}
   			\N_1^{\ast} \left( \T_1 - \min\left\{\M, \max \{ \N_1,\N_2 \}, \T_1\right\} \right)
				\end{equation*}
	      degrees of freedom.
	\item During the remaining coherence intervals, the 
	      transmitter sends, every $\T_1$ slots,
	      \begin{equation}\label{eq:ach_3users_bc_12_1}
           \X^{(12)}_i= \left[ \V_{i}, \; \V_{i}\U_{i}\right], \quad i=1,
                        \cdots, \frac{\T_2}{\T_1}-1,
        \end{equation}
        where $\U_{i} \in \mathbb{C}^{\N_1^{\ast} \times \left( \T_1 -
        \N_1^{\ast} \right)}, \V_{i} \in \mathbb{C}^{\M \times \N_1^{\ast}}$ 
				are data matrices for receiver~1, and receiver~2, respectively. 
				Thus, receiver~1 estimates its equivalent channel 
        $\overline{\H}_{1,i}=\H_{1,i}\V_{i}$, and decodes 
        $\U_i$ achieving $\left( \frac{\T_2}{\T_1} - 1 \right)\N_1^{\ast}\left( 
        \T_1 - \N_1^{\ast}\right)$ degrees of freedom. Furthermore, the channel of 
        receiver~2 remains constant and known, hence, $\V_{i}$ can be 
        decoded coherently at receiver~2 achieving $\left( \frac{\T_2}{\T_1} - 1\right)\N_1^{\ast} \min\{\M, \N_2, \T_1\}$ degrees of freedom. When $\N_2 \geq \T_1$, receiver~2 estimates only $\T_1$ antennas during the first subinterval. 
\end{itemize}
Thus, by the above product superposition scheme, for every $\T_2$ slots, receiver~1 achieves 
\[
\frac{\T_2}{\T_1} \N_1^{\ast}\left( \T_1 - \N_1^{\ast}\right) - \N_1^{\ast}\left( \min\left\{\M, \max \{ \N_1,\N_2 \}, \T_1 \right\} - \N_1^{\ast}\right) 
\]
degrees of freedom, and furthermore, receiver~2 achieves $\left( \frac{\T_2}{\T_1} - 1\right)\N_1^{\ast} \min\{\M, \N_2, \T_1\}$ degrees of freedom obtaining~\eqref{eq:ach_3users_points_bc_4_1}.

For achieving~\eqref{eq:ach_3users_points_bc_5_1}, a product superposition transmission similar to above is used after exchanging receiver~2 with receiver~3, i.e. using $\T_3, \N_3$ instead of $\T_2, \N_2$, respectively. Thus, for every $\T_3$ slots, receiver~1 achieves
\[
\frac{\T_3}{\T_1} \N_1^{\ast}\left( \T_1 - \N_1^{\ast}\right) - \N_1^{\ast}\left( \min\left\{\M, \max \{ \N_1,\N_3 \}, \T_1 \right\} - \N_1^{\ast}\right)
\]
 degrees of freedom, and furthermore, receiver~3 achieves $\left( \frac{\T_3}{\T_1} - 1\right) \N_1^{\ast} \min\{\M, \N_3, \T_1\}$ degrees of freedom.

Furthermore, we can achieve~\eqref{eq:ach_3users_points_bc_6_1} by the same transmission strategy, yet, with respect to $\T_2$ and $\T_3$. Thus, every $\T_3$ slots, receiver~2 achieves
\[
\frac{\T_3}{\T_2} \N_2^{\ast}\left( \T_2 - \N_2^{\ast}\right) - \N_2^{\ast}\left( \min\left\{\M, \max \{ \N_2,\N_3 \}, \T_2\right\} - \N_2^{\ast}\right)
\]
 degrees of freedom, and furthermore, receiver~3 achieves $\left( \frac{\T_3}{\T_2} - 1\right) \N_2^{\ast} \min\{\M, \N_3, \T_2\}$ degrees of freedom.

Now the remaining degrees of freedom tuple is the one with the three-receiver set $\left\{1, 2, 3\right\}$. In this case, the achievable degrees of freedom tuples are
\begin{align*}
\Bigg( \N_1^{\ast} &\left( 1 - \frac{\N_1^{\ast}}{\T_1} \right) - 
\frac{\N_1^{\ast}\left( \min\left\{\M, \max \{ \N_1,\N_2,\N_3 \}, \T_1\right\} - \N_1^{\ast}\right)}{\T_2}, \\
&\N_1^{\ast} \min\{\M, \N_2, \T_1\} \left( \frac{1}{\T_1} - \frac{1}{\T_2}\right), 
\N_1^{\ast} \min\{\M, \N_3, \T_1\} \left( \frac{1}{\T_2} - \frac{1}{\T_3}\right) \Bigg),
\end{align*}
which can be achieved by product superposition over $\frac{\T_3}{\T_2}$ coherence intervals of receiver~2 (each of length $\T_2$ slots) as follows.
\begin{itemize}
	\item During the coherence interval, the transmitted signal is the same as that 
	      used to achieve~\eqref{eq:ach_3users_points_bc_4_1}. Thus, receiver~1 achieves $\frac{\T_2}{\T_1} \N_1^{\ast}\left( \T_1 - \N_1^{\ast}\right) - \N_1^{\ast}\left( \min\left\{\M, \max \{ \N_1,\N_2 \}, \T_1 \right\} - \N_1^{\ast}\right)$ degrees of freedom, receiver~2 achieves $\left( \frac{\T_2}{\T_1} - 1\right)\N_1^{\ast}\min\{\N_2^{\ast}, \T_1\}$ degrees of freedom, and furthermore, receiver~3 estimates its channel.
	\item During the remaining $\left( \frac{\T_3}{\T_2} - 1\right)$ intervals, the transmitter sends, every $\T_2$-length subinterval, the same signal that achieves~\eqref{eq:ach_3users_points_bc_4_1} after multiplying it from the left by $\W_i$ which contains data for receiver~3. Therefore, during the first $\T_1$ of every $\T_2$-length coherence interval, the transmitted signal is
	      \begin{equation}
          \X^{(123)}= \left[ \W_i, \; \W_i\U_i\right].
        \end{equation}
	      After that during $\left( \frac{\T_2}{\T_1} - 1\right)\T_1$ slots, the transmitted signal is 
	      \begin{equation}
          \widetilde{\X}^{(123)}= \left[ \W_i\V_i, \; \W_i\V_i\U_i\right].
        \end{equation}
        receiver~1 estimates the equivalent channel $\overline{\overline{\H}}_{1,i}=\H_{1,i}\W_i\V_i$, and decodes $\U_i$, receiver~2 estimates $\H_{2,i}=\H_{2,i}\W_i$, and decodes $\V_i$ and receiver~3 decodes $\W_i$. Thus, the receivers achieve, respectively, $\left( \frac{\T_3}{\T_2} - 1 \right)\frac{\T_2}{\T_1}\N_1^{\ast}\left( \T_1 - \N_1^{\ast}\right)$, $\left( \frac{\T_3}{\T_2} - 1\right)\left( \frac{\T_2}{\T_1} - 1\right)\N_1^{\ast} \min\{\M, \N_2, \T_1\}$, and $\left( \frac{\T_3}{\T_2} - 1\right)\N_1^{\ast} \min\{\M, \N_3, \T_1\}$ degrees of freedom.
\end{itemize}

\subsubsection{$\Dset_2\left(\Jset\right)$ achievability}
\label{section:ach_3users_bc_2}

Similar to $\Dset_1\left(\Jset\right)$, we can achieve the degrees of freedom tuples~\eqref{eq:ach_3users_points_bc_1_1} that correspond to the three single-receiver sets by serving only one receiver while the other receivers remain unserved.

The degrees of freedom tuples of the three two-receiver sets are
\begin{align}
&\left( \N_1^{\ast} \left( 1 - \frac{\N_1^{\ast}}{\T_1} \right), 
\N_1^{\ast} \min\left\{ \N_1^{\ast}, \N_2 \right\} \left( \frac{1}{\T_1} - \frac{1}{\T_2}\right), 0 \right), \label{eq:ach_3users_points_bc_4_2}\\
&\left( \N_1^{\ast} \left( 1 - \frac{\N_1^{\ast}}{\T_1} \right), 
0, \N_1^{\ast} \min\left\{ \N_1^{\ast}, \N_3 \right\} \left( \frac{1}{\T_1} - \frac{1}{\T_3}\right) \right), \label{eq:ach_3users_points_bc_5_2} \\
&\left( 0, \N_2^{\ast} \left( 1 - \frac{\N_2^{\ast}}{\T_2} \right), 
\N_2^{\ast} \min\left\{ \N_2^{\ast}, \N_3 \right\} \left( \frac{1}{\T_2} - \frac{1}{\T_3}\right)\right). \label{eq:ach_3users_points_bc_6_2}
\end{align}
To achieve~\eqref{eq:ach_3users_points_bc_4_2}, product superposition is sent over $\frac{\T_2}{\T_1}$ coherence intervals of receiver~1 as follows.
\begin{itemize}
	\item During the first coherence interval, training is sent during $\N_1^{\ast}$ slots and data for receiver~1 is sent during the following $(\T_1 - \N_1^{\ast})$ slots. Receiver~1 achieves $\N_1^{\ast} \left( \T_1 - \N_1^{\ast} \right)$ degrees of freedom, and receiver~2 estimates its channel between $\min\left\{ \N_1^{\ast}, \N_2 \right\}$ transmit antennas.
	\item During the remaining coherence intervals, every $\T_1$ slots, the transmitter sends	      
	      \begin{equation}\label{eq:ach_3users_bc_12_2}
           \X^{(12)}_i= \left[ \V_{i}, \; \V_{i}\U_{i}\right], \quad i=1,
                        \cdots, \frac{\T_2}{\T_1}-1,
        \end{equation}
Thus, receivers achieve $\left( \frac{\T_2}{\T_1} - 1 \right)\N_1^{\ast} \left(\T_1 - \N_1^{\ast}\right)$, and $\left( \frac{\T_2}{\T_1} - 1\right)\N_1^{\ast}\min\left\{ \N_1^{\ast}, \N_2 \right\}$ degrees of freedom, respectively.
\end{itemize}
Thus, by the above product superposition transmission, for every $\T_2$ slots, receiver~1, and receiver~2 achieve $\frac{\T_2}{\T_1} \N_1^{\ast}\left( \T_1 - \N_1^{\ast}\right)$, and $\left( \frac{\T_2}{\T_1} - 1\right)\N_1^{\ast}\min\left\{ \N_1^{\ast}, \N_2 \right\}$ degrees of freedom, respectively, achieving~\eqref{eq:ach_3users_points_bc_4_2}.

For achieving~\eqref{eq:ach_3users_points_bc_5_2}, we use the same transmission scheme of achieving~\eqref{eq:ach_3users_points_bc_4_2} with respect to receiver~1 and receiver~3, i.e. replacing $\T_2, \min\left\{ \N_1^{\ast}, \N_2 \right\}$ with $\T_3, \min\left\{ \N_1^{\ast}, \N_3 \right\}$, respectively. Thus, receiver~1, and receiver~3 achieve $\frac{\T_3}{\T_1} \N_1^{\ast}\left( \T_1 - \N_1^{\ast}\right)$, and $\left( \frac{\T_3}{\T_1} - 1\right)\N_1^{\ast}\min\left\{ \N_1^{\ast}, \N_3 \right\}$ degrees of freedom, respectively, for every $\T_3$ slots. Similarly, we can achieve~\eqref{eq:ach_3users_points_bc_6_2} by the same transmission strategy, yet, with respect to $\T_2$ and $\T_3$.

For the three-receiver set, the achievable degrees of freedom tuples are
\begin{align*}
\Bigg( \N_1^{\ast} & \left( 1 - \frac{\N_1^{\ast}}{\T_1} \right),
\N_1^{\ast} \min\left\{ \N_1^{\ast}, \N_2 \right\} \left( \frac{1}{\T_1} - \frac{1}{\T_2}\right), \twocolbreak
\N_1^{\ast} \min\left\{ \N_1^{\ast}, \N_3 \right\} \left( \frac{1}{\T_2} - \frac{1}{\T_3}\right) \Bigg), 
\end{align*}
which can be achieved by product superposition transmission for the three receivers over $\frac{\T_3}{\T_2}$ coherence intervals of receiver~2 as follows.
\begin{itemize}
	\item During the first coherence interval, the transmitted signal is the same as that used to achieve~\eqref{eq:ach_3users_points_bc_4_2}. Therefore, receiver~3 estimates its channel between $\min\left\{ \N_1^{\ast}, \N_3 \right\}$ transmit antennas, and furthermore, receiver~1, and receiver~2 achieve $\frac{\T_2}{\T_1} \N_1^{\ast}\left( \T_1 - \N_1^{\ast}\right)$, and $\left( \frac{\T_2}{\T_1} - 1\right)\N_1^{\ast}\min\left\{ \N_1^{\ast}, \N_2 \right\}$ degrees of freedom, respectively. 
	\item During the remaining coherence intervals, the transmitter sends, every $\T_2$-length interval, the same signal that achieves~\eqref{eq:ach_3users_points_bc_4_2} after multiplying it from the left by $\W_i$ which contains data for receiver~3. Therefore, during the first $\T_1$ of every $\T_2$-length subinterval, the transmitted signal is
	\begin{equation}
          \X^{(123)}= \left[ \W_i, \; \W_i\U_i\right].
  \end{equation}
	After that during $\left( \frac{\T_2}{\T_1} - 1\right)\T_1$ slots, the transmitted signal is 
	\begin{equation}
          \widetilde{\X}^{(123)}= \left[ \W_i\V_i, \; \W_i\V_i\U_i\right].
  \end{equation}
  Thus, receiver~1 can estimate the equivalent channel $\overline{\overline{\H}}_{1,i}=\H_{1,i}\W_i\V_i$, and decode $\U_i$. Also, receiver~2 can estimate the equivalent channel $\overline{\H}_{2,i}=\H_{2,i}\W_i$, and decode $\V_i$ and furthermore, receiver~3 can decode $\W_i$, achieving, respectively, $\left( \frac{\T_3}{\T_2} - 1 \right)\frac{\T_2}{\T_1}\N_1^{\ast}\left( \T_1 - \N_1^{\ast}\right)$, $\left( \frac{\T_3}{\T_2} - 1\right)\left( \frac{\T_2}{\T_1} - 1\right)\N_1^{\ast}\min\left\{ \N_1^{\ast}, \N_2 \right\}$, and $\left( \frac{\T_3}{\T_2} - 1\right)\N_1^{\ast}\min\left\{ \N_1^{\ast}, \N_3 \right\}$ degrees of freedom. 
\end{itemize}
\subsection{Achievability for $\K$ Receivers}
\label{section:ach_kusers_bc}

To obtain the achievability for the $\K$-receiver case, we show that for every set of receivers $\Jset \subseteq \left[ 1:\K \right]$, ordered ascendingly according to the coherence times length, the degrees of freedom tuples $\Dset_1\left(\Jset\right)$ and $\Dset_2\left(\Jset\right)$ are achievable. We use an induction argument in our proof as follows. The achievability when $\Jset$ has only three receivers was shown in Section~\ref{section:nonidentical_ach_proof_bc}. The remainder of the proof is dedicated to show that for arbitrary set of receivers, $\Jset\subset \left[ 1:\K \right]$ where the receivers are ordered ascendingly according to the coherence times length, the product superposition achieves the degrees of freedom tuples $\Dset_1\left(\Jset\right)/\Dset_2\left(\Jset\right)$, we can achieve the degrees of freedom tuple $\Dset_1(\widetilde{\Jset})/\Dset_2(\widetilde{\Jset})$, where $\widetilde{\Jset}\subseteq \left[ 1:\K \right]$ is the set constructed by adding one more receiver to the set $\Jset$ where the length of the added receiver coherence time is an integer multiple of $j_{\max}$. To complete the proof we need to show that product superposition achieves the degrees of freedom tuples $\Dset_1(\widetilde{\Jset})/\Dset_2(\widetilde{\Jset})$ for the set $\widetilde{\Jset}$. The following Lemma addresses this part of the proof.

\begin{lemma}\label{lemma:nonidentical_ach_bc}
For the broadcast channel considered in Section~\ref{section:nonidentical_bc}, define $\X_o \in \mathbb{C}^{\T_{\tau} \times \T_o}$ to be a pilot-based transmitted signal during $\T_o$ slots where a training matrix $\X_{\tau} \in \mathbb{C}^{\T_{\tau} \times \T_{\tau}}$ is sent during $\T_{\tau}$ slots and then the data is sent during $(\T_o-\T_{\tau})$ slots achieving the degrees of freedom tuple $\mathcal{D}^{\text{(o)}} =\left( d_1^{\text{(o)}}, d_2^{\text{(o)}}, \cdots, d_\J^{\text{(o)}}\right)$ for $\J$ receivers. We are able to achieve $\mathcal{D}^{\text{(o)}}$ for the $\J$ receivers in addition to $\left(\frac{\T_{\epsilon}}{\T_o}-1\right)\frac{\T_{\tau}\min\left\{\T_{\tau}, \N_{\epsilon}^{\ast}\right\}}{\T_{\epsilon}}$ to a receiver $\epsilon$ with $\T_{\epsilon}$-length coherence time and $\N_{\epsilon}$ receive antennas, where $\frac{\T_\epsilon}{\T_o} \in {\mathbb Z}$.
\end{lemma}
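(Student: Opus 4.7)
The plan is to invoke product superposition over one coherence block of receiver $\epsilon$. I would partition the $\T_\epsilon$ slots into $\T_\epsilon/\T_o$ consecutive subintervals of length $\T_o$, indexed $i = 0, 1, \ldots, \T_\epsilon/\T_o - 1$. Throughout this span the channel of $\epsilon$ stays constant while the channels of the other $\J$ receivers can change on their own coherence grids. The signal transmitted in each subinterval will either be the baseline $\X_o$ or a linear superposition of new data for $\epsilon$ with $\X_o$.

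First, in subinterval $i = 0$, I would transmit $\X_o$ unmodified. The $\J$ receivers then see exactly the baseline scheme and, by hypothesis, collect their portion of $\mathcal{D}^{(o)}$ over those $\T_o$ slots. Meanwhile, receiver $\epsilon$, whose channel is constant over all $\T_\epsilon$ slots, reads the known pilot $\X_\tau$ during the first $\T_\tau$ slots of this subinterval and uses those observations to estimate its channel (say $\H_\epsilon$, in the case $\X_\tau = \I_{\T_\tau}$). No data for $\epsilon$ is sent in this first block; it is consumed entirely as training for $\epsilon$.

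Next, for $i = 1, \ldots, \T_\epsilon/\T_o - 1$, I would transmit $\W_i \X_o$, where $\W_i \in \mathbb{C}^{\T_\tau \times \T_\tau}$ is a fresh matrix of i.i.d.\ Gaussian data symbols intended for receiver $\epsilon$. In the pilot phase of subinterval $i$, receiver $\epsilon$ observes $\H_\epsilon \W_i + \Z_\epsilon$ over $\T_\tau$ slots; because $\H_\epsilon$ was already learned in subinterval $0$, decoding $\W_i$ reduces to a standard coherent MIMO problem with $\T_\tau$ transmit streams, $\N_\epsilon$ receive antennas and $\T_\tau$ symbol periods, yielding $\T_\tau \min\{\T_\tau, \N_\epsilon^\ast\}$ new data symbols per subinterval. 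Any one of the $\J$ other receivers $j$ sees $(\H_j \W_i) \X_o + \Z_j$: the pre-multiplier $\W_i$ is independent of $\H_j$ and almost surely invertible, so the composite matrix $\H_j \W_i$ is itself a full-rank random channel, statistically nondegenerate. Receiver $j$ therefore applies exactly the original pilot-and-decode procedure, first estimating the effective channel $\H_j \W_i$ from the columns of $\W_i \X_\tau$ and then decoding the data columns coherently through that estimate, reproducing its share of $\mathcal{D}^{(o)}$ within that subinterval.

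Summing over the $\T_\epsilon/\T_o$ subintervals and normalizing by $\T_\epsilon$ yields $\mathcal{D}^{(o)}$ for the $\J$ receivers and $\bigl(\T_\epsilon/\T_o - 1\bigr) \T_\tau \min\{\T_\tau, \N_\epsilon^\ast\}/\T_\epsilon$ for receiver $\epsilon$, matching the claim. The main delicacy will be the preservation step for the other $\J$ receivers: I must argue carefully that premultiplying $\X_o$ by $\W_i$ does not cost them any degrees of freedom. The needed facts are that $\W_i$ is independent of the downstream channels and almost surely full rank, so $\H_j \W_i$ has the same rank and high-SNR capacity scaling as $\H_j$, and that left multiplication preserves the pilot/data column partition of $\X_o$, so the original noncoherent receiver architecture can be reused verbatim on the composite channel.
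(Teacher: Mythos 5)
Your proposal is correct and follows essentially the same route as the paper's proof: one clean subinterval of $\X_o$ used as training for receiver~$\epsilon$, followed by product superposition $\W_i\X_o$ (the paper writes $\P\X_o$ with $\P\in\mathbb{C}^{\M\times\T_\tau}$) in the remaining $\frac{\T_\epsilon}{\T_o}-1$ subintervals, with the other $\J$ receivers absorbing the premultiplier into an equivalent full-rank channel and rerunning the baseline pilot-and-decode procedure unchanged. The only cosmetic difference is your choice of a square $\T_\tau\times\T_\tau$ premultiplier, which makes the $\T_\tau\min\{\T_\tau,\N_\epsilon^\ast\}$ count for receiver~$\epsilon$ slightly more transparent but does not change the argument.
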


\begin{proof}
This can be achieved by the following product superposition transmission over $\frac{\T_{\epsilon}}{\T_o}$ coherence intervals of length $\T_o$ slots.
\begin{itemize}
	\item During the first coherence interval, the transmitted signal is 
	      $\X_o$. Thus, $\mathcal{D}^{\text{(o)}}$ degrees of freedom tuple is achieved for the $\J$ receivers and no degrees of freedom for receiver~$\epsilon$, yet, it estimates its channel between $\min\left\{\N_{\epsilon}^{\ast}, \T_{\tau} \right\}$ transmission antennas.
	\item During the remaining coherence intervals, every $\T_o$ slots, the transmitter sends
  \begin{equation}
           \widetilde{\X}_o=\P\X_o,
  \end{equation}
  where $\P \in \mathbb{C}^{\M \times \T_{\tau}}$ contains data for receiver~$\epsilon$. $\X_o$ contains the training matrix $\X_{\tau}$, hence, receiver~$\epsilon$ can decode $\P$, using its channel estimate. Furthermore, the $\J$ receivers estimate their equivalent channels and decode their data during $(\T_o - \T_{\tau})$ slots. Thus, $\J$ receivers achieve $\left( \frac{\T_{\epsilon}}{\T_o} - 1\right)\mathcal{D}^{\text{(o)}}$ degrees of freedom tuple, and furthermore, receiver~$\epsilon$ achieves $\left( \frac{\T_{\epsilon}}{\T_o} - 1\right) \T_{\tau} \min\left\{\N_{\epsilon}^{\ast}, \T_{\tau}\right\}$ degrees of freedom.
\end{itemize}
Thus, in $\T_{\epsilon}$ slots, $\J$ receivers achieve $\frac{\T_{\epsilon}}{\T_o} \mathcal{D}^{\text{(o)}}$ degrees of freedom, and furthermore, receiver~$\epsilon$ achieves $\left( \frac{\T_{\epsilon}}{\T_o} - 1\right) \T_{\tau} \min\left\{\N_{\epsilon}^{\ast}, \T_{\tau}\right\}$ degrees of freedom which completes the proof of Lemma~\ref{lemma:nonidentical_ach_bc}.
\end{proof}
Using Lemma~\ref{lemma:nonidentical_ach_bc} the second part of the proof is completed, and hence, the proof of Theorem~\ref{theorem:nonidentical_ach_bc} is completed. 

\section{General Coherence Times}
\label{section:general_coherence}

In this section, we study a $\K$-receiver broadcast channel with general coherence times. An achievable degrees of freedom region is obtained, where the coherence times have arbitrary ratio or alignment.\footnote{Coherence times, as is required in a block fading model in a time-sampled domain, continue to take positive integer values.} 

\subsection{Unaligned Coherence Times}
\label{section:unaligned_coherence}
\begin{figure}
\center
\includegraphics[width=\Figwidth]{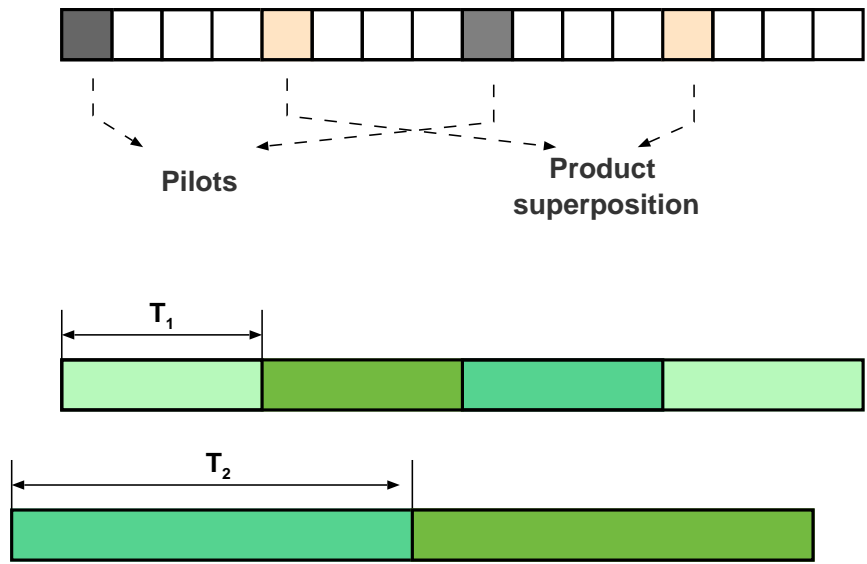}
\caption{Product superposition transmission for unaligned coherence times, where $\T_2 = 2 \T_1 = 6$.}
\label{fig:PS_unaligned}
\end{figure}

\begin{figure}
\center
\includegraphics[width=\Figwidth]{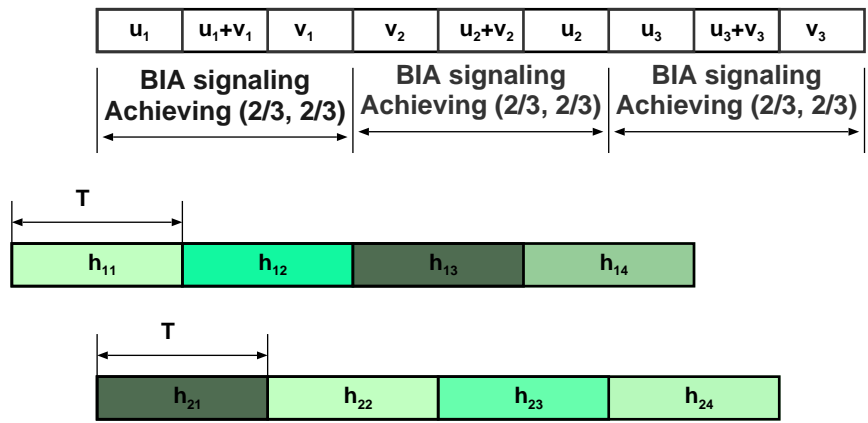}
\caption{Blind interference alignment for staggered coherence times with CSIR, where $\T_1=\T_2=2$. Receiver~1 cancels $\h_{1i}^H \v$, and decodes $\u$, whereas receiver~2 cancels $\h_{2i}^H \u$, and decodes $\v$ achieving $\left( \frac{2}{3}, \frac{2}{3} \right)$ degrees of freedom pair.}
\label{fig:BIA_CSIR}
\end{figure}
In this section, we relax the assumption on the alignment of coherence intervals.
Consider a broadcast channel with $\K$ receivers where the coherence times are integer multiple of each other, i.e. $\frac{\T_k}{\T_{k-1}} \in {\mathbb Z}$. The coherence times have arbitrary alignment, meaning that there could be an offset between the transition times of the coherence intervals of different receivers. Recall that in the case of aligned coherence intervals, product superposition provided the achievable degrees of freedom  region in~\eqref{eq:ach_bc_conv_hull}. The receiver with longer coherence time reuses some of the unneeded pilots and achieves gains in degrees of freedom without affecting the receivers with shorter coherence times. Under unaligned coherence times the same gains in degrees of freedom are available with product superposition. Using the transmitted signal given in Section~\ref{section:nonidentical_ach_proof_bc}, the longer coherence times include the same number of unneeded pilot sequences regardless of the alignment. These unneeded pilot sequences can be reused by product superposition transmission, achieving degrees of freedom gain. For instance, consider two receivers with $\M = 2, \N_1 = \N_2 = 1, \T_1 = 4, \T_2 = 8$, with an offset of one transmission symbol as shown in Fig.~\ref{fig:PS_unaligned}. We can achieve the degrees of freedom pair $(\frac{3}{4}, \frac{1}{8})$ via a transmission strategy over pairs of coherence intervals for receiver~1, as follows.
\begin{itemize}
	\item In the odd coherence intervals for receiver~1, one pilot is transmitted during which both receivers estimate their channels. In the  3 remaining time slots of this interval, data is transmitted for receiver~1.
	\item In the even coherence intervals, during the first time slot a product superposition is transmitted providing one degree of freedom for receiver~2 (whose channel has not changed) while allowing receiver~1 to renew the estimate of his channel. The three remaining time slots provide 3 further degrees of freedom for receiver~1.
\end{itemize}
Thus, in 8 time slots, receiver~1 achieves 6 degrees of freedom  and receiver~2 achieves 1. This is the same ``corner point'' that is obtained in the aligned scenario, noting that the nature of the algorithm is not changed, only the position of the pilot transmission must be carefully chosen while keeping in mind the transition points of the block fading.
\subsection{Unaligned Coherence Times with Perfect Symmetry (Staggered)}
\label{section:staggered_coherence}

We now consider a special case of two-receiver unaligned coherence times where the transition of each coherence interval is exactly in the middle of the other coherence interval. This special case is motivated by the blind interference alignment model in Fig.~\ref{fig:BIA_CSIR} that was considered in \cite{Jafar_blind}, and for easy reference we call this configuration a {\em staggered} coherence times.
 
 We follow the example of blind interference alignment~\cite{Jafar_blind}: a 2-receiver broadcast channel with $\M=2, \N_1 = \N_2 =1$. As shown in Fig.~\ref{fig:BIA_pilots}, the transitions of the longer coherence interval occur at the middle of the shorter coherence interval. Based on the discussion in Section~\ref{section:unaligned_coherence}, product superposition can obtain degrees of freedom gain for the staggered scenario. In~\cite{Jafar_blind}, blind interference alignment achieved degrees of freedom pair $\left( \frac{2}{3}, \frac{2}{3}\right)$ while ignoring the cost of CSIR, which is a key part of our analysis. To allow comparison and synergy, we analyze a version of blind interference alignment with channel estimation shown in Fig.~\ref{fig:BIA_pilots}. The gain of blind interference alignment comes from the staggering of the coherence time, whereas the source of product superposition gain is reusing the unneeded pilots with respect to the longer coherence times. Therefore, we can give a transmission scheme that uses both blind interference alignment and product superposition over $\frac{\T_2}{\T_1}$ coherence intervals of receiver~1, as shown in Fig.~\ref{fig:BIA_PS}.
\begin{itemize}
	\item During the first coherence interval, two pilots are sent in the middle of the interval. Receiver~1 estimates its channel during this interval, whereas receiver~2 estimates its channel as these two pilots are located at the first, and the last time slots of its coherence interval.
	\item Blind interference alignment signaling is sent during the remaining $\left( \T_1 - 2 \right)$ time slots of first interval and the first $\frac{1}{2} \left( \T_1 - 2 \right)$ time slots of the second interval. Hence, the degrees of freedom pair $\left( \left( \T_1 - 2\right), \left( \T_1 - 2\right)\right)$ is achieved.
	\item Product superposition signaling is sent during the remaining $\frac{1}{2} \left( \T_1 - 2 \right)$ time slots of the second interval. Receiver~1 estimates its channel of the second coherence interval, and furthermore, achieves further $\frac{1}{2} \left( \T_1 - 2 \right)$, whereas, receiver~2 achieves $2$ further degrees of freedom.
	\item Furthermore, during the remaining $\left( \frac{\T_2}{\T_1} - 2\right)$ receiver~1 coherence intervals, product superposition signaling is sent achieving the degrees of freedom pair
	\begin{equation}
    \left( \left( \frac{\T_2}{\T_1} -2\right) \left( \T_1 - 1 \right), \left( \frac{\T_2}{\T_1} -2\right) \right)
	\end{equation}
	is achieved.
\end{itemize}
Thus, the above transmission scheme obtain the degrees of freedom pair 
\begin{equation}
\left( 1 - \frac{1}{\T_1} - \frac{1}{\T_2} - \frac{\T_1}{2\T_2}, \frac{\T_1}{\T_2} + \frac{1}{\T_1} - \frac{2}{\T_2} \right).
\end{equation}
Furthermore, product superposition transmission only can achieve the degrees of freedom pair $\left( 1 - \frac{1}{\T_1}, \frac{1}{\T_2} - \frac{1}{\T_1} \right)$. Hence, the achievable degrees of freedom is the convex hull of the degrees of freedom pairs achieved by blind interference alignment, product superposition, and combining blind interference alignment with product superposition.
\begin{figure}
\center
\includegraphics[width=\Figwidth]{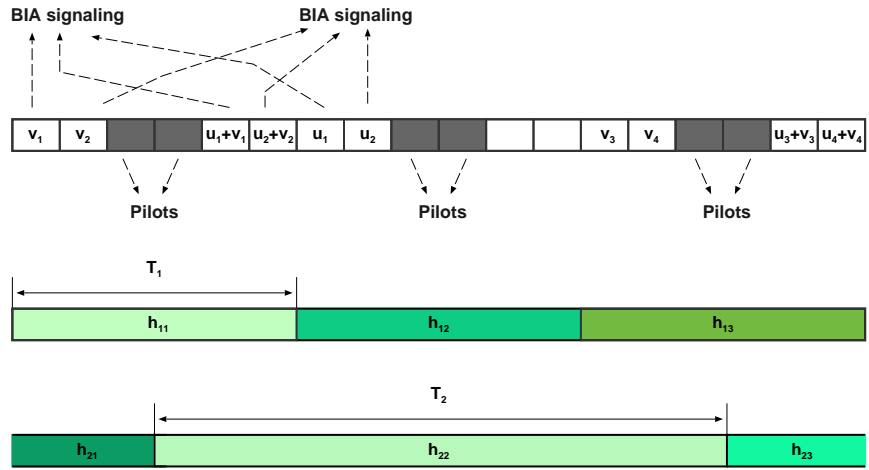}
\caption{Blind interference alignment with pilot transmission, where $\T_2=2\T_1=12$.}
\label{fig:BIA_pilots}
\end{figure}

\begin{figure}
\center
\includegraphics[width=\Figwidth]{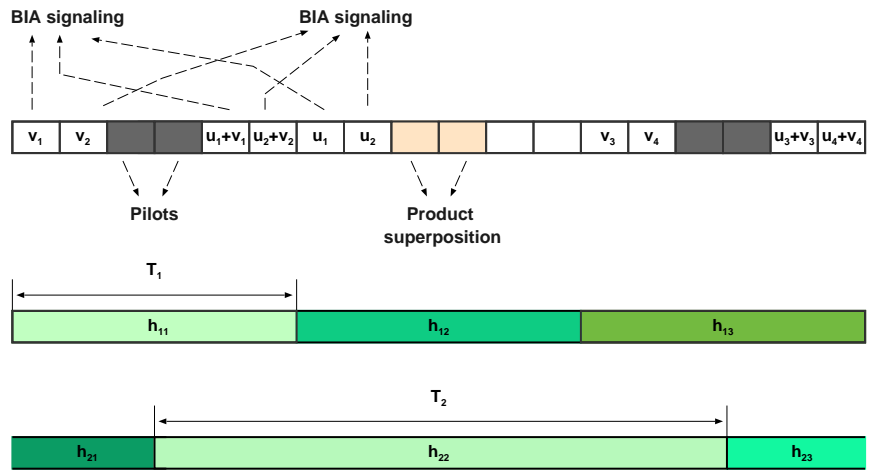}
\caption{Combining blind interference alignment with product superposition, where $\T_2=2\T_1=12$.}
\label{fig:BIA_PS}
\end{figure}

\subsection{Arbitrary Coherence Times}
\label{section:rational_coherence}
\begin{theorem}
\label{theorem:rational_ach_bc}
Consider a $\K$-receiver broadcast channel without CSIT or CSIR having heterogeneous coherence times, where the coherence times are allowed to take any positive integer value. Product superposition can achieve the degrees of freedom tuple defined in~\eqref{eq:nonidentical_ach_bc_2}.
\end{theorem}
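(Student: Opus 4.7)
The plan is to adapt the recursive product superposition construction of Lemma~\ref{lemma:nonidentical_ach_bc} --- originally stated under the integer-ratio hypothesis --- to the setting of arbitrary positive-integer coherence times. The key observation is that the construction underlying $\Dset_2(\Jset)$, as opposed to $\Dset_1(\Jset)$, uses pilots of a single common dimension $\N_{j_{\min}}^{\ast}$ placed at every fading transition, and the integer-ratio hypothesis was only used to describe the exact nesting of those pilot positions. Replacing the nested placement by a placement at actual transition times extends the scheme to arbitrary ratios with essentially no change in the rate formula.

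I would proceed by induction on $|\Jset|$, ordering the receivers of $\Jset$ ascendingly by $\T_j$. The base case $|\Jset|=1$ achieves the single-receiver rate $\N_{j_{\min}}^{\ast}(1-\N_{j_{\min}}^{\ast}/\T_{j_{\min}})$ via standard pilot-based transmission. For the inductive step, assume a scheme achieves $\Dset_2(\Jset)$, and extend it to include a new receiver $\epsilon$ with $\T_\epsilon>\T_{j_{\max}}$ by left-multiplying the inner transmitted signal by a precoding matrix $\P\in\mathbb{C}^{\M\times\N_{j_{\min}}^{\ast}}$ carrying data for receiver $\epsilon$. The matrix $\P$ is held constant within each coherence interval of receiver $\epsilon$ and is varied across them. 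Each inner receiver absorbs the left multiplication into an equivalent-channel estimate using the pilot it already receives, preserving the rates guaranteed by the inductive hypothesis.

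The additional rate to receiver $\epsilon$ is counted as follows. Between consecutive transitions of $\epsilon$, the inner scheme re-emits its pilot block roughly $\T_\epsilon/\T_{j_{\max}}$ times; only the first of these is actually needed for $\epsilon$'s channel estimation, so the remaining surplus pilot slots form a fraction $(1/\T_{j_{\max}}-1/\T_\epsilon)$ of the time axis. Each surplus pilot slot carries $\P$, delivering $\min\{\N_\epsilon,\N_{j_{\min}}^{\ast}\}$ symbols per slot to receiver $\epsilon$. Per unit time, this yields $\N_{j_{\min}}^{\ast}\min\{\N_\epsilon,\N_{j_{\min}}^{\ast}\}(1/\T_{j_{\max}}-1/\T_\epsilon)$ degrees of freedom, matching~\eqref{eq:nonidentical_ach_bc_2} with $j-1$ identified as the previous receiver in the ordered set.

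The main obstacle is the boundary behavior that appears when $\T_\epsilon/\T_{j_{\max}}$ is not an integer: each coherence interval of receiver $\epsilon$ contains a non-integer number of inner pilot blocks, and transitions of $\epsilon$ can fall at interior positions of an inner block. I would resolve this exactly as in Section~\ref{section:unaligned_coherence}, namely by shifting pilot positions so that a pilot is placed at every transition of every active receiver in $\Jset\cup\{\epsilon\}$, with each shifted pilot playing the dual role of inner-refresh pilot and new-receiver pilot. Over a long observation window of length $L$, the per-interval fractional-alignment loss is bounded independently of $L$, so it contributes $o(1)$ in the normalized degrees of freedom. Taking $L\to\infty$ completes the induction and establishes the achievability of $\Dset_2(\Jset)$ for arbitrary positive-integer $\T_j$, proving Theorem~\ref{theorem:rational_ach_bc}.
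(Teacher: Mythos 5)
Your proposal reaches the right scheme but by a genuinely different route from the paper. You extend the inductive construction of Lemma~\ref{lemma:nonidentical_ach_bc} (add the longest-coherence receiver $\epsilon$ by left-multiplying the inner signal, count surplus clean pilot blocks as the fraction $1/\T_{j_{\max}}-1/\T_\epsilon$) and then patch the non-integer-ratio boundary effects with a pilot-repositioning argument and an $o(1)$ claim over a long window. The paper instead abandons the induction for this theorem: it works over a single period of $\prod_{i\in\Jset}\T_i$ slots, in which every receiver's fading pattern is exactly periodic, places one pilot block of length $\N_{j_{\min}}^{\ast}$ per coherence interval of $j_{\min}$, and does a global count --- receiver $j$ has $\prod_{i\neq j}\T_i$ coherence intervals and hence needs that many clean pilot blocks, so after excluding the blocks already reused by receivers between $j_{\min}$ and $j$ it can reuse exactly $(\T_j-\T_{j-1})\prod_{i\notin\{j,j-1\}}\T_i$ of them, which normalizes to $\N_{j_{\min}}^{\ast}\min\{\N_{j_{\min}}^{\ast},\N_j\}(1/\T_{j-1}-1/\T_j)$. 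The paper's counting buys exactness with no asymptotic limit and no repositioning of pilots; your induction buys modularity (it literally reuses Lemma~\ref{lemma:nonidentical_ach_bc}) at the cost of having to argue that the fraction of coherence intervals of each receiver that fail to contain a complete clean pilot block is negligible --- a point your $o(1)$ claim asserts but does not establish, since the misalignment pattern is itself periodic and a constant fraction of ``bad'' intervals would shift the DoF.

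One concrete error to fix: you state that $\P$ is ``held constant within each coherence interval of receiver $\epsilon$ and varied across them.'' Taken literally, that delivers only $\N_{j_{\min}}^{\ast}\min\{\N_\epsilon,\N_{j_{\min}}^{\ast}\}$ symbols per $\T_\epsilon$ slots, i.e., a rate of order $1/\T_\epsilon$, not the claimed $1/\T_{j_{\max}}-1/\T_\epsilon$. Your subsequent rate accounting (``each surplus pilot slot carries $\P$'') presupposes the correct scheme --- a fresh $\P_i$ on every surplus clean pilot block, with the first clean block in each coherence interval of $\epsilon$ left unmodulated for channel estimation, exactly as in Lemma~\ref{lemma:nonidentical_ach_bc} --- so the description should be corrected to match.
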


\begin{remark}
Blind interference alignment signaling can be sent at the location of the staggering coherence times achieving degrees of freedom gain. Hence, similar to the case of staggered coherence times with integer ratio in Section~\ref{section:staggered_coherence}, product superposition can be combined with blind interference alignment increasing the achievable degrees of freedom region.
\end{remark}

\begin{proof}
For clarity of explanation, we start by giving the achievable scheme for 3 receivers with $\N_k = \N \leq \min\left\{ \M, \left\lfloor \frac{\T_1}{2} \right \rfloor \right\}, \forall k$ over $\T_2 \T_3$ coherence intervals of receiver~1.
\begin{itemize}
	\item For every coherence interval, a pilot sequence of length $\N$ slots, and receiver~1 data of length $\T_1 - \N$ slots are sent, achieving $\N \left( \T_1 - \N \right)$ degrees of freedom for receiver~1.
 \item The number of pilot sequences of length $\N$ is $\T_2 \T_3$. Having coherence time $\T_2$, receiver~2 needs only $\T_1 \T_3$ pilot sequences for channel estimation. Hence, produced superposition can be sent during $( \T_2 \T_3 - \T_1 \T_3 )$ pilot sequences to send data for receiver~2 achieving $\N \T_3 (\T_2 - \T_1)$ degrees of freedom.
 \item Furthermore, receiver~3 needs only $\T_1 \T_2$ pilot sequences for channel estimation, and hence, data signal for receiver~3 can be sent during $( \T_2 \T_3 - \T_1 \T_2)$ pilot sequences via product superposition. Product superposition uses $( \T_2 \T_3 - \T_1 \T_3 )$ pilot sequences to send data for receiver~2, and hence, receiver~3 can only reuse $( \T_2 \T_3 - \T_1 \T_2) - ( \T_2 \T_3 - \T_1 \T_3 ) = \T_1( \T_3- \T_2 )$ pilot sequences achieving $\N \T_1 (\T_3 - \T_2)$ degrees of freedom.
\end{itemize}
Thus, normalized over $\T_1\T_2\T_3$ time slots, we can achieve the degrees of freedom tuple
\begin{equation}
\left( \N \left( 1 - \frac{\N}{\T_1} \right), \N^2 \left( \frac{1}{\T_1} - \frac{1}{\T_2} \right), \N^2 \left( \frac{1}{\T_2} - \frac{1}{\T_3} \right) \right).
\end{equation}

Now, we give the proof for arbitrary number of receivers, and general antenna setup. For a set of receiver $\Jset \subseteq \left[ 1:\K \right]$ having $\J$ receiver where, $\frac{\T_j}{\T_{j-1}} \in \Qset, \ j \in \Jset$, the degrees of freedom tuple~\eqref{eq:nonidentical_ach_bc_2} can be obtained over $ \prod_{i=2}^\J \T_i$ coherence intervals of receiver~$j_{\min}$.
\begin{itemize}
	\item For every interval, a pilot sequence of length $\N_{j_{\min}}^{\ast}$ slots, and data of length 
				$\left( \T_{j_{\min}} - \N_{j_{\min}}^{\ast}\right)$ for receiver~$j_{\min}$ are sent, achieving $\N_{j_{\min}}^{\ast} \left( \T_{j_{\min}} - \N_{j_{\min}}^{\ast} \right)$ degrees of freedom.
 \item The number of pilot sequences of length $\N_{j_{\min}}^{\ast}$ slots is $\prod_{i=2}^\J \T_i$. Receiver~$j \neq j_{\min}$, with coherence time $\T_j$, can estimate the channel of $\min \left\{ \N_{j_{\min}}^{\ast}, \N_j \right\}$ 
       transmit antennas using $\prod_{i=1, i\neq j }^\J \T_i$ pilot sequences. Excluding the pilots reused by receivers $\left\{ j_{\min}+1, \cdots, j-1 \right\}$ to send data by product superposition transmission, data for receiver~$j$ can be sent via product superposition during $ (\T_j - \T_{j-1} ) \prod_{i=1, i\notin \{j, j-1\}}^\J \T_i$ pilots obtaining the 
			 degrees of freedom $\N_{j_{\min}}^{\ast} \min \left\{ \N_{j_{\min}}^{\ast}, \N_j \right\} (\T_j - \T_{j-1} ) \prod_{i=1, i\notin \{j, j-1\}}^\J \T_i$.
\end{itemize}
Thus, the proof of Theorem~\ref{theorem:rational_ach_bc} is completed.
\end{proof}

\section{Multiple Access Channel with Identical Coherence Times}
\label{section:identical_mac}

Consider a $\K$-transmitter MIMO multiple access channel without CSIT or CSIR, where transmitter~$k$ is equipped with $\M_k$ antennas, and the receiver is equipped with $\N$ antennas. The received signal at the discrete time $n$ can be given by
\begin{equation}\label{eq:received_mac}
\y(n) = \sum_{k=1}^\K \overline{\H}_k(n) \x_k(n) + \z(n), 
\end{equation}
where $\x_k(n) \in \mathbb{C}^{\M_k \times 1}$ is transmitter~$k$ signal, $\z(n) \in \mathbb{C}^{\N \times 1}$ is the i.i.d. Gaussian additive noise and $\overline{\H}_k(n) \in \mathbb{C}^{\N \times \M_k}$ is transmitter~$k$ Rayleigh block-fading channel matrix with coherence time $\T_k$~\cite{Marzetta_capacity}. We study the case when $\T_k \geq 2\N, \forall k$~\cite{Zheng_communication}.

Assume that all transmitters have identical coherence times, $\T$. In the sequel, we define a degrees of freedom achievable region based on a pilot-based scheme in Section~\ref{section:identical_ach_mac}. Furthermore, an outer degrees of freedom region is given in Section~\ref{section:identical_outer_mac} based on the cooperative bound. Some numerical examples are given in Section~\ref{section:identical_examples_mac} where it is shown that the achievable degrees of freedom region is tight against sum degrees of freedom.

\subsection{Achievability}
\label{section:identical_ach_mac}

\begin{theorem}
\label{theorem:identical_ach_mac}
Consider a $K$-transmitter MIMO multiple access channel without CSIT or CSIR, meaning that the channel realization is not known, but the channel distribution is globally known. If the transmitters have identical coherence times, namely $\T$, then for every ordered set of transmitters, $\Jset=\left\{ k_1, k_2, \cdots, k_J\right\} \subseteq \left[ 1:\K \right]$, we can achieve the set of degrees of freedom tuples $\Dset(\Jset):$
\begin{equation}\label{eq:identical_ach_mac}
d_j = \M'_j \left( 1 - \frac{\sum_{j \in \Jset}{\M'_j}}{\T}\right), \quad j \in \Jset,
\end{equation}
where $\M'_j = \min{\left\{ \M_j , \left[ N - \sum_{m=1}^{j-1}\M'_{k_m}\right]^+\right\}}$, and $\T \geq 2\N$. The achievable degrees of freedom region is the convex hull of the degrees of freedom tuples, $\Dset(\Jset)$, over all the $\sum_{i=1}^\K \frac{\K!}{(\K-i)!}$ possible ordered sets $\Jset \subseteq \left[ 1: \K\right]$, i.e.,
\begin{align}
\mathcal{D} = \Big\{ \left( \d_1, \cdots, \d_\K \right) \in \text{Co} \left( \Dset(\Jset) \right), 
                      \forall {\Jset \subseteq \left[ 1: \K \right]} \Big\}.
\end{align}
\end{theorem}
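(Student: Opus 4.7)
The plan is to exhibit an explicit pilot-based transmission scheme and argue it achieves the claimed tuple for each ordered set $\Jset$, followed by time-sharing to convexify. The structure of the expression $\M'_j = \min\{\M_j,\,[\N-\sum_{m=1}^{j-1}\M'_{k_m}]^+\}$ already dictates the scheme: the ordering on $\Jset$ acts as a priority list, and each active transmitter $k_m$ turns on exactly $\M'_{k_m}$ of its antennas so that the cumulative number of active antennas reaches but never exceeds $\N$. Writing the shorthand $s=\sum_{j\in\Jset}\M'_j\le \N$, the receiver will be able to estimate and then resolve all $s$ active antenna streams.

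First, I would describe the transmission within one coherence block of length $\T$. During the initial $s$ time slots, the $s$ active antennas (across all transmitters in $\Jset$) transmit an orthonormal pilot pattern: for concreteness one dedicated slot per antenna, so the overall pilot matrix is $\sqrt{\rho}\,\I_s$ appropriately embedded into the $\sum_k \M_k \times s$ input block. This is executable without transmitter cooperation, because each transmitter only needs to know its own position in the globally-agreed ordering and its own antenna count. The composite effective channel $\H_{\mathrm{eff}}\in\mathbb{C}^{\N\times s}$ then has $s\le \N$ columns and is almost surely full column rank, so the receiver forms a consistent MMSE estimate of every entry of $\H_{\mathrm{eff}}$ from the pilot phase.

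Next, during the remaining $\T-s$ time slots of the coherence block, each active antenna transmits i.i.d.\ circularly symmetric Gaussian data symbols at power $\rho$, giving $s$ simultaneous data streams. Using the pilot-based estimate, the receiver decodes coherently; the standard training-based degrees-of-freedom argument (Hassibi--Hochwald~\cite{Hassibi_how}, Zheng--Tse~\cite{Zheng_communication}) shows that the MMSE error scales as $1/\rho$, which only introduces an $\littleO(\log \rho)$ rate penalty, so each of the $s$ streams delivers $\log\rho+\littleO(\log\rho)$ bits per data slot. Assigning the $\M'_j$ streams originating at transmitter $j$ to its own message and normalizing over the full block of $\T$ slots yields
\begin{equation*}
\d_j \;=\; \M'_j\left(1-\frac{s}{\T}\right) \;=\; \M'_j\left(1-\frac{\sum_{i\in\Jset}\M'_i}{\T}\right),
\end{equation*}
which is exactly the claimed tuple. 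The hypothesis $\T\ge 2\N\ge 2s$ ensures that $\T-s>0$, so the DoF is strictly positive whenever $\M'_j>0$. Finally, taking the convex hull over all $\sum_{i=1}^{\K} \K!/(\K-i)!$ ordered subsets $\Jset$ is realized by standard time-sharing between protocols corresponding to different orderings.

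The work here is essentially bookkeeping rather than a conceptual obstacle: the only thing that has to be checked with care is that the cumulative cap defining $\M'_j$ correctly enforces $s\le \N$, so that the receiver can separate the $s$ active streams by zero-forcing (or MMSE) on the estimated $\H_{\mathrm{eff}}$. Transmitters with $\M'_j=0$ (which occur once the cumulative sum saturates at $\N$) simply deactivate and can be dropped from $\Jset$ without changing the rate point. Because the ordering in $\Jset$ is arbitrary, the scheme realizes every vertex of the desired inner bound, and the convex hull in the theorem statement follows immediately by time-sharing.
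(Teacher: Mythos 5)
Your proposal is correct and matches the paper's own proof: both use the identical pilot-based scheme in which the ordered set acts as a priority list for antenna activation, orthogonal pilots occupy the first $\sum_{j\in\Jset}\M'_j$ slots of each coherence block, and the receiver zero-forces the $s\le\N$ estimated streams over the remaining $\T-s$ slots, with time-sharing giving the convex hull. The only difference is that you spell out the training-based DoF bookkeeping (MMSE error scaling, $\littleO(\log\rho)$ penalty) a bit more explicitly than the paper does.
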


\begin{proof}
We show that a simple pilot-based scheme can achieve the above achievable degrees of freedom region. Assume that we have an ordered set of transmitters $\Jset=\left\{ k_1, \cdots, k_J\right\} \subseteq \left[ 1:\K \right]$. In order to achieve the degrees of freedom tuple in~\eqref{eq:identical_ach_mac}, we can use the following transmission scheme over coherence interval of length $\T$ time slots. 
\begin{itemize}
  \item During the first $\sum_{j \in \Jset}{\M'_j}$ slots of the coherence interval, a pilot sequence is sent, where transmitter~$j$ sends $\M'_j$ pilots. The receiver estimates the channel of the corresponding $\sum_{j \in \Jset}{\M'_j}$ transmit antennas.
	\item During the remaining $\left(\T-\sum_{j \in \Jset}{\M'_j}\right)$ slots, simultaneously, $\M'_j \left( \T - \sum_{j \in \Jset}{\M'_j} \right)$ data matrix is sent from transmitter~$j$.	Hence, the receiver, using $\sum_{j \in \Jset}{\M'_j}$ antennas, can invert the channel and decode the transmitted signal.
\end{itemize}
Therefore, every $\T$ period, transmitter~$j \in \Jset$ can achieve $\M'_j \left( \T - \sum_{j \in \Jset}{\M'_j} \right)$ degrees of freedom, and hence~\eqref{eq:identical_ach_mac} is obtained.
\end{proof}

\subsection{Outer Bound}
\label{section:identical_outer_mac}

For the considered $\K$-transmitter multiple access channel with identical coherence times, namely $\T$, the cooperative bound~\cite{Sato_outer} can be given by~\cite{Gamal_network}
\begin{equation}\label{eq:identical_outer_mac}
\sum_{j \in \Jset} \R_j \leq I \left( X\left( \Jset \right); Y | X\left(\Jset^c\right) \right), \quad \forall 
\Jset \subseteq \left[ 1:\K \right].
\end{equation}
An outer bound on the degrees of freedom region is~\cite{Zheng_communication},
\begin{align}\label{eq:identical_outer_mac_1}
\sum_{j \in \Jset} & \d_j \leq \min\left\{\N, \sum_{j \in \Jset} \M_j \right\} \left( 1 - \frac{\min\left\{ \N, \sum_{j \in \Jset} \M_j \right\}}{\T}\right), \quad \forall \Jset \subseteq \left[ 1:\K \right].
\end{align}

\begin{figure}
\center
\includegraphics[width=\Figwidth]{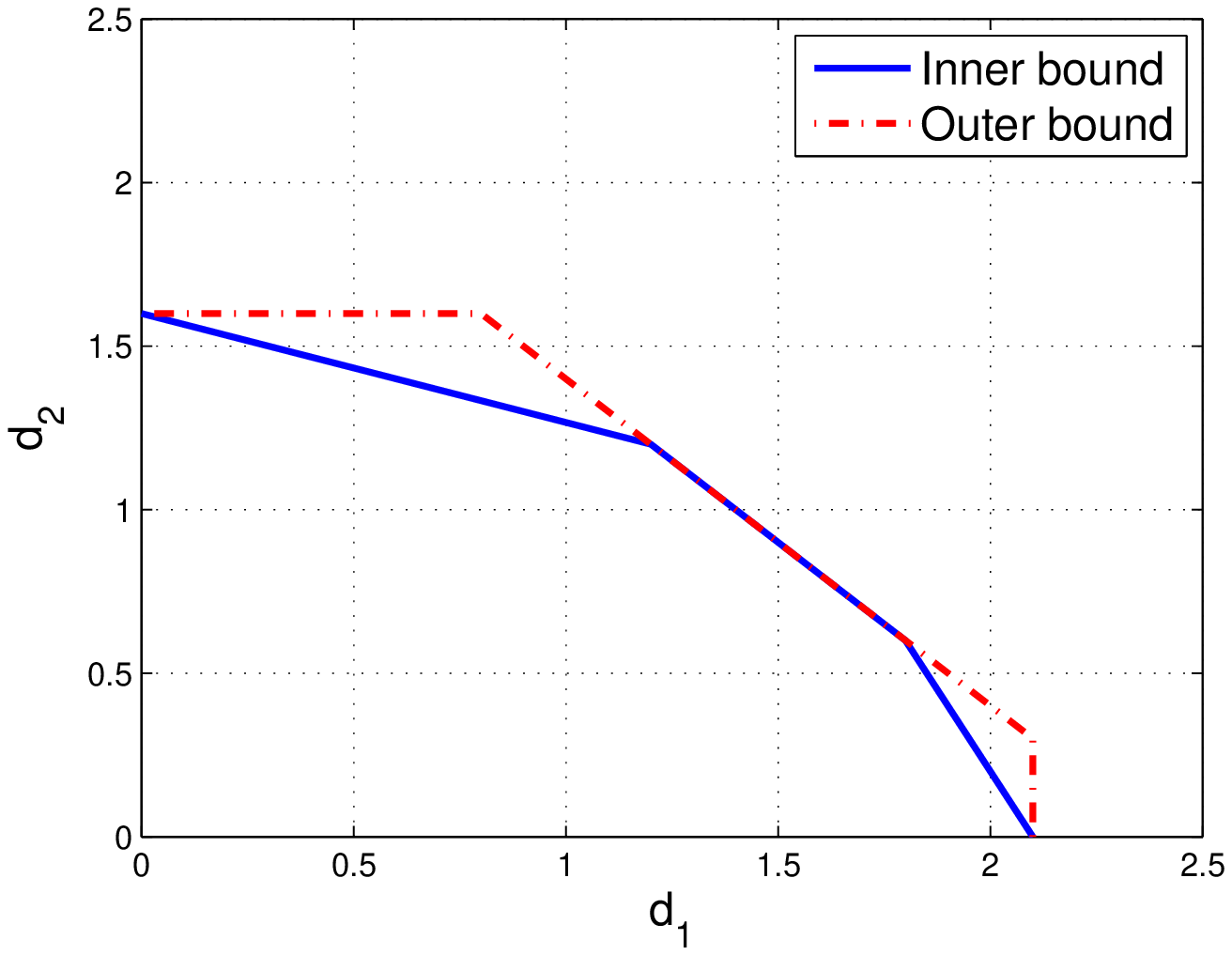}
\caption{Degrees of freedom region of a two-transmitter multiple access channel with identical coherence times $\T=10$, and $\M_1=3, \M_2=2, \N=4$.}
\label{fig:identical_2users_mac_1}
\end{figure}

\subsection{Numerical Examples}
\label{section:identical_examples_mac}

\begin{figure}
\center
\includegraphics[width=\Figwidth]{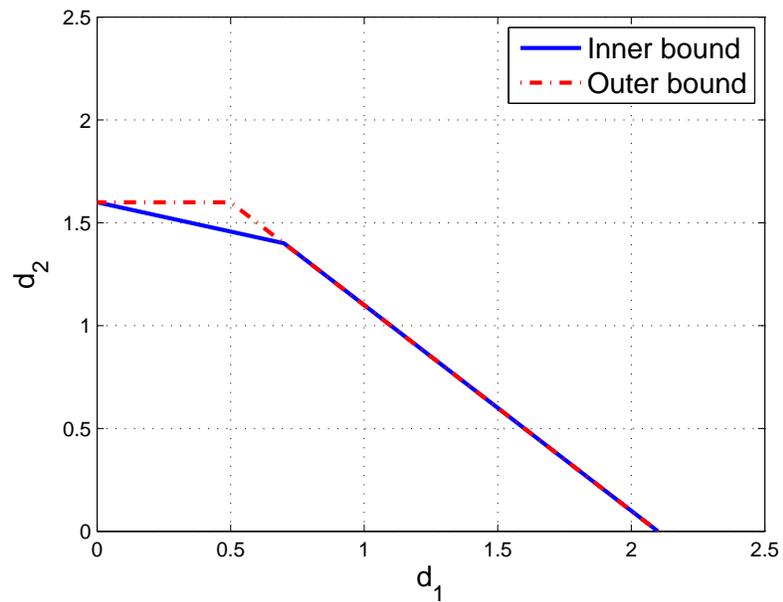}
\caption{Degrees of freedom region of a two-transmitter multiple access channel with identical coherence times $\T =10$, and $\M_1=4, \M_2=2, \N=3$.}
\label{fig:identical_2users_mac_2}
\end{figure}

Consider a two-transmitter multiple access channel with $\M_1=3, \M_2=2, \N=4, \T=10$. The outer degrees of freedom region is given by 
\begin{align*}
      \d_1        &\leq \frac{21}{10}, \nonumber\\
      \d_2        &\leq \frac{16}{10}, \nonumber\\
      \d_1 + \d_2 &\leq \frac{24}{10}.
\end{align*}
The achievable degrees of freedom pairs in Theorem~\ref{theorem:identical_ach_mac} can be obtained as follows. For the case of two transmitters, there are 5 ordered sets of transmitters $\Jset$: $\left\{ \right\}, \left\{1\right\}, \left\{2\right\}$, $\left\{1, 2\right\}$ and $\left\{2, 1\right\}$. For $\left\{ \right\}$, the trivial degrees of freedom pair $\left(0,0\right)$ can be obtained. For the two sets $\left\{1\right\}, \left\{2\right\}$, the degrees of freedom pairs $\left(\frac{21}{10},0\right)$ and $\left(0,\frac{16}{10}\right)$, respectively, can be obtained. For the two sets $\left\{1, 2\right\}$ and $\left\{2, 1\right\}$, the degrees of freedom pairs $\left(\frac{18}{10},\frac{6}{10}\right)$ and $\left(\frac{12}{10},\frac{12}{10}\right)$, respectively, can be obtained. The convex hull of the achieved degrees of freedom pairs gives the achievable degrees of freedom region which is tight against the sum degrees of freedom as shown in Fig~\ref{fig:identical_2users_mac_1}.

Consider a two-transmitter multiple access channel with $\M_1=4, \M_2=2, \N=3, \T=10$. As shown in Fig.~\ref{fig:identical_2users_mac_2}, the achievable degrees of freedom regions are tight against the sum degrees of freedom.

\section{Multiple Access Channel with Heterogeneous Coherence Times}
\label{section:nonidentical_mac}

Consider the multiple access channel defined in~\eqref{eq:received_mac} where there is no CSIT or CSIR. Consider the case where the receivers coherence times are perfectly aligned and integer multiples of each others, i.e., $\forall k, \frac{\T_k}{\T_{k-1}}\in {\mathbb Z}$. In the sequel, we give an achievable, and an outer degrees of freedom regions in Section~\ref{section:nonidentical_ach_mac} and Section~\ref{section:nonidentical_outer_mac}, respectively. Furthermore, some numerical examples are given in Section~\ref{section:nonidentical_examples_mac} to demonstrate the achievable and the outer degrees of freedom regions.

\subsection{Achievability}
\label{section:nonidentical_ach_mac}

\begin{theorem}
\label{theorem:nonidentical_ach_mac}
Consider a $K$-transmitter MIMO multiple access channel without CSIT or CSIR, meaning that the channel realization is not known, but the channel distribution is globally known. Furthermore, the transmitters coherence times are assumed to be perfectly aligned and integer multiples of each other. Define $\Jset=\left\{ i_1, \cdots, i_J\right\} \subseteq \left[ 1:\K \right]$ to be a set of $\J$ transmitters where $\forall j \in \Jset, \frac{\T_j}{\T_{j-1}}\in {\mathbb Z}$. Define $\breve{\Jset}=\left\{ k_1, \cdots, k_J\right\}$ to be one of the $\J!$ possible ordered sets of $\Jset$. If $\T_k \geq 2\N, \forall k$, we can achieve the set of degrees of freedom tuples $\Dset(\breve{\Jset}):$
\begin{align}\label{eq:nonidentical_ach_mac}
d_j = \M'_j \sum_{m=1}^\J{\left( \T_{i_1} - \sum_{n=1}^m{\M'_{i_n}}\right)\left( \frac{1}{\T_{i_m}} - \frac{1}{\T_{i_{m+1}}} \right)},
\end{align}
where $\M'_j = \min{\left\{ \M_j , \left[ N - \sum_{m=1}^{j-1}\M'_{k_m}\right]^+\right\}}$, and, for notational convenience, we introduce the trivial random variable $\T_{i_{\J+1}}$, i.e.,  $\frac{1}{\T_{i_{\J+1}}}=0$. Hence, the achievable degrees of freedom region is the convex hull of the degrees of freedom tuples, $\Dset(\Jset)$, over all the $\sum_{i=1}^\K \frac{\K!}{(\K-i)!}$ possible ordered sets $\breve{\Jset} \subseteq \left[ 1: \K\right]$, i.e.,
\begin{align}
\mathcal{D} = \Big\{ \left( \d_1, \cdots, \d_\K \right) \in \text{Co} \left( \Dset(\breve{\Jset}) \right), 
                      \forall {\breve{\Jset} \subseteq \left[ 1: \K \right]} \Big\}.
\end{align}
\end{theorem}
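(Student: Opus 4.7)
The plan is to generalize the pilot-based achievable scheme of Theorem~\ref{theorem:identical_ach_mac} by letting each transmitter refresh its own pilots on its own coherence cycle, then take the convex hull over orderings via time-sharing. Fix an ordered set $\breve{\Jset}=\{k_1,\ldots,k_\J\}$ and the associated allocations $\M'_j=\min\{\M_j,\,[\N-\sum_{m<j}\M'_{k_m}]^+\}$; by construction $\sum_{m=1}^{\J}\M'_{i_m}\le\N$. I operate on a superblock of length $\T_{i_\J}$, the longest coherence time in $\Jset$. The integer-ratio assumption $\T_{i_m}/\T_{i_{m-1}}\in\mathbb{Z}$ places all coherence boundaries on a common integer grid, and transmitter~$i_m$ completes exactly $\T_{i_\J}/\T_{i_m}$ coherence intervals inside the superblock.

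The pilot schedule is the heart of the construction. At the start of each of its own coherence intervals, transmitter~$i_m$ sends $\M'_{i_m}$ orthogonal pilot symbols while every other transmitter stays silent during those slots. When several transmitters start a new coherence interval simultaneously---the extreme case being the first slot of the superblock, where all $\J$ of them do---their pilots are placed in consecutive, disjoint time slots. The worst-case pilot load crammed inside a single $\T_{i_1}$-interval is then $\sum_{m=1}^{\J}\M'_{i_m}\le\N\le\T_{i_1}/2$, so the schedule fits under $\T_k\ge 2\N$, and by the end of each coherence window the receiver owns clean estimates of the $\sum_m \M'_{i_m}$ active channel columns.

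Accounting over the superblock yields $\sum_{m=1}^{\J}\M'_{i_m}\T_{i_\J}/\T_{i_m}$ pilot slots and $\T_{i_\J}\bigl(1-\sum_{m=1}^{\J}\M'_{i_m}/\T_{i_m}\bigr)$ data slots. In each data slot all $\J$ transmitters simultaneously emit $\M'_j$ independent streams each; because the aggregate stream count is at most $\N$, the receiver zero-forces with its $\N$ antennas and decodes, delivering normalized degrees of freedom
\begin{equation*}
d_j=\M'_j\Bigl(1-\sum_{m=1}^{\J}\frac{\M'_{i_m}}{\T_{i_m}}\Bigr).
\end{equation*}
A short telescoping identity, with $1/\T_{i_{\J+1}}=0$, rewrites this as $\M'_j\sum_{m=1}^{\J}(\T_{i_1}-\sum_{n=1}^{m}\M'_{i_n})(1/\T_{i_m}-1/\T_{i_{m+1}})$, matching~\eqref{eq:nonidentical_ach_mac}. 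Time-sharing between the schemes built for the different ordered sets $\breve{\Jset}\subseteq[1:\K]$ then yields the claimed convex hull.

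The step most likely to need care is feasibility of the pilot schedule: I must verify that pilots of distinct transmitters never collide in time and that every pilot burst requested inside a $\T_{i_1}$-window actually fits. Both reduce to the bound $\sum_m\M'_{i_m}\le\N\le\T_{i_1}/2$, which in turn is immediate from the definition of $\M'_j$ together with $\T_k\ge 2\N$. Everything else---the high-SNR zero-forcing argument and the telescoping rearrangement---parallels the identical-coherence proof and is routine.
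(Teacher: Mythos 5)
Your proposal is correct and is essentially the scheme in the paper: pilots are sent only for the transmitters whose channels transition at each boundary (which, by the aligned integer-ratio structure, is always a prefix of the coherence-ordered list), the receiver zero-forces the $\sum_m \M'_{i_m}\leq \N$ streams in the remaining slots, and time-sharing over ordered sets gives the convex hull. The only difference is bookkeeping---you tally the total pilot overhead directly as $\sum_m \M'_{i_m}/\T_{i_m}$ and invoke the Abel/telescoping identity to match~\eqref{eq:nonidentical_ach_mac}, whereas the paper counts data slots interval-type by interval-type; the identity you cite does hold, so the two expressions agree.
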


\begin{proof}
By time-sharing between the transmission schemes that achieve the degrees of freedom tuples $\Dset(\breve{\Jset})$, we can construct the achievable degrees of freedom region which is the convex hull of the achieved degrees of freedom tuples. The remainder of the proof is dedicated to show the achievability of the degrees of freedom tuple in~\eqref{eq:nonidentical_ach_mac} using the following transmission scheme over $\frac{\T_{i_\J}}{\T_{i_1}}$ coherence intervals of transmitter $i_1$.
\begin{itemize}
	\item During the first coherence interval, $\sum_{j\in \Jset}^\J \M'_j$ pilots are sent to estimate $\M'_j$ antennas of transmitter~$j$, and hence, during the following $\left(\T_{i_1}-\sum_{j\in \Jset}^\J \M'_j\right)$ time slots, the transmitters can communicate coherently achieving $\M'_j \left( \T_{i_1}-\sum_{j\in \Jset}^\J \M'_j\right)$ degrees of freedom for transmitters~$j \in \Jset$.
	\item During the remaining intervals, when the index of the interval is $\l \frac{\T_{i_m}}{\T_{i_1}}+1$, where $m=2, \cdots \J-1$ and $\l=1, \cdots, \frac{\T_{i_m+1}}{\T_{i_m}}-1, \frac{\T_{i_m+1}}{\T_{i_m}}+1, \cdots, 2\frac{\T_{i_m+1}}{\T_{i_m}}-1, \frac{\T_{i_m+1}}{\T_{i_m}}+1, \cdots, \frac{\T_{i_\J}}{\T_{i_m}}-1$, the channel of transmitter~$j=i_1, \cdots, i_m$ needs to be estimated, whereas the channel of transmitter~$j=i_m+1, \cdots, \J$ stays the same. Hence, $\sum_{n=1}^m \M'_{i_n}$ pilots are sent to estimate $\M'_j$ antennas of transmitters~$j=i_1, \cdots, i_m$. After that, during the following $\left(\T_{i_1}-\sum_{n=1}^m \M'_{i_n}\right)$ slots, the transmitters can communicate coherently achieving $\M'_j \left( \T_{i_1}-\sum_{n=1}^m \M'_{i_n}\right)$ degrees of freedom for transmitters~$j \in \Jset$. The number of intervals with index $k\frac{\T_m}{\T_{m-1}}+1$ is $\sum_{m=2}^\J \left( \frac{\T_{i_{m+1}}}{\T_{i_m}} - 1 \right)\frac{\T_{i_\J}}{\T_{i_{m+1}}}$. 
  \item For the intervals of length $\T_{i_1}$ with index not equal to $\l \frac{\T_{i_m}}{\T_{i_1}}+1$, the channels of all transmitters remain the same except the channel of transmitter~$i_1$. Hence, $\M'_{i_1}$ pilots are sent to estimate the channel of transmitter~$i_1$, after that the transmitters can communicate coherently during the following $\left(\T_{i_1} - \M'_{i_1}\right)$ slots, achieving $\M'_j \left( \T_{i_1} - \M'_{i_1}\right)$ degrees of freedom for transmitter~$j \in \Jset$. The number of the intervals with index not equal to $\l \frac{\T_{i_m}}{\T_{i_1}}+1$ is $\left( \frac{\T_{i_2}}{\T_{i_1}} - 1 \right)\frac{\T_{i_\J}}{\T_{i_2}}$.
\end{itemize}
Thus, transmitter~$j \in \Jset$ achieves $\M'_j \sum_{m=1}^\J \left( \T_{i_1}-\sum_{n=1}^m \M'_{i_n} \right) \left( \frac{1}{\T_{i_m}} - \frac{1}{\T_{i_{m+1}}} \right)\T_{i_\J}$ degrees of freedom over $\T_{i_\J}$ slots, obtaining~\eqref{eq:nonidentical_ach_mac} which completes the proof of Theorem~\ref{theorem:nonidentical_ach_mac}. 
\end{proof}

\subsection{Outer Bound}
\label{section:nonidentical_outer_mac}

\begin{theorem}
\label{theorem:nonidentical_outer_mac}
Consider a $K$-transmitter MIMO multiple access channel without CSIT or CSIR, meaning that the channel realization is not known, but the channel distribution is globally known. Furthermore, the transmitters coherence times are assumed to be perfectly aligned and integer multiples of each other. Define $\Jset=\left\{ i_1, \cdots, i_\J \right\} \subseteq \left[ 1:\K \right]$ to be a set of $\J$ transmitters where $\frac{\T_j}{\T_{j-1}}\in {\mathbb Z}, \ \T_j \geq 2\N, \ \forall j \in \Jset$. For every $\Jset \subseteq \left[ 1:\K \right]$, if a set of degrees of freedom tuples $\left( \d_{i_1}, \cdots, \d_{i_\J} \right)$ is achievable, then it must satisfy the inequalities
\begin{equation}\label{eq:nonidentical_outer_mac}
\sum_{j \in \Jset} \d_j \leq \min\left\{ \N, \sum_{j \in \Jset} \M_j \right\} \left( 1 - \frac{\min\left\{ \N, \sum_{j \in \Jset} \M_j \right\}}{\T_{i_\J}}\right).
\end{equation}
\end{theorem}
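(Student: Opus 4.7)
The plan is to parallel the two-step outer-bound argument used for the broadcast case in Theorem~\ref{theorem:nonidentical_outer_bc}: first enhance the channel so that every transmitter in the active set $\Jset$ shares a common coherence time, and then invoke the outer bound already established for the identical coherence-times MAC in Section~\ref{section:identical_mac}.

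Concretely, fix $\Jset=\{i_1,\ldots,i_\J\}\subseteq[1:\K]$ with $\T_{i_1}\le\cdots\le\T_{i_\J}$ and integer ratios. The first step is to prove a MAC analogue of Lemma~\ref{lemma:enhance}: lengthening the coherence time of every transmitter in $\Jset$ up to $\T_{i_\J}$ cannot shrink the capacity region, and hence cannot reduce the degrees of freedom region. The intuition is that any code for the original channel can be emulated inside the enhanced channel by having transmitter $i_m$ (together with the receiver) treat each $\T_{i_\J}$-length block as a concatenation of $\T_{i_\J}/\T_{i_m}$ sub-blocks of length $\T_{i_m}$, on each of which an independent fresh realization of $\overline{\H}_{i_m}$ is drawn. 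Since the joint distribution of channel and noise restricted to such sub-blocks matches the original channel exactly, and the resampling schedule is a deterministic function of the publicly known block structure, any rate tuple achievable in the original channel remains achievable in the enhanced one, so $\mathcal{D}(\Jset)\subseteq\overline{\mathcal{D}}(\Jset)$.

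Once the enhancement is in place, every transmitter in $\Jset$ has coherence time $\T_{i_\J}\ge 2\N$, so the cooperative outer bound~\eqref{eq:identical_outer_mac}, combined with the standard point-to-point non-coherent degrees-of-freedom bound~\cite{Zheng_communication} applied to the ``super-transmitter'' with $\sum_{j\in\Jset}\M_j$ antennas, delivers
\begin{equation*}
\sum_{j\in\Jset}\d_j \;\leq\; \min\!\left\{\N,\sum_{j\in\Jset}\M_j\right\}\!\left(1-\frac{\min\{\N,\sum_{j\in\Jset}\M_j\}}{\T_{i_\J}}\right),
\end{equation*}
which is exactly~\eqref{eq:nonidentical_outer_mac}.

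The main obstacle I expect is the clean justification of the enhancement step. Unlike the broadcast setting, the enhancement here acts on the \emph{transmitter-side} coherence times, and one must check (i) that the emulated channel in the enhanced system has the same joint input/output distribution as the original one, and (ii) that no unintended CSIT is created by the resampling schedule (it should not be, since the schedule depends only on the publicly known block structure, not on any realization of $\overline{\H}_k$). Both claims are essentially consequences of the independence of fading realizations across blocks in the block-fading model, but they should be stated with care, for instance by constructing the emulation as a Markov chain from the original channel to the enhanced one and using a standard code-transfer argument to conclude $\mathcal{C}(\rho)\subseteq\mathcal{C}_{\rm enh}(\rho)$ before taking degrees-of-freedom limits.
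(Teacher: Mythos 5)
Your overall architecture --- enhance the coherence times of all transmitters in $\Jset$ up to $\T_{i_\J}$, then apply the cooperative outer bound together with the non-coherent point-to-point degrees-of-freedom result of \cite{Zheng_communication} to the resulting identical-coherence-time MAC --- is exactly the paper's, and your second step is correct as stated.

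The gap is in the enhancement step, which is where all the work lies. Your proposed emulation asks the enhanced system to treat each length-$\T_{i_\J}$ block as $\T_{i_\J}/\T_{i_m}$ sub-blocks ``on each of which an independent fresh realization of $\overline{\H}_{i_m}$ is drawn.'' But in the enhanced channel the fading matrix is, by definition, \emph{constant} over the entire length-$\T_{i_\J}$ block; neither the transmitter nor the receiver has any mechanism for redrawing it. The only way to simulate re-randomization is for transmitter $i_m$ to premultiply its input in each sub-block by an independent isotropically distributed matrix, but the resulting effective channels $\overline{\H}_{i_m}\U_1, \overline{\H}_{i_m}\U_2, \ldots$ are only \emph{marginally} distributed like fresh Rayleigh draws: they are statistically dependent across sub-blocks (they share the singular values of $\overline{\H}_{i_m}$), so the emulated channel does not have the same joint input/output law as the original and the code-transfer argument does not close. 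You correctly flag this as the main obstacle but then assert it follows from ``independence of fading realizations across blocks,'' which is precisely the property the enhanced channel lacks. The paper avoids emulation entirely: Lemma~\ref{lemma:enhance_mac} is proved (via Lemma~\ref{lemma:nondec_T_mac}) by sandwiching the multiplexing gain of $\frac{1}{\bar n}I\left(\X_{i_1}^n,\cdots,\X_{i_\J}^n;\Y^n\right)$ between explicit lower and upper bounds, each non-decreasing in the $\T_j$'s, and deducing monotonicity of the degrees of freedom in the coherence times from there. To complete your proof you would need either that sandwiching argument or a genuinely different justification of $\mathcal{D}\left(\Jset\right) \subseteq \mathcal{\overline{D}}\left(\Jset\right)$; the emulation route as sketched does not work.
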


\begin{proof}
The proof is divided into two parts. First, we enhance the channel by increasing the coherence times of the receivers so that the enhanced channel has identical coherence times.
\begin{lemma}\label{lemma:enhance_mac}
For the considered $K$-transmitter MIMO multiple access channel, define $\mathcal{D}\left(\Jset\right)$ to be the degrees of freedom region of a set of transmitters $\Jset= \left\{ i_1, \cdots, i_\J \right\} \subseteq \left[ 1:\K \right]$ with $\frac{\T_j}{\T_{j-1}}\in {\mathbb Z}, \forall j \in \Jset$. Define $\mathcal{\overline{D}}\left(\Jset\right)$ to be the degrees of freedom region of the same set of transmitters $\Jset= \left\{ i_1, \cdots, i_\J \right\} \subseteq \left[ 1:\K \right]$ with $\T_j= \T_{i_J}, \forall j \in \Jset$, where the transmitters have identical coherence times, namely $\T_{i_\J}$. Thus, we have
\begin{equation}
\mathcal{D}\left(\Jset\right) \subseteq \mathcal{\overline{D}}\left(\Jset\right) 
\end{equation}
\end{lemma}
\begin{proof}
See Appendix~\ref{appendix:enhance_mac}.
\end{proof}
Now we show the second part of the proof. The enhanced channel has identical coherence times, namely $\T_{i_\J}$, hence, the cooperative outer bound~\cite{Sato_outer} bound is~\cite{Gamal_network},
\begin{equation}
\sum_{j \in \Jset} \R_j \leq I \left( \X \left(\Jset\right) ; \Y | \X \left( \Jset^c\right)\right).
\end{equation}
According to the results of non-coherent communication in~\cite{Zheng_communication}, the bound in~\eqref{eq:nonidentical_outer_mac} can be obtained, and the proof of Theorem~\ref{theorem:nonidentical_outer_mac} is completed.
\end{proof}

\subsection{Numerical Examples}
\label{section:nonidentical_examples_mac}

\begin{figure}
\center
\includegraphics[width=\Figwidth]{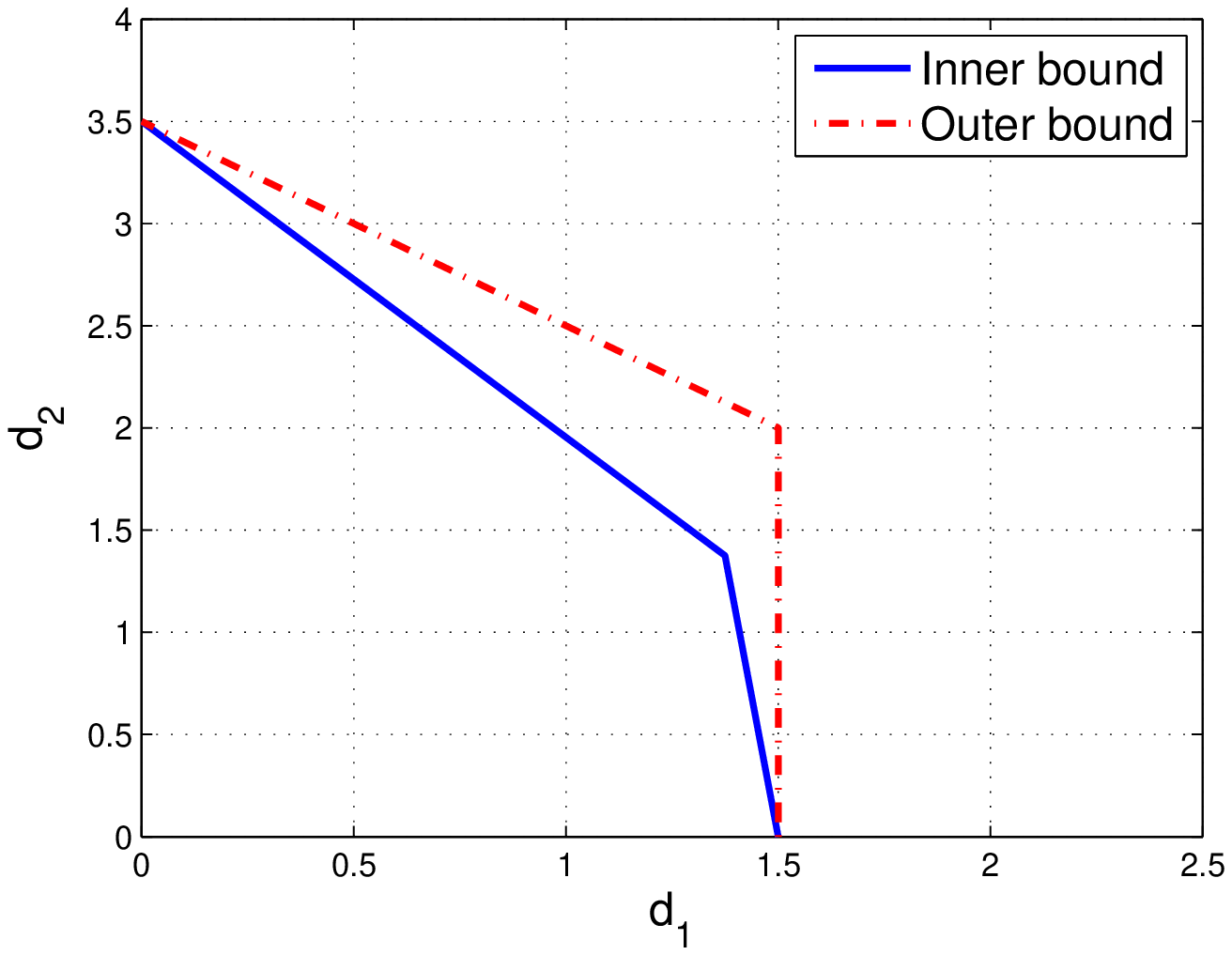}
\caption{Degrees of freedom region of a two-transmitter multiple access channel with heterogeneous coherence times $\T_1=8, \T_2=32$ and $\M_1=2, \M_2=4, \N=4$.}
\label{fig:nonidentical_2users_mac_1}
\end{figure}

\begin{figure}
\center
\includegraphics[width=\Figwidth]{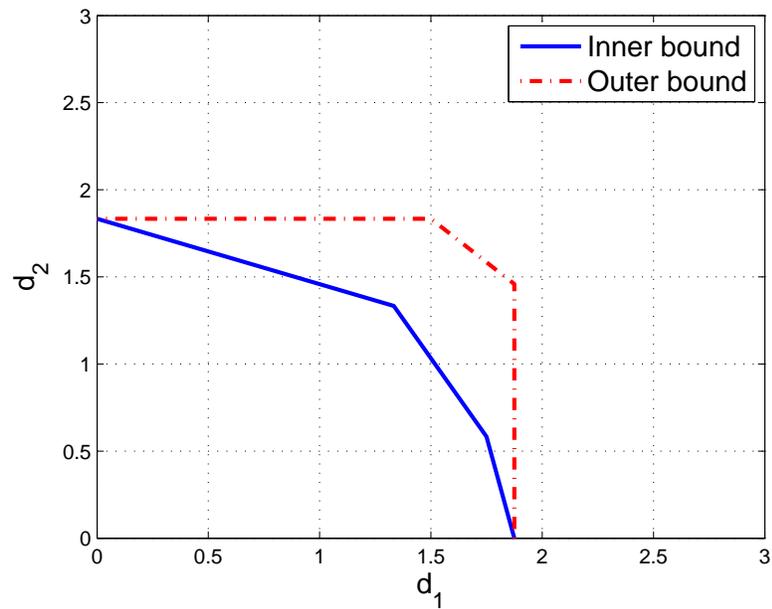}
\caption{Degrees of freedom region of a two-transmitter multiple access channel with heterogeneous coherence times $\T_1=8, \T_2=24$ and $\M_1=3, \M_2=2, \N=4$.}
\label{fig:nonidentical_2users_mac_2}
\end{figure}

Consider a two-transmitter multiple access channel with $\M_1=2, \M_2=\N=4, \T_1=8, \T_2=32$. From Theorem~\ref{theorem:nonidentical_outer_mac}, the outer degrees of freedom region is given by 
\begin{align}
      \d_1        &\leq \frac{3}{2}, \nonumber\\
      \d_1 + \d_2 &\leq \frac{7}{2}.
\end{align}
The achievable degrees of freedom pairs in Theorem~\ref{theorem:nonidentical_ach_mac} can be obtained as follows. There are 5 ordered sets of transmitters $\Jset$: $\left\{ \right\}, \left\{1\right\}, \left\{2\right\}$, $\left\{1, 2\right\}$ and $\left\{2, 1\right\}$. For $\left\{ \right\}$, the trivial degrees of freedom pair $\left(0,0\right)$ can be obtained. For the two sets $\left\{1\right\}, \left\{2\right\}$, the degrees of freedom pairs $\left(\frac{3}{2},0\right)$ and $\left(0,\frac{7}{2}\right)$, respectively, can be obtained. For the two sets $\left\{1, 2\right\}$ and $\left\{2, 1\right\}$, the degrees of freedom pairs $\left(\frac{11}{8},\frac{11}{8}\right)$ and $\left(0,\frac{7}{2}\right)$, respectively, can be obtained. The convex hull of the achieved degrees of freedom pairs gives the achievable degrees of freedom region which is shown in Fig~\ref{fig:nonidentical_2users_mac_1}.

Next, consider a two-transmitter multiple access channel with $\M_1=3, \M_2=2, \N=4, \T_1=8, \T_2=24$. In this case, the achievable and the outer degrees of freedom regions are shown in Fig.~\ref{fig:nonidentical_2users_mac_2}.

\section{Conclusion and Discussion}
\label{section:conclusion}

In this paper, multi-user networks without CSIT or CSIR are studied. For a broadcast channel where the receivers have identical coherence times, it was shown that the degrees of freedom region is tight against TDMA. However, when the receivers have heterogeneous coherence times, TDMA is no longer optimal since the difference of the coherence times can be a source of degrees of freedom gain, called coherence diversity. For a broadcast channel where the receivers coherence times are integer multiples of each other, achievable degrees of freedom gains were obtained using the product superposition scheme. Furthermore, an outer degrees of freedom region was obtained using channel enhancement where the receivers coherence times were increased so that the receivers of the enhanced channel have identical coherence times. As long as the coherence time is at least twice the number of transmit and receive antennas, the optimality of the achievable scheme was shown in four cases: when the transmitter has fewer antennas than the receivers, when all receivers have the same number of antennas, when the coherence times of the receivers are very long compared to the coherence time of one receiver, or the receivers have the same coherence times. For general coherence times that can be unaligned, product superposition transmission was extended, achieving achievable degrees of freedom region. Furthermore, a transmission scheme that combines product superposition and blind interference alignment was introduced in the staggered case. 

Multiple access channel with identical coherence times is studied, where a pilot-based achievable scheme was shown to be sum degrees of freedom optimal. Furthermore, a multiple access channel with heterogeneous coherence times is considered. When the transmitters coherence times are integer multiples of each other, an achievable pilot-based inner bound and an outer bound were obtained. The outer bound was obtained using channel enhancement where the transmitters coherence times were increased so that the transmitters of the enhanced channel have identical coherence times.

\appendices

\section{Coherent Broadcast Channel With Identical Coherent Times}
\label{appendix:identical_bc}

In the sequel, we show the degrees of freedom optimality of TDMA inner bound when the receivers have identical coherence times and CSI is assumed to be available at the receiver. We enhance the channel by providing global CSI at the receivers. Without loss of generality, assume that $\N_1 \leq \cdots \leq \N_\K$. When $ \M \leq \N_1$, the cooperative outer bound~\cite{Sato_outer} is tight against the TDMA inner bound. When $ \M > \N_1$, the broadcast channel is degraded~\cite{Bergmans_random}, hence,
\begin{align}
\R_i \leq & I \left( U_i ; \Y_i | \Hset, U^{i-1}  \right) \nonumber \\
       =  & I \left( \X ; \Y_i | \Hset, U^{i-1} \right) - I \left( \X ; \Y_i| \Hset, U^i \right),
\end{align}
where $U^i=\left\{ U_j \right\}_{j=1}^i$ is a set of auxiliary random variables such that $U_1 \rightarrow \cdots \rightarrow U_{\K-1} \rightarrow \X \rightarrow \left( \Y_1, \cdots \Y_\K \right)$ forms a Markov chain and for notational convenience we introduced a trivial random variable $U_0$ and $U_\K=\X$. $\Hset$ is the set of all channels. Furthermore,
\begin{align}
\R_1 \leq & \left( \N_1 - r_1 \right) \log \left( \rho \right) + \littleO(\log (\rho)), \nonumber \\
\R_i \leq & I \left( \X ; \Y_i | \Hset, U^{i-1} \right) - r_i \log \left( \rho \right) + \littleO(\log (\rho)), \quad i \neq 1,\K, \nonumber\\
\R_\K \leq & I \left( \X ; \Y_\K | \Hset, U^{\K-1} \right),
\end{align}
since the degrees of freedom of $I \left( \X ; \Y_1 | \Hset \right)$ is bounded by the single-receiver bound, i.e. $\N_1$, and $r_i$ is defined to be the degrees of freedom of the term $I \left( \X ; \Y_i| \Hset, U^i \right)$, where $ 0 \leq r_i \leq \N_i^{\ast}$. The extension of~\cite[Lemma~1]{Huang_degrees} to the $\K$-receiver case is straight forward, and hence, we can write 
\begin{equation}
I \left( \X ; \Y_{i,1} | \Hset, U^i, \Y_{i,2:\N_i^{\ast}} \right) \leq \frac{r_i}{\N_i^{\ast}} \log \left( \rho \right) + \littleO(\log (\rho)),
\label{eq:DoFPerAntennaCoherent}
\end{equation}
where $\Y_{i,1} \in \mathbb{C}^{1 \times \T}$ is the signal at antenna~1 of receiver~$i$ over the entire $\T$-length coherence time whereas $\Y_{i,2:\N_i^{\ast}} \in \mathbb{C}^{(\N_i^{\ast}-1) \times \T}$ is the matrix consists of the signal at antennas~$2,3,\ldots, \N_i^{\ast}$ of receiver~$i$ over the entire $\T$-length coherence time. Furthermore,
\begin{align}
I \left( \X ; \Y_i | \Hset, U^{i-1} \right) 
\overset{(a)}{=} & I\left(\X ;\Y_{i,1:\N_i^{\ast}}|\Hset,U^{i-1}\right) + I \left(\X ;\Y_{i,\N_i^{\ast}+1:\N_i}| \Hset, U^{i-1}, \Y_{i,1:\N_i^{\ast}} \right) \nonumber\\
\overset{(b)}{=} & I\left(\X ;\Y_{i,1:\N_{i-1}^{\ast}}| \Hset, U^{i-1} \right) \twocolbreak + I \left( \X ; \Y_{i,\N_{i-1}^{\ast}+1:\N_i^{\ast}}|\Hset,U^{i-1}, \Y_{i,1:\N_{i-1}^{\ast}} \right) \nonumber \\ & + \littleO(\log (\rho)) \nonumber \\
\overset{(c)}{=} & I\left(\X ;\Y_{i-1,1:\N_{i-1}^{\ast}}|\Hset,U^{i-1} \right) \twocolbreak
+ I\left(\X ;\Y_{i,\N_{i-1}^{\ast}+1:\N_i^{\ast}}| \Hset, U^{i-1},\Y_{i-1,1:\N_{i-1}^{\ast}} \right)  \nonumber \\ & + \littleO(\log (\rho)) \nonumber \\
= & r_{i-1} \log \left(\rho\right) \twocolbreak + \!\!\!\!\!\!\!\! \sum_{j=\N_{i-1}^{\ast}+1}^{\N_i^{\ast}} I\left(\X ;\Y_{i,j} |\Hset, U^{i-1},\Y_{i-1,1:\N_{i-1}^{\ast}},\Y_{i,j+1:\N_i^{\ast}} \right) +\littleO(\log (\rho)) \nonumber\\
\overset{(d)}{\leq} & r_{i-1}\log \left(\rho \right) \twocolbreak + \left(\N_i^{\ast}-\N_{i-1}^{\ast}\right) I \left(\X ;\Y_{i-1,1}|\Hset,U^{i-1},\Y_{i-1,2:\N_{i-1}^{\ast}}\right) + \littleO(\log (\rho)) \nonumber \\
\overset{(e)}{\leq} & r_{i-1} \log\left(\rho \right) +\left(\N_i^{\ast}-\N_{i-1}^{\ast}\right) \frac{r_{i-1}}{\N_{i-1}^{\ast}} \log \left( \rho \right) + \littleO(\log (\rho)) \nonumber \\
\leq & \frac{\N_i^{\ast}}{\N_{i-1}^{\ast}} r_{i-1} \log \left( \rho \right) + \littleO(\log (\rho)),
\end{align}
where $(a)$ and $(b)$ follow from applying the chain rule, and $ h \left(\Y_{i,\N_i^{\ast}+1:\N_i} | \Hset, U^{i-1}, \Y_{i,\N_i^{\ast}} \right) = \littleO(\log (\rho))$. Furthermore, $(c)$ follows since $\Y_{i,1:\N_{i-1}^{\ast}}$ and $\Y_{i-1,1:\N_{i-1}^{\ast}}$ are statistically the same. $(d)$ follows from applying the straight forward extension of~\cite[Lemma 1]{Huang_degrees} and $(e)$ follows from~\eqref{eq:DoFPerAntennaCoherent}. Therefore,
\begin{align}
\d_1 \leq & \N_1 - r_1, \nonumber \\
\d_i \leq & \frac{\N_i^{\ast}}{\N_{i-1}^{\ast}} r_{i-1} - r_i, \quad i\neq 1,\K, \nonumber \\
\d_\K \leq & \frac{\N_\K^{\ast}}{\N_{\K-1}^{\ast}} r_{\K-1}, 
\end{align}
which gives the region defined in~\eqref{eq:identical_bc}.

\section{Proof of Lemma~\ref{lemma:enhance}}
\label{appendix:enhance}

Consider the set of receivers $\Jset \subseteq \left[ 1:\K \right]$ where the receivers are ordered ascendingly according to the coherence times length, i.e., $\T_j \geq \T_{j-1}, \forall j \in \Jset$. The proof consists of two steps. First, we show that the individual degrees of freedom of each receiver is non-decreasing with the increase of the coherence time of this receiver. Second, we show that the degrees of freedom region of the channel is non-decreasing with the increase of the coherence times of the receivers. For the first step of the proof we introduce the following Lemma.

\begin{lemma}\label{lemma:nondec_T}
For the broadcast channel considered in Section~\ref{section:nonidentical_bc}, define $\Jset= \left\{ i_1, \cdots, i_\J \right\} \subseteq \left[ 1:\K \right]$ with $\T_j \geq \T_{j-1}, \forall j \in \Jset$ and $\Psi_j$ as the message of receiver~$j \in \Jset$. Thus, we have
\begin{equation}\label{eq:nondec_T}
\N_j^{\ast} \left( \frac{1}{\J} - \frac{\N_j^{\ast}}{\T_j}\right) \leq 
\text{MG} \left\{ \frac{1}{\bar{n}} I\left( \Psi_j ; \Y_j^n \right) \right\} 
\leq \N_j^{\ast} \left( 1 - \frac{\N_j^{\ast}}{\T_j}\right), 
\end{equation}
where $\text{MG}(x)$ is the multiplexing gain of a function $x(\rho)$ of $\rho$ and defined as
\begin{equation}
\text{MG}(x) = \lim_{\rho \rightarrow \infty} \sup \frac{x(\rho)}{\log (\rho)}.
\end{equation}
\end{lemma}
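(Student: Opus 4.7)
The plan is to establish the upper and lower inequalities in~\eqref{eq:nondec_T} separately; both are essentially consequences of the single-user non-coherent MIMO capacity result of Zheng and Tse~\cite{Zheng_communication} applied to receiver $j$'s link, together with a TDMA inner bound that shares the coherence interval of receiver $j$ among $\J$ receivers.

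For the upper bound, note that $\Psi_j$ is the message of receiver $j$ and $\Psi_j \to \X^n \to \Y_j^n$ forms a Markov chain, so the data-processing inequality gives $I(\Psi_j;\Y_j^n) \leq I(\X^n;\Y_j^n)$. The right-hand side is at most the capacity of a point-to-point non-coherent MIMO channel with $\M$ transmit antennas, $\N_j$ receive antennas, and block-fading coherence time $\T_j$, whose multiplexing gain equals $\N_j^{\ast}(1-\N_j^{\ast}/\T_j)$ with $\N_j^{\ast}=\min\{\M,\N_j,\lfloor \T_j/2 \rfloor\}$. Normalizing by $\bar n$ and taking $\text{MG}$ on both sides yields the claimed upper bound.

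For the lower bound, I would exhibit a plain TDMA-with-pilots scheme. Within each coherence interval of receiver $j$ (of length $\T_j$, over which receiver $j$'s channel stays constant), allocate a contiguous block of $\T_j/\J$ slots to receiver $j$; of these, devote $\N_j^{\ast}$ slots to pilot transmission for channel acquisition and the remaining $\T_j/\J-\N_j^{\ast}$ slots to $\N_j^{\ast}$-stream data addressed to receiver $j$. Because receiver $j$'s channel is unchanged over the entire $\T_j$ window, the pilots support coherent decoding, producing $\N_j^{\ast}(\T_j/\J-\N_j^{\ast})$ reliably conveyed dimensions per $\T_j$ symbols, which normalizes to the multiplexing gain $\N_j^{\ast}(1/\J-\N_j^{\ast}/\T_j)$. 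In the remaining $(\J-1)\T_j/\J$ slots the transmitter is free to serve the other receivers, and since receiver $j$ can simply discard those observations, nothing transmitted there contaminates the lower bound on $I(\Psi_j;\Y_j^n)$.

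The main obstacle is bookkeeping rather than inequality: $\T_j/\J$ need not be an integer, which I would absorb by averaging the schedule across $\J$ consecutive coherence intervals of receiver $j$, so that each receiver is assigned an integer number of whole $\T_j$-blocks in the aggregate, with only $\littleO(\log\rho)$ boundary loss that disappears under the multiplexing-gain limit. The lower bound is also vacuous whenever $\T_j \leq \J\N_j^{\ast}$, in which case it is non-positive and trivially satisfied by the zero-rate scheme. Combining the two directions gives~\eqref{eq:nondec_T}.
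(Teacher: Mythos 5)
Your proposal is correct and follows essentially the same route as the paper: the upper bound via the data-processing inequality $I(\Psi_j;\Y_j^n)\leq I(\X^n;\Y_j^n)$ combined with the single-user non-coherent result of~\cite{Zheng_communication}, and the lower bound via a $\tfrac{1}{\J}$ time-sharing scheme using single-user-optimal (pilot-assisted) signaling for receiver~$j$. The only difference is that you time-share within each coherence block while the paper concatenates whole blocks per receiver (yielding the marginally tighter intermediate bound $\tfrac{\N_j^{\ast}}{\J}\bigl(1-\tfrac{\N_j^{\ast}}{\T_j}\bigr)$), but both establish exactly the claimed inequality.
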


\begin{proof}
We first prove the right inequality of~\eqref{eq:nondec_T}. We have 
\begin{align}
\text{MG} \left\{ \frac{1}{\bar{n}} I\left( \Psi_j ; \Y_j^n \right) \right\} & \overset{(a)}{\leq} \text{MG} \left\{ \frac{1}{\bar{n}} I\left( \X^n ; \Y_j^n \right) \right\} \nonumber \\
                            & \overset{(b)}{\leq} \N_j^{\ast} \left( 1 - \frac{\N_j^{\ast}}{\T_j}\right),
\end{align}
where $(a)$ follows from the data processing inequality and $(b)$ follows from the single-receiver results~\cite{Zheng_communication}. Next, we show the left inequality of~\eqref{eq:nondec_T}. Assume that we have the following transmitted sequence 
\begin{equation}
\overline{\X}^n = \left[ \U_{i_1}, \cdots, \U_{i_\J} \right],
\end{equation}
where $\U_j \in \mathbb{C}^{\N_j \times \frac{\bar{n}}{\J}}$ is the matrix containing the signal of receiver~$j \in \Jset$ and the matrix is constructed to be on the form of the optimal input of a non-coherent single receiver~\cite{Zheng_communication}. Hence,
\begin{align}
\text{MG} \left\{ \frac{1}{\bar{n}} I\left( \Psi_j ; \Y_j^n \right) \right\} & \geq \text{MG} \left\{ \frac{1}{\bar{n}} I\left( U_j ; \Y_j^n \right) \right\} \twocolbreak
\geq \frac{\N_j^{\ast}}{\J} \left( 1 - \frac{\N_j^{\ast}}{\T_j}\right) \twocolbreak
\geq       \N_j^{\ast} \left( \frac{1}{\J} -  \frac{\N_j^{\ast}}{\T_j}\right).
\end{align}
Thus, the proof of Lemma~\ref{lemma:nondec_T} is completed.
\end{proof}

By Lemma~\ref{lemma:nondec_T}, we have lower and outer bounds which are increasing with $\T_j$. Furthermore, the gap between the two bounds is 
\begin{equation}
\Delta = \N_j^{\ast} \left( 1 - \frac{\N_j^{\ast}}{\T_j}\right) 
- \N_j^{\ast} \left( \frac{1}{\J} - \frac{\N_j^{\ast}}{\T_j}\right) = \N_j^{\ast} \frac{\J - 1}{\J}.
\end{equation}
Therefore, $\text{MG} \left\{ \frac{1}{\bar{n}} I\left( \Psi_j ; \Y_j^n \right) \right\}$ is non-decreasing with the increase of $\T_j$, which completes the first step of the proof of Lemma~\ref{lemma:enhance}. 

Now, we give the second part of the proof via a contradiction argument. Define $\mathcal{D}$ to be the degrees of freedom region of a set of receivers with unequal coherence times where $\max_j \T_j = \T_{\max}$, and $\Y_j^n$ denotes the signal at receiver~$j$. Define $\mathcal{\overline{D}}$ to be the degrees of freedom region of the receivers where the coherence times of all receivers is $\T_{\max}$,  where $\bar{\Y}_j^n$ denotes the signal at receiver~$j$ of this enhanced channel. Define $\tilde{D} \in \mathcal{D}$ to be a degrees of freedom tuple, and $d_j \in \tilde{D}$ is the degrees of freedom of receiver~$j$. Assume that $\tilde{D} \notin \mathcal{\overline{D}}$. By Fano's inequality, 
\begin{align}
d_j = & \text{MG} \left\{ \frac{1}{\bar{n}} I \left(\Psi_j ; \Y_j^n \right) \right\}, \nonumber \\
    \leq & \text{MG} \left\{ \frac{1}{\bar{n}} I \left(\Psi_j ; \bar{\Y}_j^n \right) \right\},
\end{align}
where $\Psi_j$ is the message of receiver~$j \in \Jset$, and the last inequality follows from Lemma~\ref{lemma:nondec_T}. $\text{MG} \left\{ \frac{1}{\bar{n}} I \left(\Psi_j ; \bar{\Y}_j^n \right) \right\} \in \mathcal{\overline{D}}$, therefore, $d_j \in \mathcal{\overline{D}}, \forall j$, which contradicts the initial assumption completing the second part of the proof.

\section{Proof of Lemma~\ref{lemma:enhance_mac}}
\label{appendix:enhance_mac}

Consider the set of transmitters $\Jset= \left\{ i_1, \cdots, i_\J \right\} \subseteq \left[ 1:\K \right]$ where $\forall j \in \Jset, \frac{\T_j}{\T_{j-1}}\in {\mathbb Z}$. By Fano's inequality, as $\bar{n} \rightarrow \infty$,
\begin{equation}
\sum_{j \in \Jset} R_j \leq \frac{1}{\bar{n}} I \left(\X_{i_1}^n, \cdots, \X_{i_J}^n ; \Y^n \right),
\end{equation}
where $\X_j^n$ is transmitter~$j \in \Jset$ signal and $\Y^n$ is the received signal over the entire transmission time of length $\bar{n}$ slots. In the sequel, we show that the degrees of freedom of $\frac{1}{\bar{n}} I \left(\X_j ; \Y^n \right)$ is non-decreasing in $\T_j$. We give lower and upper bounds on this term, and furthermore, both bounds are non-decreasing in $\T_j$.

\begin{lemma}\label{lemma:nondec_T_mac}
Consider the multiple access channel in Section~\ref{section:nonidentical_mac}. Define $\Jset= \left\{ i_1, \cdots, i_\J \right\} \subseteq \left[ 1:\K \right]$ and $\Psi_j$ as the message of transmitter~$j \in \Jset$. Thus, we have 
\begin{align}\label{eq:nondec_T_mac}
\sum_{j \in \Jset} \M_j^{\ast} \left( \frac{1}{\J} - \frac{\M_j^{\ast}}{\T_j}\right) 
& \leq \text{MG} \left\{ \frac{1}{\bar{n}} I\left( \X_{i_1}^n, \cdots, \X_{i_J}^n ; \Y^n \right) \right\} \twocolbreak 
  \leq \sum_{j \in \Jset} \M_j^{\ast} \left( 1 - \frac{\M_j^{\ast}}{\T_j}\right), 
\end{align}
where $\M_j^{\ast} = \min\left\{ \M_j, \N\right\}$.
\end{lemma}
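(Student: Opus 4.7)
The plan is to establish \eqref{eq:nondec_T_mac} by two independent arguments that mirror the proof of Lemma~\ref{lemma:nondec_T}, one for the upper inequality and one for the lower.

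For the upper inequality, I would use a noise-splitting (genie) construction that converts the MAC into $\J$ parallel non-coherent single-user MIMO channels. Let $\sigma^2$ be the receiver noise variance and introduce mutually independent noises $\z_j' \sim \mathcal{CN}(0, \sigma_j^2 \mathbf{I})$ for $j \in \Jset$, with $\sum_{j \in \Jset} \sigma_j^2 = \sigma^2$, so that $\sum_j \z_j'$ has the same law as $\z$. Define virtual outputs $\tilde{\y}_j = \overline{\H}_j \x_j + \z_j'$. Then $\sum_j \tilde{\y}_j$ is a measurable function of $(\tilde{\y}_{i_1},\ldots,\tilde{\y}_{i_\J})$ and has the same joint distribution with $(\x_{i_1},\ldots,\x_{i_\J})$ as $\y$ does, so by the data processing inequality
\begin{equation*}
I(\x_{i_1}^n,\ldots,\x_{i_\J}^n;\y^n) \;\leq\; I(\x_{i_1}^n,\ldots,\x_{i_\J}^n;\tilde{\y}_{i_1}^n,\ldots,\tilde{\y}_{i_\J}^n).
\end{equation*}
Because the messages are independent across transmitters and the channels $\overline{\H}_j$ and the added noises $\z_j'$ are mutually independent, the joint law factorizes across $j$, giving
\begin{equation*}
I(\x_{i_1}^n,\ldots,\x_{i_\J}^n;\tilde{\y}_{i_1}^n,\ldots,\tilde{\y}_{i_\J}^n) \;=\; \sum_{j \in \Jset} I(\x_j^n;\tilde{\y}_j^n).
\end{equation*}
Each summand is at most the single-user non-coherent MIMO mutual information with $\M_j$ transmit and $\N$ receive antennas and coherence time $\T_j$, whose multiplexing gain is $\M_j^{\ast}(1 - \M_j^{\ast}/\T_j)$ by \cite{Zheng_communication}. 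Applying the $\text{MG}$ operator, using its subadditivity on the right, and summing produces the right inequality in \eqref{eq:nondec_T_mac}.

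For the lower inequality, I would exhibit a TDMA input: partition the horizon $\bar n$ into $\J$ equal portions and let transmitter $j = i_m$ be the sole active user in its portion, employing the Zheng--Tse pilot-plus-data input tailored to its coherence time $\T_j$, while the remaining transmitters transmit zero. During that portion the channel reduces to a single-user non-coherent MIMO channel whose achievable per-symbol MG is $\M_j^{\ast}(1 - \M_j^{\ast}/\T_j)$. Averaged over the full horizon, transmitter $j$ therefore contributes $\tfrac{1}{\J}\M_j^{\ast}(1 - \M_j^{\ast}/\T_j) = \M_j^{\ast}\!\left(\tfrac{1}{\J} - \tfrac{\M_j^{\ast}}{\J\T_j}\right) \geq \M_j^{\ast}\!\left(\tfrac{1}{\J} - \tfrac{\M_j^{\ast}}{\T_j}\right)$, where the inequality uses $\J \geq 1$. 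Summing over $j \in \Jset$ gives the left inequality.

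The main obstacle is justifying the noise-splitting step: I need $\y$ and $\sum_j \tilde{\y}_j$ to coincide in joint law with the inputs so that the data processing inequality is applied to the same underlying random object. Choosing the $\z_j'$ mutually independent with variances summing to $\sigma^2$ accomplishes exactly this while keeping each virtual marginal $\tilde{\y}_j$ a bona fide non-coherent MIMO output with finite noise, so the Zheng--Tse DoF formula applies verbatim. The remaining steps are routine bookkeeping of MG inequalities.
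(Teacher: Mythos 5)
Your proof is correct, and its overall strategy coincides with the paper's: sandwich the multiplexing gain between single-user Zheng--Tse bounds, using a TDMA input with the optimal non-coherent single-user signalling for the lower inequality. The lower-bound argument is essentially identical to the paper's (the paper transmits $\overline{\X}_j^n = [0,\cdots,0,\U_j,0,\cdots,0]$ and performs the same $\frac{1}{\J}$-averaging and final relaxation $\frac{\M_j^{\ast}}{\J\T_j}\leq\frac{\M_j^{\ast}}{\T_j}$). Where you differ is the upper inequality: the paper simply asserts that it ``follows from the single-transmitter results'' of \cite{Zheng_communication}, whereas you supply an explicit justification via noise splitting --- writing $\z=\sum_j\z_j'$ with independent components, invoking the data processing inequality on $\y=\sum_j\tilde{\y}_j$, and using independence of the transmitters' codewords, channels, and split noises to factorize $I(\x_{\Jset}^n;\tilde{\y}_{\Jset}^n)$ into a sum of single-user terms. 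This is a genuine strengthening of the exposition: the per-user single-letter bound is not immediate from the cooperative MAC bound when the $\T_j$ differ, and your genie construction is the clean way to reduce to $\J$ parallel non-coherent point-to-point channels (the reduced noise variance $\sigma_j^2<\sigma^2$ in each virtual channel only shifts the SNR by a constant and does not affect the pre-log, as you note). The only caveats, shared with the paper, are the implicit alignment of the TDMA portions with coherence-block boundaries (a vanishing edge effect as $\bar{n}\to\infty$) and the mild abuse in applying the $\text{MG}$ operator termwise via its subadditivity as a limsup; neither is a substantive gap.
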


\begin{proof}
We first prove the right inequality of~\eqref{eq:nondec_T_mac}. We have 
\begin{align}
\text{MG} \left\{ \frac{1}{\bar{n}} I\left( \X_{i_1}^n, \cdots, \X_{i_J}^n ; \Y^n \right) \right\} 
            & \leq \sum_{j \in \Jset} \M_j^{\ast} \left( 1 - \frac{\M_j^{\ast}}{\T_j}\right),
\end{align}
where the above inequality follows from the single-transmitter results~\cite{Zheng_communication}. Next, we show the left inequality of~\eqref{eq:nondec_T_mac}. Assume that we have the following transmitted sequence 
\begin{equation}
\overline{\X}_j^n = \left[ 0, \cdots, 0, \U_j, 0, \cdots, 0 \right],
\end{equation}
where $\U_j \in \mathbb{C}^{\M_j \times \frac{\bar{n}}{\J}}$ is the matrix containing the signal of transmitter~$j \in \Jset$ and the matrix is constructed to be on the form of the optimal input of a non-coherent single transmitter~\cite{Zheng_communication}. Hence,
\begin{align}
\text{MG} \left\{ \frac{1}{\bar{n}} I\left( \X_{i_1}^n, \cdots, \X_{i_J}^n ; \Y^n \right) \right\} 
& \geq \text{MG} \left\{ \frac{1}{\bar{n}} \sum_{j \in \Jset} I\left( U_j ; \Y_j^n \right) \right\} \nonumber \\ 
& \geq \sum_{j \in \Jset} \frac{\M_j^{\ast}}{\J} \left( 1 - \frac{\M_j^{\ast}}{\T_j}\right) \nonumber \\
& \geq       \sum_{j \in \Jset} \M_j^{\ast} \left( \frac{1}{\J} -  \frac{\M_j^{\ast}}{\T_j}\right).
\end{align}
Thus, the proof of Lemma~\ref{lemma:nondec_T_mac} is completed.
\end{proof}

By Lemma~\ref{lemma:nondec_T_mac}, we have lower and outer bounds which are increasing with $\T_j$. Furthermore, the difference between the two bounds is 
\begin{equation}
\Delta = \sum_{j \in \Jset} \M_j^{\ast} \left( 1 - \frac{\M_j^{\ast}}{\T_j}\right) 
- \M_j^{\ast} \left( \frac{1}{\J} - \frac{\M_j^{\ast}}{\T_j}\right) = \M_j^{\ast} \frac{\J - 1}{\J}.
\end{equation}
Therefore, $\text{MG} \left\{ \frac{1}{\bar{n}} I\left( \X_{i_1}^n, \cdots, \X_{i_J}^n ; \Y^n \right) \right\}$ is non-decreasing with the increase of $\T_j$, and hence, the proof of Lemma~\ref{lemma:enhance_mac} is completed.

\bibliographystyle{IEEEtran}
\bibliography{IEEEabrv,J_IT2015}

\end{document}